\colorlet{shadecolor}{gray!12}
\newtheorem{theorem}{Theorem}[section]
\newtheorem{lemma}[theorem]{Lemma}
\newtheorem{observation}[theorem]{Observation}
\newtheorem{propostion}[theorem]{Proposition}
\newtheorem{definition}[theorem]{Definition}
\newenvironment{proof}{\noindent{\bf Proof~}}{\null\hfill $\Box$\par\medskip}
\newcommand{\bigo}{\text{$\mathcal O$}}
\newcommand{\un} {{\it un}}
\newcommand{\IR}{\mathbb{R}}
\newcommand{\forb} {\text{Forb}}
\newcommand{\free} {\text{Free}}
\newcommand{\DC} {\text{DC}}
\newcommand{\OC} {\text{OC}}
\newcommand{\DTP} {\text{DTP}}
\newcommand{\DCTP} {\text{DCTP}}
\newcommand{\OCTP} {\text{OCTP}}
\newcommand{\OTP} {\text{OTP}}
\newcommand{\DT} {\text{DT}}
\newcommand{\OT} {\text{OT}}
\newcommand{\FD} {\text{FD}}
\newcommand{\TD} {\text{TD}}
\newcommand{\C} {\text{C}}
\newcommand{\T} {\text{T}}
\newcommand{\TP} {\text{TP}}
\newcommand{\CTP} {\text{CTP}}
\newcommand{\SC} {\text{SC}}
\newcommand{\CSC} {\text{CSC}}
\newcommand{\WQT} {\text{WQT}}
\newcommand{\CWQT} {\text{CWQT}}
\newcommand{\DSC} {\text{DSC}}
\newcommand{\OSC} {\text{OSC}}
\newcommand{\DCSC} {\text{DCSC}}
\newcommand{\OCSC} {\text{OCSC}}
\newcommand{\TT} {\text{TT}}
\newcommand{\SPO} {\text{SPO}}
\newcommand{\DWQT} {\text{DWQT}}
\newcommand{\OWQT} {\text{OWQT}}
\newcommand{\DCWQT} {\text{DCWQT}}
\newcommand{\OCWQT} {\text{OCWQT}}
\newcommand{\co} {\text{co-}}
\newcommand{\ahead}{triangle 45}
\newcommand{\bnode}{node[draw, circle, fill = black, scale = 0.7, outer sep = 1pt]}
\begin{document}

\title{On Characterizations for Subclasses of Directed Co-Graphs\thanks{An extended abstract of this paper appeared in the Proceedings of the 13th International Conference on Combinatorial Optimization and Applications (COCOA 2019) \cite{GKR19h}.}}

\author{Frank Gurski}
\author{Dominique Komander}
\author{Carolin Rehs}

\affil{\small University of  D\"usseldorf,
Institute of Computer Science, Algorithmics for Hard Problems Group,\newline 
40225 D\"usseldorf, Germany}

\maketitle


\begin{abstract}
Undirected co-graphs are those graphs which can be generated from
the single vertex graph by disjoint union and join operations.
Co-graphs are exactly the $P_4$-free graphs (where $P_4$ denotes
the path on $4$ vertices). Co-graphs itself and
several subclasses  haven been intensively studied. Among these are
trivially perfect graphs, threshold graphs, weakly quasi threshold
graphs, and simple co-graphs.

Directed co-graphs are precisely those digraphs which can be defined  from
the single vertex graph by applying
the disjoint union, order composition,  and series composition. By omitting
the series composition we obtain the subclass of oriented co-graphs which
has been analyzed by Lawler in the 1970s and the restriction to linear expressions
was recently studied by Boeckner.
There are only a few versions of subclasses of directed co-graphs until now.
By transmitting the restrictions of undirected subclasses to the directed classes, we define the corresponding subclasses for directed co-graphs.
We consider directed and oriented versions of threshold graphs, simple co-graphs, co-simple
co-graphs, trivially perfect graphs, co-trivially perfect graphs, weakly quasi threshold graphs and
co-weakly quasi threshold graphs. For all these classes we
provide characterizations by finite sets of minimal forbidden induced subdigraphs.
Further we analyze relations between these graph classes.


\bigskip
\noindent
{\bf Keywords:} 
co-graphs; directed co-graphs; directed threshold graphs; digraphs; threshold graphs; forbidden induced subdigraph characterizations
\end{abstract}


\section{Introduction}


During the last years classes of directed graphs
have received a lot of attention \cite{BG18}, since they are useful in multiple applications of directed networks.
Meanwhile the class of directed co-graphs is used in applications in the field of genetics, see \cite{NEMWH18}.
But the field of directed co-graphs is far from been as well studied as the undirected version, even though it has a similar useful structure. There are multiple subclasses of undirected co-graphs which were characterized successfully by different definitions. Meanwhile there are also corresponding subclasses of directed co-graphs  as e.g. the class of oriented co-graphs,
which has been analyzed by Lawler \cite{Law76} and Boeckner \cite{Boe18}. But there are many more interesting subclasses of directed co-graphs, that were mostly not characterized until now. Thus we consider directed versions of threshold graphs, simple co-graphs, trivially perfect graphs and weakly quasi threshold graphs. Furthermore, we take a look at the oriented versions of these classes and the related complement classes. All of these classes are hereditary, just like directed co-graphs, such that they can be characterized by a set of forbidden induced subdigraphs. We will even show a finite number of forbidden induced subdigraphs for the further introduced classes. This is for example very useful in the case of finding a polynomial recognition algorithm for these classes.

Undirected co-graphs, i.e.\ complement reducible graphs, appeared independently
by several authors, see \cite{Ler71,Sum74} for example, while directed co-graphs were
introduced 30 years later by Bechet et al.\ \cite{BGR97}. Due to their recursive structure
there are problems, that are hard in general, which can be solved efficiently on (directed) co-graphs,
see \cite{BM93,CLS81,CPS84,LOP95,GHKRRW20a} and \cite{BM14,Gur17a,GKR19f,GKR19d,GR18c,GHKRRW20,GKL20}.
That makes both graph classes particularly interesting. This paper can be seen a basis for further research on possible algorithms that are efficient for the structures of the presented graph classes.

This paper is organized as follows.
After introducing some basic definitions we introduce undirected co-graphs in Section  \ref{undco} and subclasses and recapitulate their relations and their characterizations by sets of forbidden subgraphs.
In Section \ref{dirco} we introduce directed and oriented co-graphs and summarize their properties. Subsequently, we focus on subclasses of directed co-graphs. We show definitions of series-parallel partial order digraphs, directed trivially perfect graphs, directed weakly quasi threshold graphs, directed simple co-graphs, directed threshold graphs and the corresponding complementary and oriented versions of these classes. Some of the subclasses already exist, others are motivated by the related subclasses of undirected co-graphs given in Table \ref{t-uc}. All of these subclasses have in common that they can be constructed recursively by several operations. Analogously to the undirected classes, we show how these multiple subclasses can be characterized by finite sets of minimal forbidden induced subdigraphs. We continue with an analysis of the relations of the several classes. Moreover, we analyze how they are related to the corresponding undirected classes.
Finally in Section \ref{sec-con}, we give conclusions including further research
directions.

\section{Preliminaries}\label{intro}

\subsection{Notations for Undirected Graphs}

We work with finite undirected {\em graphs} $G=(V,E)$,
where $V$ is a finite set of {\em vertices}
and $E \subseteq \{ \{u,v\} \mid u,v \in
V,~u \not= v\}$ is a finite set of {\em edges}.
A graph $G'=(V',E')$ is a {\em subgraph} of graph $G=(V,E)$ if $V'\subseteq V$
and $E'\subseteq E$.  If every edge of $E$ with both end vertices in $V'$  is in
$E'$, we say that $G'$ is an {\em induced subgraph} of digraph $G$ and
we write $G'=G[V']$.

For some  graph $G=(V,E)$ its complement digraph is defined by
$$\text{co-}G=(V,\{\{u,v\}~|~\{u,v\}\not\in E, u,v\in V, u\neq v\}).$$

For some graph class $X$ we define by $\co X=\{\co G ~|~ G\in X\}$.

For some graph $G$ some integer $d$ let $d G$ be the disjoint union
of $k$ copies of $G$.

\paragraph{Special Undirected Graphs}
As usual we denote by $$K_n=(\{v_1,\ldots,v_n\},\{\{v_i,v_j\}~|~1\leq i<j\leq n\}),$$
$n \ge 1$ a complete graph on $n$ vertices and by $I_n$ an edgeless graph on
$n$ vertices, i.e. the complement graph of a complete graph on $n$ vertices.
By $$P_n=(\{v_1,\ldots,v_n\},\{\{v_1,v_2\},\ldots, \{v_{n-1},v_n\}\})$$
we denote a path on $n$ vertices. See Table \ref{gr} for examples.

\subsection{Notations for Directed Graphs}

A {\em directed graph} or {\em digraph} is a pair  $G=(V,E)$, where $V$ is
a finite set of {\em vertices} and  $E\subseteq \{(u,v) \mid u,v \in
V,~u \not= v\}$ is a finite set of ordered pairs of distinct\footnote{Thus, we do not consider
directed graphs with loops or multiple edges.} vertices called {\em arcs}.
A vertex  $v\in V$ is  {\em out-dominating (in-dominated)} if
it is adjacent to every other vertex in $V$ and is a source (a sink, respectively).
A digraph $G'=(V',E')$ is a {\em subdigraph} of digraph $G=(V,E)$ if $V'\subseteq V$
and $E'\subseteq E$.  If every arc of $E$ with both end vertices in $V'$  is in
$E'$, we say that $G'$ is an {\em induced subdigraph} of digraph $G$ and we
write $G'=G[V']$.

For a directed graph $G=(V,E)$ its complement digraph is defined by
$$\text{co-}G=(V,\{(u,v)~|~(u,v)\not\in E, u,v\in V, u\neq v\})$$
and its {\em converse digraph} is defined by
$$G^c=(V,\{(u,v)~|~(v,u)\in E, u,v\in V, u\neq v\}).$$
For a digraph class $X$ we define by $\co X=\{\co G ~|~ G\in X\}$.
For a digraph $G$ and an integer $d$ let $d G$ be the disjoint union
of $k$ copies of $G$.

\paragraph{Orientations}
There are several ways to define a digraph $D=(V,A)$ from an undirected
graph $G=(V,E)$, see \cite{BG09}.
If we replace every edge $\{u,v\}$ of $G$ by
\begin{itemize}
\item
one of the arcs $(u,v)$ and $(v,u)$, we denote $D$ as an {\em orientation} of $G$.
Every digraph $D$  that can be obtained by an orientation of an undirected
graph $G$ is called an {\em oriented graph}.

\item
one or both of the arcs $(u,v)$ and $(v,u)$, we denote $D$ as a {\em biorientation} of $G$.
A digraph $D$ that we get by a biorientation of an undirected
graph $G$ is called a {\em bioriented graph}.

\item
both arcs $(u,v)$ and $(v,u)$, we denote $D$ as a {\em complete biorientation} of $G$.
Since in this case $D$ is well defined by $G$ we also denote
it by $\overleftrightarrow{G}$.
A digraph $D$  that we obtain by a complete biorientation of an undirected
graph $G$ is called a {\em complete bioriented graph}.
\end{itemize}

For a given digraph $D=(V,A)$ we define
its underlying undirected graph by ignoring the directions of the edges and deleting multiple edges, i.e.
$$\un(D)=(V,\{\{u,v\}~|~(u,v)\in A, u,v\in V\}).$$
and for some class of digraphs $X$, let
$$\un(X)=\{\un(D)~|~ D\in X\}.$$

\paragraph{Special Directed Graphs}
As usual we denote by $$\overleftrightarrow{K_n}=(\{v_1,\ldots,v_n\},\{ (v_i,v_j)~|~1\leq i\neq j\leq n\}),$$
$n \ge 1$ a bidirectional complete digraph
on $n$ vertices and by $\overleftrightarrow{I_n}$ an edgeless graph with
$n$ vertices, i.e. the complement graph of a complete directed graph on $n$ vertices.
By $$\overleftrightarrow{K_{n,m}}=(\{v_1,\ldots,v_n,w_1,\ldots,w_m\},\{ (v_i,w_j),(w_j,v_i)~|~1\leq i\leq n,1\leq  j\leq m\}),$$
$n,m \ge 1$ a bidirectional complete bipartite digraph
with $n+m$ vertices.
A \emph{directed clique} is a bidirectional complete digraph, such that $K=(V,E)$ with 
$E=\{(u,v) \mid \forall u,v \in V,  u \neq v \}$ holds.

\paragraph{Special Oriented Graphs}
By $$\overrightarrow{P_n}=(\{v_1,\ldots,v_n\},\{ (v_1,v_2),\ldots, (v_{n-1},v_n)\})$$
we denote the oriented path on $n$ vertices. By
$$\overrightarrow{C_n}=(\{v_1,\ldots,v_n\},\{ (v_1,v_2),\ldots, (v_{n-1},v_n),(v_n,v_1)\})$$
we denote the oriented cycle on $n$ vertices. By $T_n$ we denote a transitive tournament
on $n$ vertices.\footnote{Please note that transitive tournaments on $n$ vertices
are unique up to isomorphism,
see \cite[Chapter 9]{Gou12}, \cite{GRR18} and  Lemma \ref{le-tt}.}
By $$\overrightarrow{K_{n,m}}=(\{v_1,\ldots,v_n,w_1,\ldots,w_m\},\{ (v_i,w_j)~|~1\leq i\leq n,1\leq  j\leq m\}),$$
$n,m \ge 1$ we denote an oriented complete bipartite digraph
on $n+m$ vertices.

\subsection{Induced Subgraph Characterizations for Hereditary Classes}\label{forb}

The following notations and results are given in \cite[Chapter 2]{KL15} for undirected
graphs. These results also hold  for directed graphs.

Classes of (di)graphs which are closed  under taking induced sub(di)graphs
are called  {\em hereditary}. For some (di)graph class $F$ we define
$\free(F)$ as the set of all (di)graphs $G$ such that no induced sub(di)graph of $G$ is isomorphic
to a member of  $F$.

\begin{theorem}[\cite{KL15}]
A class of (di)graphs $X$ is hereditary if and only if there is a set
$F$, such that $\free(F)=X$.
\end{theorem}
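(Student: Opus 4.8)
The plan is to prove the two implications of the biconditional separately, using the defining closure property in one direction and the transitivity of the induced-sub(di)graph relation in the other.

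First I would treat the easy direction: if $X=\free(F)$ for some set $F$, then $X$ is hereditary. Take any $G\in X$ and any induced sub(di)graph $H=G[V']$; the goal is $H\in X=\free(F)$. Suppose not, so some induced sub(di)graph $H'$ of $H$ is isomorphic to a member of $F$. The key observation is that the induced-sub(di)graph relation is transitive: for $V''\subseteq V'\subseteq V$ the (di)graph $H'=H[V'']$ has exactly the edges/arcs of $G$ with both endpoints in $V''$, which is precisely $G[V'']$. Hence $H'$ is an induced sub(di)graph of $G$ isomorphic to a member of $F$, contradicting $G\in\free(F)$. Thus $H\in X$, and $X$ is closed under induced sub(di)graphs.

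For the converse, suppose $X$ is hereditary and take as forbidden set $F=\overline{X}$, the class of all (di)graphs not belonging to $X$ (with one representative per isomorphism type, so that $F$ is genuinely a set). I would then verify $\free(\overline{X})=X$ by two inclusions. For $X\subseteq\free(\overline{X})$: if some $G\in X$ had an induced sub(di)graph isomorphic to a member $M\in\overline{X}$, then heredity would give $M\in X$, contradicting $M\in\overline{X}$ (here one uses that $X$ is closed under isomorphism). For $\free(\overline{X})\subseteq X$: every (di)graph is an induced sub(di)graph of itself, taking $V'=V$, so if $G\in\free(\overline{X})$ then $G$ itself is not isomorphic to any member of $\overline{X}$, which forces $G\in X$.

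The argument is essentially bookkeeping, so I do not expect a genuine obstacle; the only points needing care are the transitivity of the induced-sub(di)graph relation in the first direction and the conventions that a (di)graph is an induced sub(di)graph of itself and that $X$ is closed under isomorphism in the second. One could alternatively let $F$ consist only of the minimal members of $\overline{X}$ with respect to the induced-sub(di)graph order, obtaining the same class $\free(F)=X$; this is in fact the form used for the finite forbidden-subdigraph characterizations developed in the sequel.
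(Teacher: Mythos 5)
Your proof is correct and is the standard argument for this folklore fact; the paper itself gives no proof, merely citing \cite{KL15}, and your two directions (transitivity of the induced-sub(di)graph relation for sufficiency, and taking $F$ to be the complement class $\overline{X}$ --- or its minimal elements --- for necessity) are exactly what that reference does.
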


A (di)graph $G$ is a {\em minimal forbidden induced sub(di)graph}
for some hereditary class $X$ if $G$ does not belong to $X$ and every
proper induced sub(di)graph of $G$ belongs to $X$.
For some hereditary (di)graph class $X$ we define
$\forb(X)$ as the set of all minimal forbidden induced sub(di)graphs
for $X$.

\begin{theorem}[\cite{KL15}]\label{th-free-forb}
For every hereditary  class of (di)graphs $X$ it holds that
$X=\free(\forb(X))$.
Set $\forb(X)$ is unique and of minimal size.
\end{theorem}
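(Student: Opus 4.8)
The plan is to prove the identity $X=\free(\forb(X))$ by establishing the two inclusions separately, and then to derive minimality and uniqueness of $\forb(X)$ from a single lemma: every member of $\forb(X)$ must occur, up to isomorphism, in any set $F$ with $\free(F)=X$. Throughout I treat (di)graphs up to isomorphism, as both $\free(\cdot)$ and $\forb(\cdot)$ are defined via isomorphic copies, and I use that $X$ is closed under isomorphism.

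For the inclusion $X\subseteq\free(\forb(X))$ I would argue by contradiction. Suppose $G\in X$ but $G\notin\free(\forb(X))$; then some induced sub(di)graph $H$ of $G$ is isomorphic to a member $M\in\forb(X)$. Since $X$ is hereditary and $G\in X$, we get $H\in X$, hence $M\in X$. This contradicts $M\in\forb(X)$, which by definition requires $M\notin X$. This is the only place heredity of $X$ is needed.

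For the reverse inclusion $\free(\forb(X))\subseteq X$, I would take $G\notin X$ and manufacture a minimal forbidden induced sub(di)graph inside $G$. The collection of induced sub(di)graphs of $G$ lying outside $X$ is nonempty, since it contains $G$ itself, and because $G$ is finite this collection has a member $M$ of minimum order. By transitivity of the induced-subgraph relation and minimality of $M$, every proper induced sub(di)graph of $M$ is an induced sub(di)graph of $G$ of smaller order and therefore lies in $X$; thus $M\in\forb(X)$. As $M$ is an induced sub(di)graph of $G$, this yields $G\notin\free(\forb(X))$, and the contrapositive gives the inclusion. The only ingredient here is finiteness, which supplies the minimum-order witness.

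For minimality and uniqueness, the crux is the following \emph{indispensability lemma}: if $\free(F)=X$ then $\forb(X)\subseteq F$ up to isomorphism. To prove it, take $M\in\forb(X)$; then $M\notin X=\free(F)$, so $M$ has an induced sub(di)graph $H$ isomorphic to some $M'\in F$. Since $M'\in F$ we have $M'\notin\free(F)=X$, whence $H\cong M'\notin X$. But every proper induced sub(di)graph of $M$ lies in $X$ by the definition of $\forb(X)$, so $H$ cannot be proper; therefore $H=M$ and $M\cong M'$, placing the isomorphism type of $M$ in $F$. Consequently every generating set $F$ contains $\forb(X)$, so $\forb(X)$ attains the minimum cardinality among all such $F$, and any $F$ of that minimum cardinality must coincide with $\forb(X)$ (up to isomorphism), giving uniqueness. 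I expect this lemma to be the main obstacle: the delicate point is ruling out a proper induced copy of $M'$ inside $M$, which is exactly where the minimality of $M$ in $\forb(X)$ is invoked, whereas every other step is a routine combination of heredity, closure under isomorphism, and finiteness.
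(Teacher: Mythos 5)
The paper itself gives no proof of this statement; it is imported verbatim from \cite{KL15}, so there is nothing to compare your argument against line by line. Your proof is correct and is the standard one: heredity gives $X\subseteq\free(\forb(X))$, a minimum-order witness outside $X$ gives the converse, and your indispensability lemma (every $F$ with $\free(F)=X$ contains $\forb(X)$ up to isomorphism) yields minimality and uniqueness. One small polish: the final step ``any $F$ of minimum cardinality coincides with $\forb(X)$'' silently assumes finiteness of $\forb(X)$; the cleaner conclusion, which your lemma already delivers without that assumption, is that $\forb(X)$ is the unique inclusion-minimal generating set, from which minimal size follows.
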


\begin{theorem}[\cite{KL15}]\label{th-free-sub}
$\free(F_1)\subseteq \free(F_2)$ if and only if for every (di)graph $G\in F_2$
there is a (di)graph $H\in F_1$ such that $H$ is an induced sub(di)graph of $G$.
\end{theorem}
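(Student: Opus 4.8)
The plan is to prove both directions of the biconditional in Theorem~\ref{th-free-sub}, characterizing set inclusion $\free(F_1)\subseteq\free(F_2)$ in terms of the induced-subgraph relation between the forbidden sets. Throughout I will use the definition that $G\in\free(F)$ exactly when no induced sub(di)graph of $G$ is isomorphic to a member of $F$; equivalently, $G\notin\free(F)$ iff $G$ contains some member of $F$ as an induced sub(di)graph.

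First I would prove the contrapositive of the ``if'' direction, which I expect to be the cleaner half. Suppose the right-hand condition holds: for every $G\in F_2$ there is an $H\in F_1$ with $H$ an induced sub(di)graph of $G$. I want to show $\free(F_1)\subseteq\free(F_2)$, so take any (di)graph $G'\in\free(F_1)$ and argue $G'\in\free(F_2)$. If not, then $G'$ contains some $G\in F_2$ as an induced sub(di)graph. By hypothesis there is an $H\in F_1$ that is an induced sub(di)graph of $G$, and since the induced-subgraph relation is transitive, $H$ is an induced sub(di)graph of $G'$. But then $G'$ contains a member of $F_1$ as an induced sub(di)graph, contradicting $G'\in\free(F_1)$. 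Hence $G'\in\free(F_2)$, giving the inclusion.

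For the ``only if'' direction I would again argue by contraposition. Assume the right-hand condition fails, so there is some $G\in F_2$ such that no $H\in F_1$ is an induced sub(di)graph of $G$. The key observation is then that $G$ itself witnesses the failure of the inclusion: since no member of $F_1$ embeds into $G$ as an induced sub(di)graph, we have $G\in\free(F_1)$. On the other hand, $G$ is (trivially) an induced sub(di)graph of itself and $G\in F_2$, so $G\notin\free(F_2)$. Thus $G\in\free(F_1)\setminus\free(F_2)$, which shows $\free(F_1)\not\subseteq\free(F_2)$, completing the contrapositive.

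I do not expect a serious obstacle here, as the statement is essentially a set-theoretic restatement of the definition of $\free(\cdot)$ together with transitivity of the induced-subgraph relation. The one point that needs a little care is the use of ``isomorphic to a member of'' rather than literal membership in the definition of $\free$: the embeddings must be tracked up to isomorphism, and one should note that the composition of two induced-subgraph embeddings (up to isomorphism) is again an induced-subgraph embedding, which is exactly the transitivity invoked in the ``if'' direction. If one wanted to be fully explicit, this transitivity of the ``is an induced sub(di)graph of'' relation could be stated as a preliminary remark; otherwise it is standard and can be used directly.
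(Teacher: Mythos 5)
Your proof is correct. The paper does not actually prove Theorem~\ref{th-free-sub}; it is imported from \cite{KL15} without argument, so there is no in-paper proof to compare against. Both directions of your argument are sound: the ``if'' direction via transitivity of the induced-sub(di)graph relation, and the ``only if'' direction via the witness $G\in F_2$ itself lying in $\free(F_1)\setminus\free(F_2)$ when no member of $F_1$ embeds into it. This is the standard argument, and your closing remark about tracking embeddings up to isomorphism is the right point to flag.
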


\begin{lemma}[\cite{KL15}]\label{th-free}
Let  $X=\free(F_1)$ and $Y=\free(F_2)$ be hereditary  classes of (di)graphs.
Then $X\cap Y=\free(F_1\cup F_2)$ and $\co X=\free(\co F_1)$.
\end{lemma}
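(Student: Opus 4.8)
The plan is to prove the two claims of Lemma \ref{th-free} separately, each by a straightforward double-inclusion argument that unwinds the definition of $\free(\cdot)$.

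For the first claim, $X \cap Y = \free(F_1 \cup F_2)$, I would argue as follows. First recall that by definition a (di)graph $G$ lies in $\free(F)$ exactly when no induced sub(di)graph of $G$ is isomorphic to a member of $F$. For the inclusion $X \cap Y \subseteq \free(F_1 \cup F_2)$, take any $G \in X \cap Y$; then $G \in \free(F_1)$ forbids all members of $F_1$ as induced sub(di)graphs and $G \in \free(F_2)$ forbids all members of $F_2$, so together $G$ contains no induced sub(di)graph isomorphic to any member of $F_1 \cup F_2$, whence $G \in \free(F_1 \cup F_2)$. For the reverse inclusion, if $G \in \free(F_1 \cup F_2)$ then in particular $G$ avoids every member of $F_1$ and every member of $F_2$, giving $G \in \free(F_1) = X$ and $G \in \free(F_2) = Y$, hence $G \in X \cap Y$. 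This step is purely definitional: the union in the forbidden set corresponds exactly to the conjunction ``avoids $F_1$ and avoids $F_2$''.

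For the second claim, $\co X = \free(\co F_1)$, the key observation is that complementation is an involution ($\co(\co G) = G$) and that it respects induced sub(di)graphs: $H$ is an induced sub(di)graph of $G$ if and only if $\co H$ is an induced sub(di)graph of $\co G$, since the complement is taken on the same vertex set and $(\co G)[V'] = \co(G[V'])$. Using this, I would show both inclusions. If $G \in \co X$, write $G = \co G'$ with $G' \in X = \free(F_1)$; any induced sub(di)graph of $G = \co G'$ isomorphic to some $\co H$ with $H \in F_1$ would, upon complementing, yield an induced sub(di)graph of $G'$ isomorphic to $H$, contradicting $G' \in \free(F_1)$. Hence $G \in \free(\co F_1)$. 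Conversely, given $G \in \free(\co F_1)$, the same correspondence shows $\co G$ avoids every member of $F_1$, so $\co G \in \free(F_1) = X$ and therefore $G = \co(\co G) \in \co X$.

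I do not anticipate a genuine obstacle here, as both parts are elementary consequences of the definitions; the only point that requires a moment of care is the compatibility of complementation with the induced-sub(di)graph relation in the directed setting, namely verifying $\co(G[V']) = (\co G)[V']$, since arcs are ordered pairs. Once that identity is noted, the involutive property $\co(\co G) = G$ makes the bijection between forbidden configurations for $X$ and for $\co X$ transparent, and the argument closes symmetrically.
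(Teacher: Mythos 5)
Your proof is correct. The paper itself gives no proof of this lemma --- it is quoted directly from \cite{KL15} --- and your definitional double-inclusion argument, including the check that complementation is an involution commuting with taking induced sub(di)graphs (i.e.\ $\co (G[V'])=(\co G)[V']$), is exactly the standard argument one would expect behind the citation.
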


\begin{observation} \label{forb-u-d1}
Let $G$ be a digraph such that $G\in \free(X)$ for a hereditary class of digraphs $\free(X)$ and
there is some digraph $X^*\in X$ such that  all biorientations of $\un(X^*)$
are in $\free(X)$, then $\un(G)\in\free(\un(X^*))$.
\end{observation}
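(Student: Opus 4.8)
The plan is to prove the statement in contrapositive form: I will show that if $\un(G)$ contains an induced subgraph isomorphic to $\un(X^*)$, then $G$ cannot belong to $\free(X)$. So I assume, for contradiction, that some vertex set $V'\subseteq V$ induces in $\un(G)$ a copy of $\un(X^*)$, that is, $\un(G)[V']\cong \un(X^*)$, and I aim to locate a forbidden member of $X$ inside $G$.

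The first key step is to establish the commutation identity $\un(G[V'])=\un(G)[V']$ for every $V'\subseteq V$. This follows directly from the definitions of $\un$ and of induced sub(di)graphs: for $u,v\in V'$, the edge $\{u,v\}$ occurs on either side exactly when at least one of the arcs $(u,v),(v,u)$ belongs to $G$, and such arcs already lie in $G[V']$ because both endpoints are in $V'$. Combining this identity with the assumption gives $\un(G[V'])\cong \un(X^*)$. Hence $G[V']$ is (isomorphic to) a \emph{biorientation} of $\un(X^*)$: its underlying graph is exactly $\un(X^*)$, and each edge of that graph carries one or both of its two possible arcs in $G[V']$.

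The second step invokes the hypothesis on biorientations. Since $G[V']$ is a biorientation of $\un(X^*)$, the hypothesis applies to it and forces $G[V']$ to be a forbidden configuration, i.e.\ to contain an induced subdigraph isomorphic to a member of $X$ (equivalently, $G[V']\notin\free(X)$). Because being an induced subdigraph composes---an induced subdigraph of $G[V']$ is an induced subdigraph of $G$, since $G[V'][V'']=G[V'']$ for $V''\subseteq V'$---the digraph $G$ itself then contains an induced subdigraph isomorphic to a member of $X$. This contradicts $G\in\free(X)$. Therefore no such $V'$ exists, which is precisely the assertion $\un(G)\in\free(\un(X^*))$ by the definition of $\free$.

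I expect the only delicate point to be the bookkeeping in the second step: one must be certain that the particular orientation pattern realised by $G[V']$ is covered, which is exactly why the hypothesis must range over \emph{all} biorientations of $\un(X^*)$ (up to isomorphism) and not merely over $X^*$ itself---different edges of $\un(X^*)$ may receive different single/double arc assignments in $G[V']$. The commutation identity $\un(G[V'])=\un(G)[V']$ is the technical heart of the argument and the one place where the definitions need to be unwound carefully; the remainder is the standard contrapositive reasoning for a hereditary class together with transitivity of the induced-subdigraph relation.
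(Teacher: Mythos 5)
The paper states this as an observation and supplies no proof, so there is nothing to compare against; your argument is correct and is the natural one, resting on the commutation identity $\un(G[V'])=\un(G)[V']$, the fact that a digraph is a biorientation of its underlying graph, and transitivity of the induced-subdigraph relation. Note only that you have tacitly read the hypothesis as ``every biorientation of $\un(X^*)$ \emph{fails} to lie in $\free(X)$,'' i.e.\ contains a member of $X$ as an induced subdigraph; taken literally, the printed hypothesis that all biorientations of $\un(X^*)$ \emph{are in} $\free(X)$ can never be satisfied (since $X^*$ is itself a biorientation of $\un(X^*)$ and $X^*\notin\free(X)$ because $X^*\in X$), so your reading is evidently the intended one and your contrapositive argument proves the observation in that form.
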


\begin{observation} \label{forb-u-d2}
Let $G$ be a digraph such that
$\un(G)\in \free(X)$ for some hereditary class of graphs $\free(X)$,
then  for all $X^*\in X$ and all biorientations $b(X^*)$ of $X^*$ it holds that
$G\in\free(b(X^*))$.
\end{observation}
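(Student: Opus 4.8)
The plan is to argue by contraposition. Suppose, toward a contradiction, that $G \notin \free(b(X^*))$ for some $X^* \in X$ and some biorientation $b(X^*)$. By the definition of $\free$, this means $G$ has an induced subdigraph $H = G[V']$ on some vertex set $V' \subseteq V$ with $H$ isomorphic to $b(X^*)$. I would then pass to underlying undirected graphs and derive that $\un(G)$ contains $X^*$ as an induced subgraph, contradicting the hypothesis $\un(G) \in \free(X)$.

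The key step is the observation that forming the underlying undirected graph commutes with taking induced sub(di)graphs, namely $\un(G[V']) = \un(G)[V']$ for every $V' \subseteq V$. This follows directly from the definitions: for $u,v \in V'$ the edge $\{u,v\}$ belongs to either side exactly when at least one of the arcs $(u,v)$, $(v,u)$ is present in $G$. Hence $\un$ of an induced subdigraph is precisely the induced subgraph of $\un(G)$ on the same vertex set.

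Next I would record that every biorientation recovers its base graph under $\un$, i.e.\ $\un(b(X^*)) = X^*$. Indeed, $b(X^*)$ is obtained from $X^*$ by replacing each edge $\{u,v\}$ by one or both of the arcs $(u,v)$, $(v,u)$, so an edge of $\un(b(X^*))$ is present precisely for the edges of $X^*$. Combining this with the previous step, with $H \cong b(X^*)$, and with the fact that an isomorphism of digraphs preserves the ``at least one arc between'' relation and hence induces an isomorphism of their underlying graphs, I obtain $\un(G)[V'] = \un(H) \cong \un(b(X^*)) = X^*$. Thus $X^*$ occurs as an induced subgraph of $\un(G)$; since $X^* \in X$, this yields $\un(G) \notin \free(X)$, the desired contradiction.

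The argument is essentially a bookkeeping exercise, so there is no serious obstacle; the only point requiring genuine care is the commutation identity $\un(G[V']) = \un(G)[V']$, which must be checked against the precise definitions of induced subdigraph and of $\un$. Everything else---the identity $\un(b(X^*)) = X^*$ and the transfer of isomorphism from digraphs to their underlying undirected graphs---is immediate.
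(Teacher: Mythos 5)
Your argument is correct and is exactly the reasoning the paper leaves implicit: the observation is stated without proof, being immediate from the two facts you isolate, namely that $\un$ commutes with taking induced sub(di)graphs and that $\un(b(X^*))=X^*$ for every biorientation. Nothing is missing.
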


\section{Undirected Co-graphs and Subclasses}\label{undco}

In order to define co-graphs and subclasses we will use the following
operations.
Let $G_1=(V_1,E_1)$ and $G_2=(V_2,E_2)$ be two vertex-disjoint graphs.
\begin{itemize}
\item
The disjoint union of $G_1$ and $G_2$, denoted by $G_1\oplus G_2$,
is the graph with vertex set $V_1\cup V_2$ and edge set $E_1\cup E_2$.
\item
The join of $G_1$ and $G_2$, denoted by $G_1\otimes G_2$, is the graph
with vertex set $V_1\cup V_2$ and
edge set $E_1\cup E_2 \cup \{\{v_1,v_2\}~|~ v_1\in V_1, v_2\in V_2\}$.
\end{itemize}

We also will recall forbidden induced subgraph characterizations for
co-graphs and frequently analyzed subclasses. Therefore the graphs
in Table \ref{gr} are very useful.

\begin{table}[ht]
\caption{Special graphs}
\label{gr}
\begin{center}
\begin{tabular}{ccccccc}
\\
\epsfig{figure=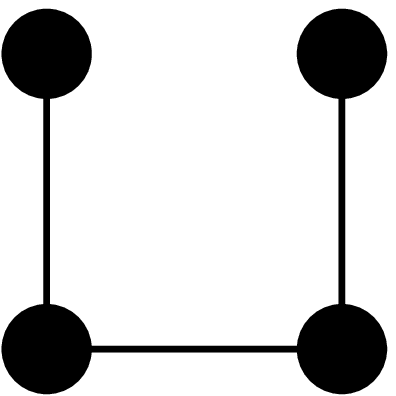,width=1.2cm} && \epsfig{figure=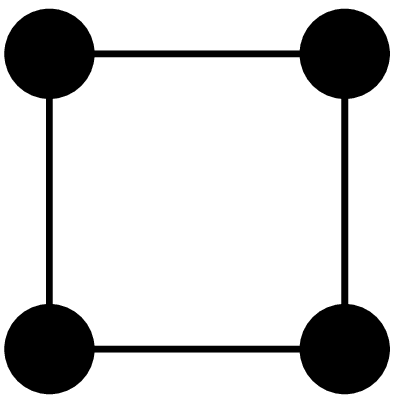,width=1.2cm} && \epsfig{figure=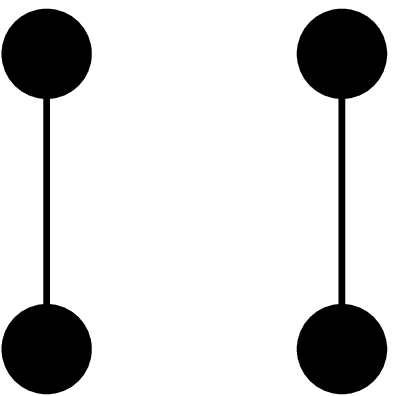,width=1.2cm}       && \epsfig{figure=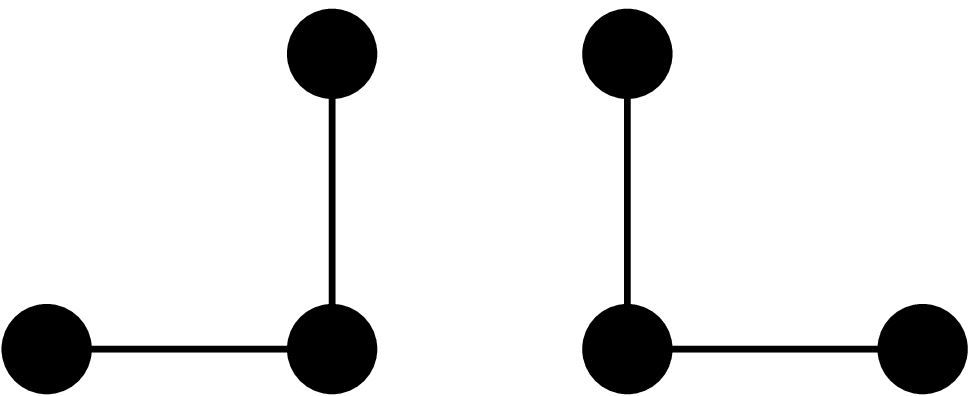,width=3.0cm}\\
$P_4$&& $C_4$ && $2K_2$ && $2P_3$\\
\end{tabular}
\end{center}
\end{table}

\subsection{Co-Graphs}\label{sec-cg}

\begin{definition}[Co-Graphs \cite{CLS81}]
The class of {\em co-graphs  (short for complement-reducible graphs)} is defined
recursively as follows.
\begin{enumerate}[(i)]
\item Every graph on a single vertex $(\{v\},\emptyset)$,
denoted by $\bullet$, is a co-graph.

\item If $G_1$ and $G_2$ are co-graphs, then
\begin{inparaenum}[(a)]
\item
$G_1\oplus G_2$ and
\item
$G_1\otimes  G_2$  are co-graphs.
\end{inparaenum}
\end{enumerate}
The class of  co-graphs is denoted by $\C$.
\end{definition}

Using the recursive structure of co-graphs many problems can be solved
in linear time, e.g. see \cite{CLS81}.
Further the recursive structure allows to compute the path-width
and tree-width of co-graphs in linear time \cite{BM93}.

\subsection{Subclasses of Co-Graphs}

In Table \ref{t-uc} we summarize co-graphs and their well-known subclasses.
The given forbidden sets are known from the existing literature \cite{CLS81,Gol78,CH77,NP11,HMP11}.

\begin{table}[ht]
\caption{Overview on subclasses of co-graphs. By $G_1$ and $G_2$ we denote graphs
of the class $X$, by $I$ we denote an edgeless graph and by $K$ we denote a complete graph.
Classes and complement classes are listed
between two horizontal lines. That is, only co-graphs and threshold graphs  are
closed under taking edge complementation.}
\label{t-uc}
{\small
\begin{center}
\begin{tabular}{|l|c|ccc|l|}
\hline
class  $X$              &  notation   & \multicolumn{3}{c|}{operations}    & $\forb(X)$  \\
\hline
\hline
co-graphs              & $\C$  & $\bullet$  & $G_1\oplus G_2$     & $G_1\otimes G_2$  & $P_4$   \\
\hline
\hline
quasi threshold/trivially perfect  graphs    & $\TP$ & $\bullet$  & $G_1\oplus G_2$     & $G_1 \otimes \bullet$ & $P_4$, $C_4$\\
\hline
co-quasi threshold/co-trivially perfect graphs    & $\CTP$ & $\bullet$  &   $G_1 \oplus \bullet$                   &    $G_1\otimes G_2$             & $P_4$,  $2K_2$   \\
\hline
\hline
threshold graphs       & $\T$  & $\bullet$  & $G_1\oplus \bullet$ & $G_1 \otimes \bullet$ & $P_4$, $C_4$, $2K_2$  \\
\hline
\hline
simple co-graphs       & $\SC$ & $\bullet$  &  $G_1\oplus I$      & $G_1 \otimes I$ & $P_4$, co-$2P_3$, $2K_2$ \\
\hline
co-simple co-graphs                     &  $\CSC$ & $\bullet$  &  $G_1\oplus K$      & $G_1 \otimes K$ & $P_4$, $2P_3$, $C_4$ \\
\hline \hline
weakly quasi threshold  graphs &  $\WQT$ & $I$        &   $G_1\oplus G_2$   &  $G_1 \otimes I$             & $P_4$,  co-$2P_3$    \\
\hline
co-weakly quasi threshold graphs & $\CWQT$ & $K$        &   $G_1\oplus K$   &  $G_1\otimes G_2$          & $P_4$,  $2P_3$    \\
\hline
\hline
edgeless graphs &  & $\bullet$        &   $G_1\oplus \bullet$   &              & $P_2$    \\
\hline
complete graphs &  & $\bullet$        &   &$G_1\otimes   \bullet$               & co-$P_2$   \\
\hline
\hline
disjoint union of two cliques &    &  & $K\oplus K$  && $I_3$ ,$P_3$\\
\hline
complete bipartite graphs &        &    &  &$I \otimes I$    & $K_3$,  co-$P_3$\\
\hline
\hline
disjoint union of cliques &    & $K$ & $G_1\oplus K$  &&  $P_3$\\
\hline
join of stable sets &  &$I$   &  &$G_1 \otimes I$   &  co-$P_3$\\
\hline
\end{tabular}
\end{center}
}
\end{table}

\subsection{Relations} \label{rel-u}

In Figure \ref{fig:und} we compare the above graph classes to each other and show
the hierarchy of the subclasses of co-graphs.

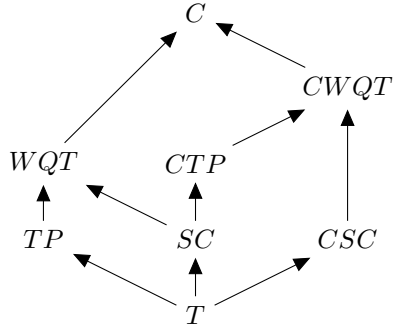
\begin{figure}[hbtp]
	\begin{center}
	\caption{Relations between the subclasses of co-graphs. If there is a path from $A$ to $B$, then it holds that $A\subset B$. The classes, that are not connected by a directed path are incomparable.}
	
	\medskip
		\begin{tikzpicture}

  \draw (3,0) node (T){$T$};
  \draw (1,1) node (TP){$TP$};
  \draw (3,1) node (SC){$SC$};
  \draw (5,1) node (CSC){$CSC$};

  \draw (1,2) node (WQT){$WQT$};
  \draw (3,2) node (CTP){$CTP$};

  \draw (5,3) node (CWQT){$CWQT$};

  \draw (3,4) node (C){$C$};

  \draw [-\ahead] (T) edge (TP);
  \draw [-\ahead] (T) edge (SC);
  \draw [-\ahead] (T) edge (CSC);
  \draw [-\ahead] (TP) edge (WQT);
  \draw [-\ahead] (SC) edge (WQT);
  \draw [-\ahead] (SC) edge (CTP);
  \draw [-\ahead] (CSC) edge (CWQT);

  \draw [-\ahead] (WQT) edge (C);
  \draw [-\ahead] (CTP) edge (CWQT);
  \draw [-\ahead] (CWQT) edge (C);

\end{tikzpicture}
		
		\label{fig:und}
	\end{center}
\end{figure}

\section{Directed Co-Graphs and Subclasses}\label{dirco}

First we introduce operations in order to recall the definition of directed
co-graphs from \cite{BGR97} and introduce some interesting subclasses.
Let $G_1=(V_1,E_1)$ and $G_2=(V_2,E_2)$ be two vertex-disjoint directed graphs.\footnote{We use
the same symbols for the disjoint union and join between undirected and
directed graphs.  Although the meaning becomes clear from the context
we want to emphasize this fact.}
\begin{itemize}
\item
The {\em disjoint union} of $G_1$ and $G_2$,
denoted by $G_1 \oplus G_2$,
is the digraph with vertex set $V_1 \cup V_2$ and
arc set $E_1\cup E_2$.

\item
The {\em series composition} of $G_1$ and $G_2$,
denoted by $G_1\otimes G_2$,
is the digraph with vertex set $V_1 \cup V_2$ and
arc set $E_1\cup E_2\cup\{(u,v),(v,u)~|~u\in V_1, v\in V_2\}$.

\item
The {\em order composition} of $G_1$ and $G_2$,
denoted by $G_1\oslash  G_2$,
is the digraph with vertex set $V_1 \cup V_2$ and
arc set $E_1\cup E_2\cup\{(u,v)~|~u\in V_1, v\in V_2\}$.
\end{itemize}

Every graph structure which can be obtained by this operations, can
be constructed by a tree or even a sequence, as we could do for
undirected co-graphs and threshold graphs. These trees/sequences
can be used for algorithmic properties of those graphs.

\subsection{Directed Co-Graphs}\label{sec-dcg}

\begin{definition}[Directed Co-Graphs \cite{BGR97}]
The class of {\em directed co-graphs} is recursively defined as follows.
\begin{enumerate}[(i)]
\item Every digraph on a single vertex $(\{v\},\emptyset)$,
denoted by $\bullet$, is a directed co-graph.

\item If $G_1$ and $G_2$ are directed co-graphs, then
\begin{inparaenum}[(a)]
\item
$G_1\oplus G_2$,
\item
$G_1 \otimes G_2$, and
\item
$G_1\oslash G_2$   are directed co-graphs.
\end{inparaenum}
\end{enumerate}
The class of directed co-graphs is denoted by $\DC$.
\end{definition}

The recursive definition of directed and undirected co-graphs lead to the following
observation.

\begin{observation}\label{obs-co1}
For every directed co-graph $G$ the
underlying undirected graph $\un(G)$ is a co-graph.
\end{observation}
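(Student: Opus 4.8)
The plan is to prove Observation \ref{obs-co1} by structural induction on the recursive definition of directed co-graphs, following exactly the three-case scheme (single vertex, disjoint union, series composition, order composition) of the definition. The key observation driving the whole argument is that each of the three binary operations on digraphs projects, under the $\un(\cdot)$ map, onto one of the two co-graph operations $\oplus$ and $\otimes$ on undirected graphs: disjoint union maps to disjoint union, while \emph{both} the series composition and the order composition map to the undirected join, since in each case every vertex of $V_1$ becomes adjacent (ignoring orientation) to every vertex of $V_2$.

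First I would verify the base case: if $G=(\{v\},\emptyset)$ is the single-vertex digraph $\bullet$, then $\un(G)=(\{v\},\emptyset)$ is the single-vertex undirected graph, which is a co-graph by clause (i) of the co-graph definition. Then I would state the induction hypothesis: for directed co-graphs $G_1$ and $G_2$ with fewer construction steps, $\un(G_1)$ and $\un(G_2)$ are co-graphs.

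For the inductive step I would treat each operation in turn. If $G=G_1\oplus G_2$, then since the arc set of $G$ is $E_1\cup E_2$ and $V_1,V_2$ are disjoint, no arc runs between $V_1$ and $V_2$, so $\un(G)=\un(G_1)\oplus\un(G_2)$; by the hypothesis and clause (ii)(a) this is a co-graph. If $G=G_1\otimes G_2$, the series composition adds both arcs $(u,v)$ and $(v,u)$ for every $u\in V_1,v\in V_2$, so after forgetting directions and deleting duplicate edges we get exactly the edge $\{u,v\}$ for every such pair; hence $\un(G)=\un(G_1)\otimes\un(G_2)$, which is a co-graph by clause (ii)(b). Finally, if $G=G_1\oslash G_2$, the order composition adds only the arc $(u,v)$ for every $u\in V_1,v\in V_2$, but $\un$ still records the edge $\{u,v\}$ for each such pair, so again $\un(G)=\un(G_1)\otimes\un(G_2)$, a co-graph. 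In all three cases $\un(G)$ is a co-graph, completing the induction.

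The argument is essentially routine; the only point requiring care is the treatment of the order composition, where a single arc rather than a pair of arcs is introduced. The potential subtlety is confirming that $\un$ nonetheless yields a full join and not some partially-directed artefact, which is immediate from the definition $\un(D)=(V,\{\{u,v\}\mid (u,v)\in A\})$: a single arc in either direction already forces the undirected edge. Thus both $\otimes$ and $\oslash$ collapse to the undirected join $\otimes$, which is why the underlying graph stays within the co-graph class even though directed co-graphs have one more generating operation than undirected co-graphs.
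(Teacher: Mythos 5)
Your induction is correct and is exactly the argument the paper has in mind: the paper states this as an immediate consequence of the recursive definitions (without writing out the induction), relying on precisely your key point that $\oplus$ projects to $\oplus$ while both $\otimes$ and $\oslash$ project to the undirected join under $\un(\cdot)$. Your write-up just makes explicit what the paper leaves implicit.
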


The reverse direction
only holds under certain conditions, see Theorem \ref{th-ch-dco}.

Obviously for every directed co-graph we can define a tree structure,
denoted as {\em binary di-co-tree}. The leaves of the di-co-tree represent the
vertices of the graph and the inner nodes of the di-co-tree  correspond
to the operations applied on the subexpressions defined by the subtrees.
For every directed co-graph one can construct a binary di-co-tree in linear time,
see \cite{CP06}.

In \cite{BM14} it is shown that the weak $k$-linkage problem can be solved
in polynomial time for directed co-graphs. By the recursive structure
there exist dynamic programming algorithms
to compute
the size of a largest edgeless subdigraph,
the size of a largest subdigraph which is a tournament,
the size of a largest semi complete subdigraph, and
the size of a largest complete subdigraph
for every directed co-graph  in linear time. Also
the hamiltonian path,  hamiltonian cycle, regular
subdigraph, and directed cut problem are polynomial
on directed co-graphs \cite{Gur17a}.
Calculs of directed co-graphs were also considered in connection
with pomset logic in \cite{Ret99}.
Further the directed path-width, directed tree-width, directed feedback vertex set number, 
cycle rank, DAG-depth and DAG-width
can be computed in linear time for  directed co-graphs \cite{GKR19f}.

\begin{lemma}[\cite{GR18b}]\label{le-com}
Let $G$ be some digraph, then the following properties hold.
\begin{enumerate}[1.]
\item
Digraph $G$ is a directed co-graph if and only if digraph $\co G$ is  a directed co-graph.

\item
Digraph $G$ is a directed co-graph if and only if digraph $G^c$ is  a directed co-graph.
\end{enumerate}
\end{lemma}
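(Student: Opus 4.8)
The plan is to establish both statements by structural induction on the recursive construction of a directed co-graph, after first recording how the two transformations $\co{\cdot}$ and $(\cdot)^c$ behave with respect to the three binary operations $\oplus$, $\otimes$, $\oslash$. For the base case, observe that the single-vertex digraph $\bullet$ has no arcs, so both its complement and its converse are again the single-vertex digraph, which is a directed co-graph by definition.

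The heart of the argument is a collection of distributive identities, which I would verify by simply comparing, for each ordered pair of distinct vertices, whether it is an arc on the two sides. For the complement these read $\co{(G_1\oplus G_2)} = \co{G_1}\otimes \co{G_2}$, $\co{(G_1\otimes G_2)} = \co{G_1}\oplus \co{G_2}$, and $\co{(G_1\oslash G_2)} = \co{G_2}\oslash \co{G_1}$; the first two swap disjoint union and series composition because these are exactly the ``no inter-part arcs'' and ``all inter-part arcs'' operations, while the third reflects that complementing the one-way arc bundle of an order composition reverses its direction. For the converse the identities are $(G_1\oplus G_2)^c = G_1^c\oplus G_2^c$, $(G_1\otimes G_2)^c = G_1^c\otimes G_2^c$, and $(G_1\oslash G_2)^c = G_2^c\oslash G_1^c$, since reversing all arcs leaves disjoint union and series composition untouched but turns the arcs directed from $V_1$ to $V_2$ into arcs from $V_2$ to $V_1$.

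Given these identities, the inductive step is immediate: if $G = G_1 \circ G_2$ for one of the operations $\circ \in \{\oplus, \otimes, \oslash\}$, and $\co{G_1}, \co{G_2}$ (respectively $G_1^c, G_2^c$) are directed co-graphs by the induction hypothesis, then the right-hand side of the relevant identity exhibits $\co G$ (respectively $G^c$) as a directed co-graph built by a single admissible operation from directed co-graphs. This proves the forward implication ``$G \in \DC$ implies $\co G \in \DC$'', and likewise for the converse. To upgrade each implication to an ``if and only if'', I would use that both transformations are involutions, $\co{(\co G)} = G$ and $(G^c)^c = G$: applying the forward implication to $\co G$ (respectively $G^c$) yields the reverse direction for free. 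The only subtle point, and hence the main thing to get right, is the order-composition case, where the asymmetry of $\oslash$ forces the two operands to be interchanged after complementation or conversion; the union and series cases are symmetric and cause no difficulty.
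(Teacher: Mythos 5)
Your proof is correct: the base case, the six distributive identities (including the operand swap for $\oslash$ under both complement and converse), the structural induction, and the upgrade to an equivalence via the involution property are all sound. The paper itself gives no proof of this lemma, only a citation to [GR18b], and your argument is exactly the standard one that reference relies on, so there is nothing to compare against.
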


It further hold the following properties for directed co-graphs:

\begin{theorem}\label{th-ch-dco}
Let $G$ be a digraph. The following properties are equivalent:
\begin{enumerate}
	\item\label{td-1}   $G$ is a directed co-graph.
	\item\label{td-2}   $G\in \free(\{D_1, \dots, D_8\})$.
        \item\label{td-2aa} $G\in \free(\{D_1, \dots, D_6\})$  and $\un(G)\in\free(\{P_4\})$.
        \item\label{td-2a}  $G\in \free(\{D_1, \dots, D_6\})$  and $\un(G)$ is a co-graph.
	\item\label{td-3}   $G$ has directed NLC-width $1$.\footnote{For a definition of directed NLC-width, see \cite{GWY16}.}
	\item\label{td-4}   $G$ has directed clique-width at most $2$ and $G\in \free(\{D_2,D_3\})$.\footnote{For a definition of directed clique-width, see \cite{CO00}.}
\end{enumerate}
\end{theorem}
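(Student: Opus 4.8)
The plan is to prove the four forbidden‑subdigraph characterizations \ref{td-1}--\ref{td-2a} as one block, and then obtain the two width characterizations \ref{td-3}, \ref{td-4} by reduction to the cited literature. For the block I would close the cycle $\ref{td-1}\Rightarrow\ref{td-2a}\Leftrightarrow\ref{td-2aa}\Rightarrow\ref{td-1}$ and afterwards add $\ref{td-2}\Leftrightarrow\ref{td-2aa}$. Throughout I use that $\DC$ is hereditary, which is clear because each of $\oplus$, $\otimes$, $\oslash$ restricts to induced subdigraphs, and that by Table~\ref{t-uc} the undirected co-graphs are exactly $\free(\{P_4\})$. The latter makes $\ref{td-2aa}\Leftrightarrow\ref{td-2a}$ immediate, since ``$\un(G)$ is a co-graph'' and ``$\un(G)\in\free(\{P_4\})$'' are the same statement. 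It is convenient to record from the definitions of the obstructions that $D_1,\dots,D_6$ are precisely those minimal forbidden subdigraphs whose underlying graph is a co-graph, whereas $\un(D_7)=\un(D_8)=P_4$; this split is what makes the decomposition in statement \ref{td-2a} natural.

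The easy directions come first. For $\ref{td-1}\Rightarrow\ref{td-2a}$, Observation~\ref{obs-co1} gives that $\un(G)$ is a co-graph, and one checks directly that none of $D_1,\dots,D_6$ is itself a directed co-graph; since $\DC$ is hereditary, no directed co-graph can contain any $D_i$ as an induced subdigraph, so $G\in\free(\{D_1,\dots,D_6\})$.

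The combinatorial heart is $\ref{td-2aa}\Rightarrow\ref{td-1}$, which I would prove by induction on $|V(G)|$ along the cotree of the co-graph $\un(G)$. The root of the cotree partitions $V$ into $A$ and $B$. If the root is a disjoint union, then $\un(G)$ has no edge between $A$ and $B$, hence $G$ has no arc between $A$ and $B$, so $G=G[A]\oplus G[B]$; both properties in \ref{td-2aa} are hereditary, so induction finishes this case. If the root is a join, then every pair $u\in A$, $v\in B$ carries at least one arc, and everything reduces to showing that this crossing arc pattern is \emph{uniform}: either all crossing pairs are bidirectional, giving $G=G[A]\otimes G[B]$, or all crossing arcs are single and point the same way, giving $G=G[A]\oslash G[B]$ or its converse. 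The claim is that $D_1,\dots,D_6$ forbid every inconsistency: if two crossing pairs have different types --- one bidirectional and one single, or two singles whose directions disagree --- then three or four of the involved vertices induce some $D_i$. Exhibiting, for each type of mismatch, an explicit induced copy of some $D_i$ is the step I expect to be the main obstacle; it is a finite but careful case analysis, and it is also where one confirms that $D_1,\dots,D_6$ really exhaust the obstructions with co-graph underlying graph.

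Finally I would prove $\ref{td-2}\Leftrightarrow\ref{td-2aa}$ using $\un(D_7)=\un(D_8)=P_4$. The direction $\ref{td-2aa}\Rightarrow\ref{td-2}$ is immediate, since a digraph whose underlying graph is $P_4$-free cannot contain $D_7$ or $D_8$ as an induced subdigraph. For $\ref{td-2}\Rightarrow\ref{td-2aa}$, if $\un(G)$ contained an induced $P_4$, then those four vertices induce a subdigraph $H$ with $\un(H)=P_4$; a finite check of the biorientations of $P_4$ shows that any such $H$ either contains one of $D_1,\dots,D_6$ or is isomorphic to $D_7$ or $D_8$, contradicting \ref{td-2} in either case, so $\un(G)$ is $P_4$-free. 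For the width characterizations I would invoke the recursive definitions from \cite{GWY16,CO00}: with a single label the directed NLC-width operations realize exactly disjoint union, order composition, and series composition (the two single-arc directions being available and $\DC$ being closed under converse by Lemma~\ref{le-com}), which yields $\ref{td-1}\Leftrightarrow\ref{td-3}$; directed clique-width~$2$ is slightly more permissive, and adding the obstructions $D_2,D_3$ removes exactly the digraphs of directed clique-width~$2$ that fail to be directed co-graphs, which yields $\ref{td-1}\Leftrightarrow\ref{td-4}$.
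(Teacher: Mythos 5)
Your overall architecture (closing a cycle among \ref{td-1}--\ref{td-2a}, then citing \cite{GWY16} and \cite{CO00} for \ref{td-3} and \ref{td-4}) is reasonable, and several pieces (\ref{td-2aa}$\Leftrightarrow$\ref{td-2a}, \ref{td-1}$\Rightarrow$\ref{td-2a}, \ref{td-2aa}$\Rightarrow$\ref{td-2}) are fine. But the step you yourself identify as the combinatorial heart, \ref{td-2aa}$\Rightarrow$\ref{td-1}, contains a genuine gap: the uniformity claim at a join node of the cotree of $\un(G)$ is false. Take $G$ on three vertices $u,v,v'$ with arcs $u\to v$, $u\leftrightarrow v'$ and $v\leftrightarrow v'$, i.e.\ $G=(u\oslash v)\otimes v'$. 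This \emph{is} a directed co-graph, so it contains none of $D_1,\dots,D_6$; its underlying graph is $K_3$, and a perfectly legitimate binary cotree for $K_3$ has root join with parts $A=\{u\}$ and $B=\{v,v'\}$. Across that split the pair $(u,v)$ is a single arc while $(u,v')$ is bidirectional, so the crossing pattern is not uniform and yet no three or four of the vertices induce any $D_i$. The point is that the cotree of $\un(G)$ knows nothing about arc directions, so an arbitrary binarization of an undirected join node need not be compatible with the directed decomposition of $G$; what you would actually have to prove is that some \emph{regrouping} of the children of the join node yields a split whose crossing pattern is uniform (here $\{u,v\}$ versus $\{v'\}$). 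That existence statement is essentially the content of the forbidden-subdigraph characterization itself and requires working with the modular decomposition of the digraph $G$ (or at least with the directed quotient structure on all children of the join node), not a pairwise mismatch check across one bipartition.

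For comparison, the paper does not attempt this argument at all: it obtains \ref{td-1}$\Leftrightarrow$\ref{td-2} directly by citing \cite{CP06}, gets \ref{td-2}$\Rightarrow$\ref{td-2a} from Observation \ref{obs-co1}, \ref{td-2aa}$\Rightarrow$\ref{td-2} from Observation \ref{forb-u-d2}, and cites \cite{GWY16} for the width statements. So your proposal is a genuinely different (self-contained) route, but as written it does not go through. To repair it you would either have to (i) replace the cotree of $\un(G)$ by the modular decomposition of $G$ itself and show that, in the absence of $D_1,\dots,D_6$ and of biorientations of $P_4$, each quotient is edgeless, bidirectionally complete, or an appropriately ordered structure admitting an $\oslash$-split, or (ii) simply fall back on the citation to \cite{CP06} as the paper does. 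The remaining loose end is minor by comparison: your treatment of \ref{td-4} merely restates the claim rather than proving it, so it too should be an explicit citation.
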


\begin{proof}
$(\ref{td-1})\Leftrightarrow(\ref{td-2})$ By \cite{CP06}.

$(\ref{td-2aa})\Leftrightarrow(\ref{td-2a})$ Since $\forb(\C)=\{P_4\}$.

$(\ref{td-1})\Leftrightarrow(\ref{td-3})$ By \cite{GWY16}

$(\ref{td-1})\Leftrightarrow(\ref{td-4})$ By \cite{GWY16}

$(\ref{td-2})\Rightarrow(\ref{td-2a})$ By Observation \ref{obs-co1}. 

$(\ref{td-2aa})\Rightarrow(\ref{td-2})$
by Observation \ref{forb-u-d2}.
\end{proof}

\begin{table}
\caption{The eight forbidden induced subdigraphs for directed co-graphs.}
\label{F-co}
\begin{center}
\begin{tabular}{cccccccc}
\epsfig{figure=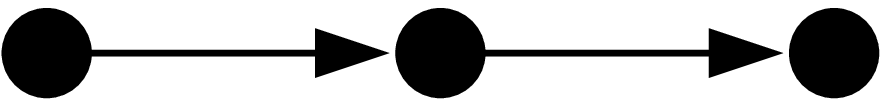,width=3.0cm} &&\epsfig{figure=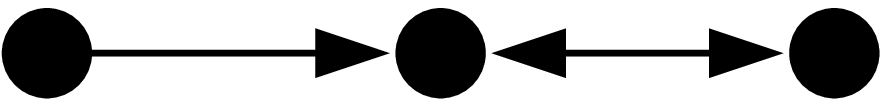,width=3.0cm}&&\epsfig{figure=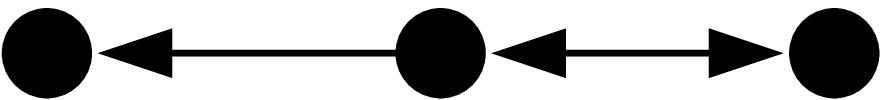,width=3.0cm}&&\epsfig{figure=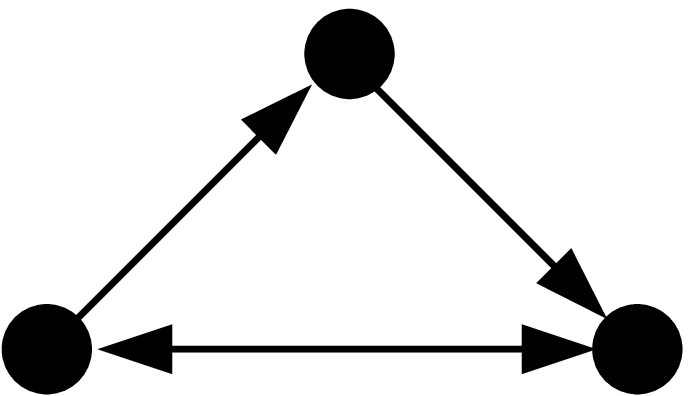,width=2.2cm} &\\
 $D_1$   &  &    $D_2$   &  &   $D_3$   &  &   $D_4$   &  \\
&&&&&&&\\
\epsfig{figure=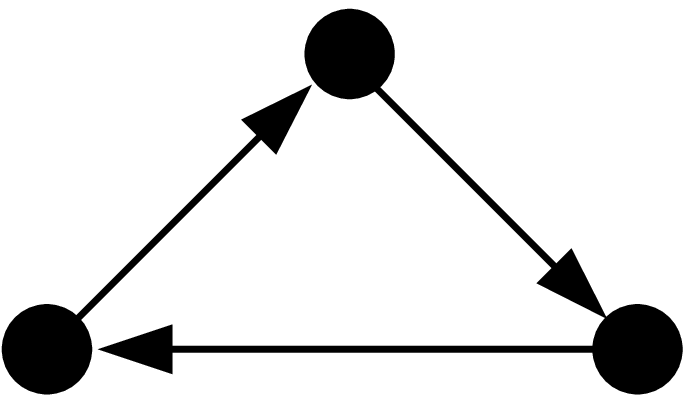,width=2.4cm} &&\epsfig{figure=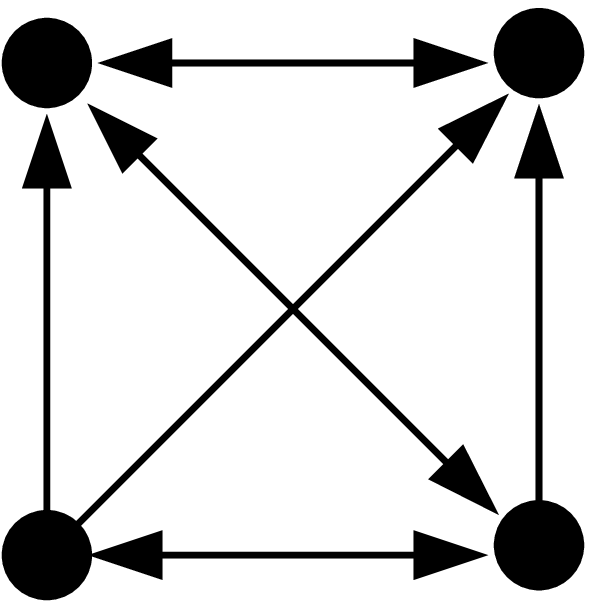,width=1.8cm}&&\epsfig{figure=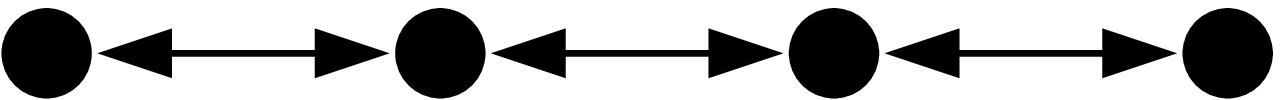,width=3.2cm}&&\epsfig{figure=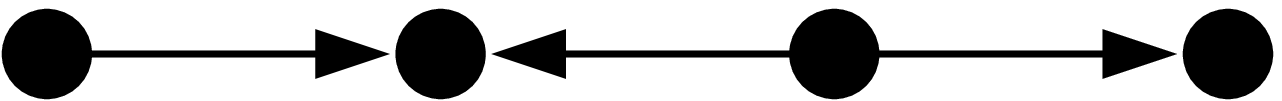,width=3.6cm}&\\
 $D_5$   &  &    $D_6$   &  &   $D_7$   &  &   $D_8$   &  \\
\end{tabular}
\end{center}
\end{table}

For subclasses of directed co-graphs, which will be defined in the following subsections,
some more forbidden subdigraphs are needed. Those are defined in Tables
 \ref{F-co2}, \ref{F-co2x}, and \ref{F-co2a}.

\begin{table}
\caption{Forbidden induced subdigraphs for subclasses of directed co-graphs.}
\label{F-co2}
\begin{center}
\begin{tabular}{ccccccccccccc}

\epsfig{figure=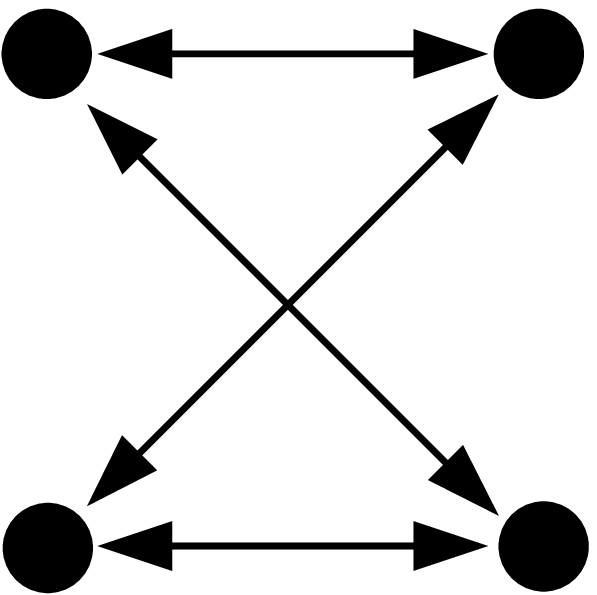,width=1.8cm} &~~&\epsfig{figure=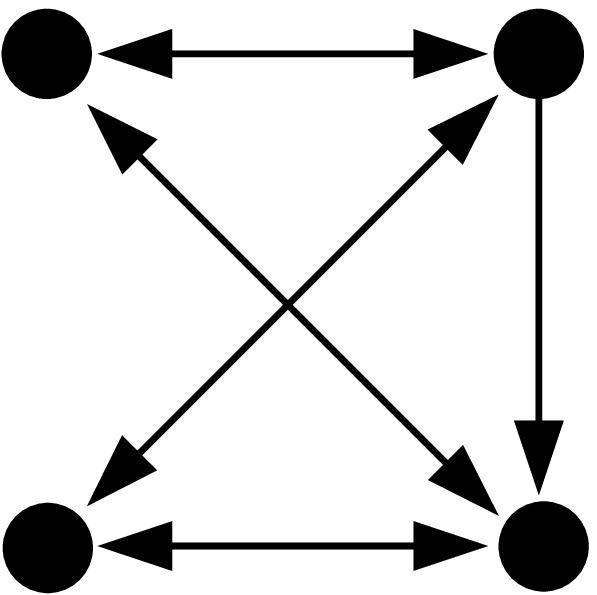,width=1.8cm} &~~& \epsfig{figure=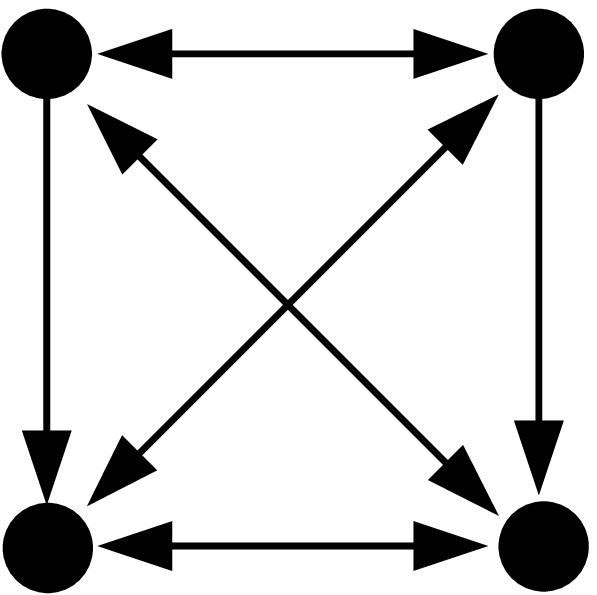,width=1.8cm} &~~& &  &~~&  &  \\
 $D_9$                             &&    $D_{10}$                        &&    $D_{11}$            &&    &    &&    &    \\
\end{tabular}
\end{center}
\end{table}

\begin{table}
\caption{Forbidden induced subdigraphs for subclasses of directed co-graphs.}
\label{F-co2x}
\begin{center}
\begin{tabular}{ccccccccccccc}

\epsfig{figure=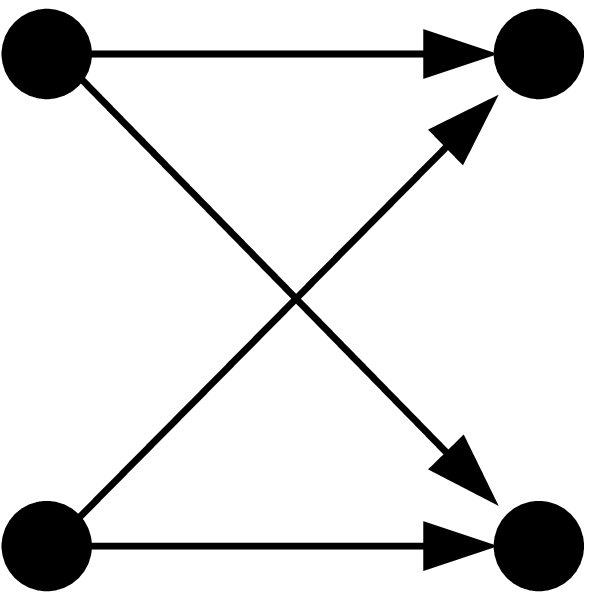,width=1.8cm} &~~&\epsfig{figure=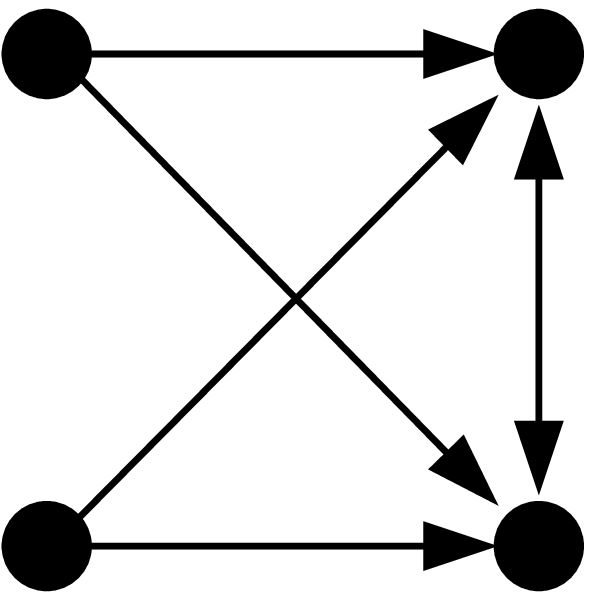,width=1.8cm} &~~& \epsfig{figure=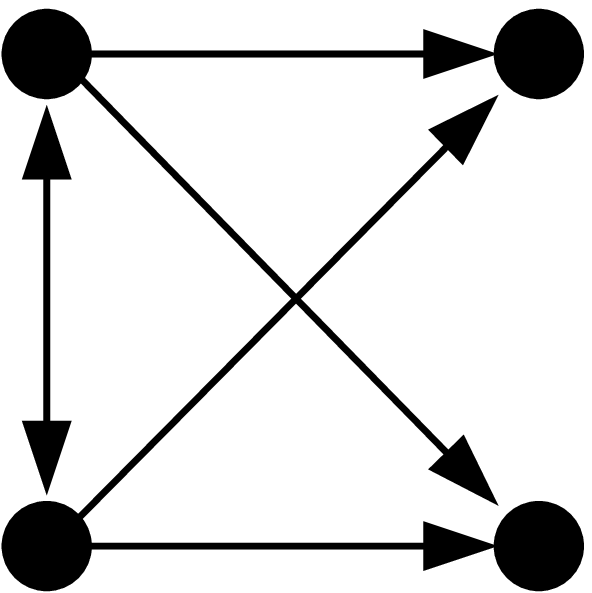,width=1.8cm} &~~&   \epsfig{figure=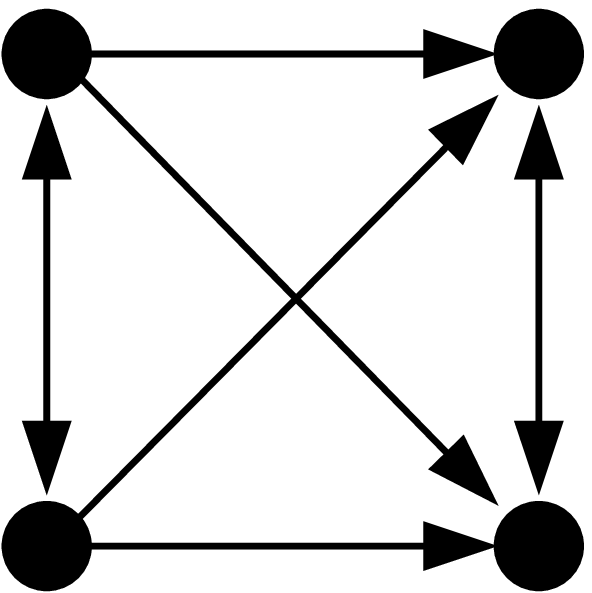,width=1.8cm}&~~&&  \\
 $D_{12}$                             &&    $D_{13}$                        &&    $D_{14}$            &&   $D_{15}$   &&   &\\
\end{tabular}
\end{center}
\end{table}

\begin{table}
\caption{Forbidden induced subdigraphs for subclasses of directed co-graphs.}
\begin{center}
\label{F-co2a}
\begin{tabular}{ccccccccccccc}
\epsfig{figure=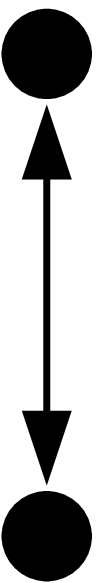,width=0.28cm} &~~&\epsfig{figure=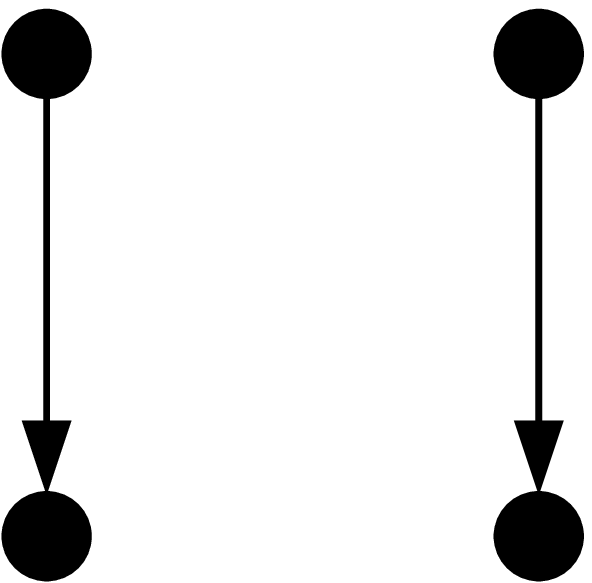,width=1.8cm} &~~&   \epsfig{figure=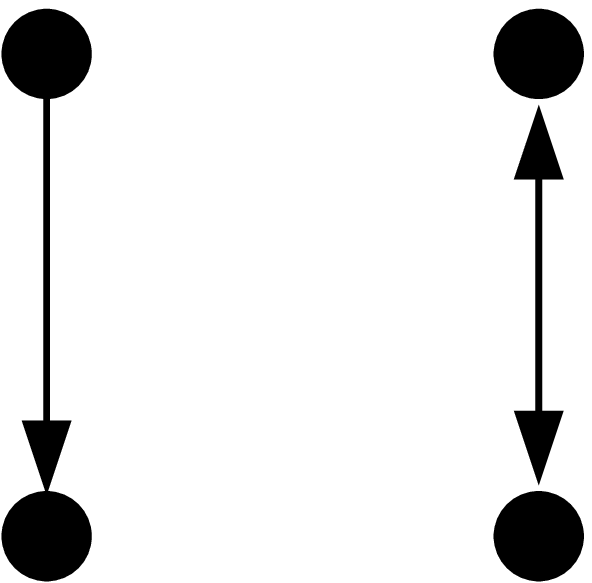,width=1.8cm}&~~& \epsfig{figure=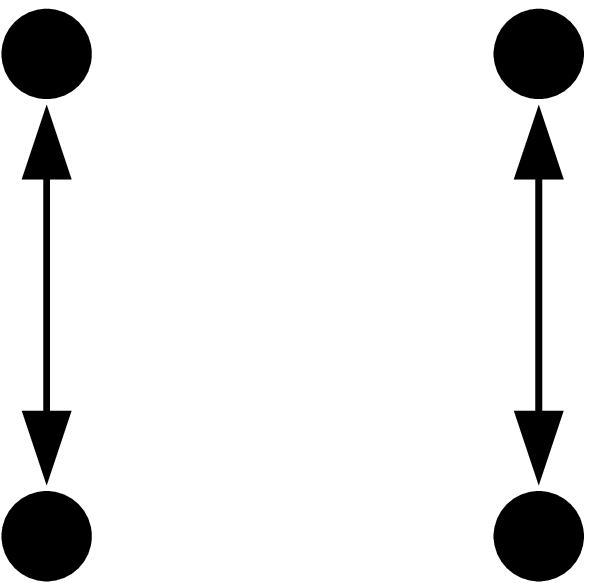,width=1.8cm} \\
$\overleftrightarrow{K_2}$                            && $2\overrightarrow{P_2}=\co D_{11}$           &&   $\co D_{10}$   && $\co D_9= \co D_9$   \\
\end{tabular}
\end{center}
\end{table}

\begin{observation}\label{obs18}
It holds:
\begin{enumerate}
\item
$\{D_1,\ldots, D_8\}=\co \{D_1,\ldots, D_8\}$.
\item
$\{D_{12},\ldots,D_{15}\}= \co \{D_{12},\ldots,D_{15}\}$.
\end{enumerate}
\end{observation}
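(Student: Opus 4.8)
The plan is to reduce both statements to a single structural principle about complementation, namely that for every hereditary class of digraphs $X$ one has $\forb(\co X)=\co\forb(X)$, and then to exhibit the relevant self-complementary classes. First I would record the two elementary facts on which everything rests. Complementation is an involution on isomorphism classes of digraphs, $\co(\co G)\cong G$, and it commutes with taking induced subdigraphs: for $V'\subseteq V(G)$ one has $\co(G[V'])=(\co G)[V']$, since for $u,v\in V'$ the pair $(u,v)$ is an arc of $(\co G)[V']$ exactly when it is not an arc of $G[V']$. From this I would deduce $\forb(\co X)=\co\forb(X)$: if $G\in\forb(X)$ then $\co G\notin\co X$, and every proper induced subdigraph $H$ of $\co G$ equals $\co(\co H)$ where $\co H=G[V']$ is a proper induced subdigraph of $G$, hence $\co H\in X$ and $H\in\co X$; thus $\co G$ is a minimal forbidden subdigraph for $\co X$, and the reverse inclusion follows by applying the involution. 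By the uniqueness of the minimal forbidden family (Theorem \ref{th-free-forb}; compatibly with Lemma \ref{th-free}) this pins down $\forb(\co X)$ exactly.

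For the first statement I would apply this to $X=\DC$. By Theorem \ref{th-ch-dco} the set $\{D_1,\dots,D_8\}$ is precisely $\forb(\DC)$, and by Lemma \ref{le-com} the class of directed co-graphs is self-complementary, i.e.\ $\co\DC=\DC$. Combining these gives $\co\{D_1,\dots,D_8\}=\co\forb(\DC)=\forb(\co\DC)=\forb(\DC)=\{D_1,\dots,D_8\}$, which is exactly the claim. This route uses no information about the individual digraphs $D_i$ beyond the fact that they constitute the unique minimal forbidden family of a self-complementary class.

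For the second statement the same mechanism applies whenever $\{D_{12},\dots,D_{15}\}$ is the minimal forbidden family of a self-complementary subclass of $\DC$; in that case self-complementarity of the class together with $\forb(\co X)=\co\forb(X)$ yields the assertion verbatim. Should one prefer an argument independent of identifying such a class, I would verify the set equality directly: since $\co$ is an involution on isomorphism classes, it suffices to establish the inclusion $\co\{D_{12},\dots,D_{15}\}\subseteq\{D_{12},\dots,D_{15}\}$, for then applying $\co$ gives the reverse inclusion and hence equality. This reduces to inspecting the four digraphs of Table \ref{F-co2x}, computing each $\co D_i$ by complementing its arc set, and matching it against the four members of the set, thereby locating the complementary pairs and any self-complementary member.

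I expect the only real work to lie in the second statement. The general principle makes the first statement immediate, and its correctness hinges solely on complementation commuting with induced-subdigraph formation, which is routine. The obstacle in the second statement is either to confirm that $\{D_{12},\dots,D_{15}\}$ is exactly the forbidden family of a self-complementary class, so that the abstract argument closes, or else to carry out the finite, figure-dependent check that each complement $\co D_i$ is isomorphic to a member of the set. Either way the verification is elementary once the four digraphs are made explicit, and there is no conceptual difficulty beyond bookkeeping of the arc sets and the isomorphisms.
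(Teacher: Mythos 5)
The paper offers no written proof of Observation~\ref{obs18} at all: it is asserted as a fact to be read off from the drawings in Tables~\ref{F-co} and~\ref{F-co2x}. Your argument for the first item is therefore a genuinely different, more structural route, and it is essentially sound: $\forb(\co X)=\co\forb(X)$ follows exactly as you say from the fact that complementation is an involution commuting with induced subdigraphs, and combining this with $\co\DC=\DC$ (Lemma~\ref{le-com}) and the uniqueness of the minimal forbidden family (Theorem~\ref{th-free-forb}) yields $\co\{D_1,\ldots,D_8\}=\{D_1,\ldots,D_8\}$ without ever looking at the pictures. Two caveats. First, Theorem~\ref{th-ch-dco} literally only gives $\DC=\free(\{D_1,\ldots,D_8\})$; your chain needs the stronger fact that $\{D_1,\ldots,D_8\}=\forb(\DC)$, i.e.\ that each $D_i$ is \emph{minimal}. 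The paper asserts this (in the overview section and via \cite{CP06}), but it is an extra input you should name explicitly, since $\free(F_1)=\free(F_2)$ alone does not force $F_1=F_2$. Second, the paper's own logical order is the reverse of yours: it derives the proposition $\DC=\co\DC$ \emph{from} Observation~\ref{obs18}. Your use of Lemma~\ref{le-com} avoids circularity only because that lemma is established independently in \cite{GR18b}; this is worth saying.

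For the second item your abstract mechanism does not close, as you yourself anticipate: $\{D_{12},\ldots,D_{15}\}$ is not $\forb(X)$ for any self-complementary class in the paper (it occurs only as a \emph{part} of the forbidden families for $\DTP$ and $\DCTP$), so the fallback to a direct finite check is the only available route, and it is in fact what the paper implicitly does. Your reduction to a single inclusion via the involution is a nice economy, but the proposal stops short of the actual bookkeeping, which is the entire content of this item. For the record, the check does succeed: $D_{12}$ is the orientation $I_2\oslash I_2$ of $C_4$ and its complement is $\overleftrightarrow{K_2}\oslash\overleftrightarrow{K_2}$, which is $D_{15}$ (consistent with the paper's footnotes identifying $Q_2=D_{15}$ and $D_{12}=\co Q_2$), while $D_{13}$ and $D_{14}$, the two mixed order compositions of $I_2$ with $\overleftrightarrow{K_2}$, are likewise exchanged by complementation. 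So the statement is true, but as written your proof of item (2) is a plan rather than a verification; to be complete it must list the four complements and the four matching isomorphisms.
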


For directed co-graphs Observation \ref{obs18} leads to the next result.

\begin{propostion}
$\DC=\co\DC$.
\end{propostion}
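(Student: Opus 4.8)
The plan is to show the set equality $\DC = \co\DC$ by proving two inclusions, both of which reduce to the single fact from Lemma~\ref{le-com}(1): a digraph $G$ is a directed co-graph if and only if $\co G$ is a directed co-graph. First I would note that $\co\DC = \{\co G \mid G \in \DC\}$ by the definition of the complement class given in the preliminaries. To prove $\DC \subseteq \co\DC$, I would take an arbitrary $G \in \DC$ and observe that $G = \co(\co G)$, since complementation is an involution on digraphs (the complement of the complement restores every arc exactly once, as we do not allow loops or multiple edges). By Lemma~\ref{le-com}(1), $\co G \in \DC$, and therefore $G = \co(\co G)$ exhibits $G$ as the complement of a directed co-graph, so $G \in \co\DC$.

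For the reverse inclusion $\co\DC \subseteq \DC$, I would take an arbitrary $H \in \co\DC$, so $H = \co G$ for some $G \in \DC$. Again by Lemma~\ref{le-com}(1), since $G$ is a directed co-graph, so is $\co G = H$, giving $H \in \DC$. Combining the two inclusions yields $\DC = \co\DC$. An alternative, equivalent phrasing would invoke the forbidden-subdigraph characterization: by Theorem~\ref{th-ch-dco} we have $\DC = \free(\{D_1,\dots,D_8\})$, and by Lemma~\ref{th-free} we know $\co\,\free(F) = \free(\co F)$, so $\co\DC = \free(\co\{D_1,\dots,D_8\})$; then Observation~\ref{obs18}(1) states $\co\{D_1,\dots,D_8\} = \{D_1,\dots,D_8\}$, so the two forbidden sets coincide and hence the two classes coincide.

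I expect no genuine obstacle here, as the statement is essentially a restatement of the closure property already established in Lemma~\ref{le-com}(1); the only point requiring a moment's care is making explicit that complementation is an involution so that every member of $\DC$ is genuinely realized as a complement, which is what upgrades the closure statement ``$G \in \DC \Rightarrow \co G \in \DC$'' into the set \emph{equality} rather than mere inclusion $\co\DC \subseteq \DC$. Whichever route one takes, the proof is a two-line argument, so in the write-up I would simply cite Lemma~\ref{le-com}(1) (or, for a self-contained symmetric presentation, Theorem~\ref{th-ch-dco} together with Observation~\ref{obs18}(1) and Lemma~\ref{th-free}) and record both inclusions.
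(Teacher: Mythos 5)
Your proof is correct. The paper itself justifies the proposition purely by the forbidden-set route: it observes (Observation~\ref{obs18}) that $\{D_1,\ldots,D_8\}=\co\{D_1,\ldots,D_8\}$ and, implicitly via Lemma~\ref{th-free} and Theorem~\ref{th-ch-dco}, concludes $\co\DC=\free(\co\{D_1,\ldots,D_8\})=\free(\{D_1,\ldots,D_8\})=\DC$ --- which is exactly your ``alternative phrasing.'' Your primary argument instead leans on Lemma~\ref{le-com}(1) (closure of $\DC$ under complementation) together with the observation that complementation is an involution, which upgrades the one-directional closure $\co\DC\subseteq\DC$ to the full equality. Both routes are sound; the Lemma~\ref{le-com} route is marginally more self-contained in that it does not pass through the forbidden-subdigraph characterization at all, while the paper's route has the virtue of generalizing immediately to the other subclasses in the paper (it is the same template used to show, e.g., $\DTP\neq\co\DTP$ from the failure of self-complementarity of the corresponding forbidden set). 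Your explicit attention to the involution step is a point the paper glosses over and is worth recording.
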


\subsection{Oriented Co-Graphs}\label{sec-ocog}

Beside directed co-graphs and their subclasses we also will
restrict the these classes to oriented graphs by
omitting the series operation.

\begin{definition}[Oriented Co-Graphs] \label{ocog}
The class of {\em oriented co-graphs} is recursively defined as follows.
\begin{enumerate}[(i)]
\item Every digraph on a single vertex $(\{v\},\emptyset)$,
denoted by $\bullet$, is an oriented co-graph.
\item If $G_1$ and $G_2$ are oriented co-graphs, then
\begin{inparaenum}[(a)]
\item
$G_1\oplus G_2$ and
\item
$G_1\oslash G_2$  are oriented co-graphs.
\end{inparaenum}
\end{enumerate}
The class of oriented co-graphs is denoted by $\OC$.
\end{definition}

The recursive definition of  oriented and undirected co-graphs lead to the following
observation.

\begin{observation}\label{obs-oco1}
For every oriented co-graph $G$ the
underlying undirected graph $\un(G)$ is a co-graph.
\end{observation}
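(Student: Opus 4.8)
The statement to prove is Observation~\ref{obs-oco1}: for every oriented co-graph $G$, the underlying undirected graph $\un(G)$ is a co-graph.

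The plan is to proceed by structural induction on the recursive definition of oriented co-graphs (Definition~\ref{ocog}), mirroring the recursive structure of undirected co-graphs given in Section~\ref{sec-cg}. The key observation is that the two building operations for oriented co-graphs, namely disjoint union $\oplus$ and order composition $\oslash$, both project under $\un(\cdot)$ onto operations that are available for building undirected co-graphs.

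First I would handle the base case: if $G=\bullet$ is the single-vertex digraph $(\{v\},\emptyset)$, then $\un(G)=(\{v\},\emptyset)$ is the single-vertex graph, which is a co-graph by part~(i) of the co-graph definition. Next I would treat the inductive step. Suppose $G_1$ and $G_2$ are oriented co-graphs for which $\un(G_1)$ and $\un(G_2)$ are co-graphs. The main content is to verify the two identities
\[
\un(G_1\oplus G_2)=\un(G_1)\oplus\un(G_2)
\qquad\text{and}\qquad
\un(G_1\oslash G_2)=\un(G_1)\otimes\un(G_2),
\]
where on the left the symbols denote the directed operations and on the right the undirected ones. For the disjoint union this is immediate, since $\oplus$ adds no arcs between $V_1$ and $V_2$, so ignoring directions yields exactly the disjoint union of the underlying graphs. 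For the order composition, $\oslash$ adds every arc $(u,v)$ with $u\in V_1$, $v\in V_2$; applying $\un$ replaces each such arc by the edge $\{u,v\}$, which is precisely the set of edges added by the join $\otimes$ on the underlying graphs. Combined with the inductive hypothesis and closure of co-graphs under $\oplus$ and $\otimes$ (part~(ii) of the co-graph definition), this shows $\un(G)$ is a co-graph in each case, completing the induction.

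I do not expect any serious obstacle here; this is essentially a bookkeeping argument. The only point requiring a moment of care is the order-composition identity, where one must confirm that $\un$ collapses the one-directional arcs $(u,v)$ into the undirected edges $\{u,v\}$ exactly as the join prescribes, and that no multiplicities or spurious edges arise. Since oriented co-graphs never introduce both arcs $(u,v)$ and $(v,u)$ between distinct parts, there is no subtlety about deleting multiple edges in the definition of $\un$. It is worth remarking that this observation is the oriented analogue of Observation~\ref{obs-co1} for directed co-graphs, and indeed it follows a fortiori from that result once one notes that every oriented co-graph is a directed co-graph (as $\OC$ is obtained from $\DC$ by omitting the series operation $\otimes$); the short structural proof above, however, is self-contained.
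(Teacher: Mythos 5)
Your proof is correct and is exactly the argument the paper intends: the paper gives no explicit proof, merely noting that ``the recursive definition of oriented and undirected co-graphs lead to the following observation,'' and your structural induction with the identities $\un(G_1\oplus G_2)=\un(G_1)\oplus\un(G_2)$ and $\un(G_1\oslash G_2)=\un(G_1)\otimes\un(G_2)$ spells out precisely that reasoning. Your closing remark that the claim also follows from Observation~\ref{obs-co1} via $\OC\subseteq\DC$ is likewise consistent with the paper.
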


The reverse direction only holds under certain conditions, see Theorem \ref{th-ch-oco}.
The class of  oriented co-graphs was already analyzed by Lawler
in \cite{Law76} and \cite[Section 5]{CLS81} using the notation
of {\em transitive series parallel (TSP)} digraphs.
A digraph $G=(V,A)$ is called {\em transitive} if for
every pair $(u,v)\in A$ and $(v,w)\in A$ of arcs
with $u\neq w$ the arc $(u,w)$ also belongs to $A$.
For oriented co-graphs  the oriented chromatic number
and also the graph isomorphism problem can be solved in
linear time \cite{GKR19d,GKL20}.

\begin{lemma}\label{le-co-t}
Let $G$ be a digraph such that $G\in\free(\{\overleftrightarrow{K_2},D_1,D_5\})$, then
$G$ is transitive.
\end{lemma}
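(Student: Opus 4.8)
The plan is to argue by contradiction, exploiting the fact that a failure of transitivity is always witnessed on three vertices, so no configuration larger than an induced subdigraph on three vertices will ever be relevant. Concretely, I would assume $G$ is not transitive and derive an induced copy of one of the three excluded subdigraphs. By the definition of transitivity, non-transitivity means there are vertices $u,v,w$ with $(u,v),(v,w)\in A$ but $(u,w)\notin A$; these are automatically pairwise distinct, since arcs join distinct vertices and the hypothesis already forces $u\neq w$. Everything then takes place inside $G[\{u,v,w\}]$.

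The first key step is to use $G\in\free(\{\overleftrightarrow{K_2}\})$ to conclude that $G$ has no digon, i.e. between any two adjacent vertices there is exactly one arc, so $G$ is an \emph{oriented} graph. Applied to the triple, this excludes $(v,u)$ (which together with $(u,v)$ would form $\overleftrightarrow{K_2}$) and $(w,v)$ (which together with $(v,w)$ would form $\overleftrightarrow{K_2}$). Combined with the assumption $(u,w)\notin A$, the status of every ordered pair among $u,v,w$ is now fixed except for the single pair $(w,u)$. The second step is then a two-way case distinction on this last arc, which exhausts all remaining possibilities: if $(w,u)\notin A$ the arcs on $\{u,v,w\}$ are exactly $(u,v)$ and $(v,w)$, so $G[\{u,v,w\}]=\overrightarrow{P_3}$; and if $(w,u)\in A$ the arcs are exactly $(u,v),(v,w),(w,u)$, so $G[\{u,v,w\}]=\overrightarrow{C_3}$. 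Since $D_1$ and $D_5$ of Table~\ref{F-co} are precisely the two digon-free forbidden subdigraphs, namely the directed path $\overrightarrow{P_3}$ and the directed triangle $\overrightarrow{C_3}$, each case contradicts $G\in\free(\{\overleftrightarrow{K_2},D_1,D_5\})$. Hence no transitivity-violating triple exists and $G$ is transitive.

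The argument is short, so the hard part is conceptual rather than computational: one must recognize that $D_1$ and $D_5$ are exactly the \emph{minimal non-transitive oriented} configurations on three vertices, so that forbidding them alongside the digon $\overleftrightarrow{K_2}$ is enough to force global transitivity. The only place requiring genuine care is the bookkeeping in the case split, where I would verify that, after digons have been ruled out by $\overleftrightarrow{K_2}$-freeness, the induced subdigraph on the offending triple really is \emph{exactly} $\overrightarrow{P_3}$ or \emph{exactly} $\overrightarrow{C_3}$, with no stray additional arcs; this is immediate once all six ordered pairs among $u,v,w$ have been pinned down as above. I would therefore begin by confirming from Table~\ref{F-co} that $\{D_1,D_5\}=\{\overrightarrow{P_3},\overrightarrow{C_3}\}$, after which the two displayed cases complete the proof.
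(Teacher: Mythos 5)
Your proof is correct and is essentially the paper's argument in contrapositive form: the paper takes arcs $(u,v),(v,w)$, uses $\overleftrightarrow{K_2}$-freeness to rule out the digons, and then invokes $D_1,D_5$-freeness (i.e.\ the absence of induced $\overrightarrow{P_3}$ and $\overrightarrow{C_3}$) to force $(u,w)\in A$, exactly matching your case split on the pair $(w,u)$. No substantive difference.
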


\begin{proof}
Let  $(u,v),(v,w)\in A$ be two arcs of  $G=(V,A)$.
Since $G\in\free(\{\overleftrightarrow{K_2}\})$ we know that $(v,u),(w,v)\not\in A$.
Further since  $G\in\free(\{D_1,D_5\})$ we know that $u$ and $w$ are
connected either only by
$(u,w)\in A$ or by $(u,w)\in A$ and $(w,u)\in A$, which
implies that $G$ is transitive.
\end{proof}

The class $\OC$ can also be defined by forbidden subdigraphs.

\begin{theorem}\label{th-ch-oco}
Let $G$ be a digraph. The following properties are equivalent:
\begin{enumerate}
	\item\label{ch-oc1} $G$ is an oriented co-graph
	\item\label{ch-oc2} $G\in \free(\{D_1, D_5, D_8, \overleftrightarrow{K_2}\})$.
        \item\label{ch-oc3a} $G\in \free(\{D_1, D_5, \overleftrightarrow{K_2}\})$
		   and $\un(G)\in\free(\{P_4\})$.
        \item\label{ch-oc3} $G\in \free(\{D_1, D_5, \overleftrightarrow{K_2}\})$
		   and $\un(G)$ is a co-graph.

	\item\label{ch-oc4}  $G$ directed has NLC-width $1$ and $G\in \free(\{\overleftrightarrow{K_2}\})$.
	\item\label{ch-oc5}  $G$ has directed clique-width at most $2$ and $G\in \free(\{\overleftrightarrow{K_2}\})$.
        \item\label{ch-oc6} $G$ is  transitive and $G\in \free(\{\overleftrightarrow{K_2}, D_8\})$.
\end{enumerate}
\end{theorem}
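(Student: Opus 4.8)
The plan is to anchor every equivalence on characterization~(\ref{ch-oc1}) and on the already-proven structure theorem for directed co-graphs (Theorem~\ref{th-ch-dco}), exploiting that oriented co-graphs arise from directed co-graphs exactly by dropping the series composition, which is the only operation producing a bidirectional arc pair. Concretely, I would first establish the structural identity
\[
\OC = \DC \cap \free(\{\overleftrightarrow{K_2}\}).
\]
The inclusion ``$\subseteq$'' is immediate: an oriented co-graph is a directed co-graph (its operations $\oplus,\oslash$ are among those of $\DC$), and neither $\oplus$ (no arcs across the parts) nor $\oslash$ (arcs in one direction only) ever creates both $(u,v)$ and $(v,u)$, so by induction no $\overleftrightarrow{K_2}$ appears. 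For ``$\supseteq$'' I would induct on $|V(G)|$: a directed co-graph on at least two vertices decomposes as $G_1\,\mathrm{op}\,G_2$ with $\mathrm{op}\in\{\oplus,\otimes,\oslash\}$; if $\mathrm{op}=\otimes$ then any $u\in V_1, v\in V_2$ induce a $\overleftrightarrow{K_2}$, so under $G\in\free(\{\overleftrightarrow{K_2}\})$ we must have $\mathrm{op}\in\{\oplus,\oslash\}$, and the parts (being induced subdigraphs) are again $\overleftrightarrow{K_2}$-free, hence oriented co-graphs by induction.

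Given this identity, Theorem~\ref{th-ch-dco}(\ref{td-2}) together with Lemma~\ref{th-free} yields $\OC=\free(\{D_1,\dots,D_8\})\cap\free(\{\overleftrightarrow{K_2}\})=\free(\{D_1,\dots,D_8,\overleftrightarrow{K_2}\})$. The equivalence (\ref{ch-oc1})$\Leftrightarrow$(\ref{ch-oc2}) then reduces to one inspection of Table~\ref{F-co}: each of $D_2,D_3,D_4,D_6,D_7$ contains an induced $\overleftrightarrow{K_2}$ (whereas $D_1,D_5,D_8$ are oriented), so these five are redundant once $\overleftrightarrow{K_2}$ is forbidden, giving $\free(\{D_1,\dots,D_8,\overleftrightarrow{K_2}\})=\free(\{D_1,D_5,D_8,\overleftrightarrow{K_2}\})$. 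For the width statements I would combine the identity with the directed theorem: (\ref{ch-oc4}) follows cleanly from Theorem~\ref{th-ch-dco}(\ref{td-3}) (directed co-graph $\Leftrightarrow$ directed NLC-width $1$); for (\ref{ch-oc5}) I would use Theorem~\ref{th-ch-dco}(\ref{td-4}) and note additionally that $D_2,D_3$ contain $\overleftrightarrow{K_2}$, so $\free(\{\overleftrightarrow{K_2}\})\subseteq\free(\{D_2,D_3\})$ and forbidding $\overleftrightarrow{K_2}$ already enforces $G\in\free(\{D_2,D_3\})$.

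For the $P_4$-based statements I would route through (\ref{ch-oc1}) rather than analyze orientations of $P_4$ directly. Since every oriented co-graph $G$ has $\un(G)$ a co-graph (Observation~\ref{obs-oco1}), i.e.\ $\un(G)\in\free(\{P_4\})$, and $\OC\subseteq\free(\{D_1,D_5,\overleftrightarrow{K_2}\})$, the implication (\ref{ch-oc1})$\Rightarrow$(\ref{ch-oc3a}) is immediate; conversely, the only extra inspection fact needed is $\un(D_8)=P_4$, for then $G\in\free(\{D_1,D_5,\overleftrightarrow{K_2}\})$ with $\un(G)\in\free(\{P_4\})$ cannot contain an induced $D_8$ (that would force an induced $P_4$ in $\un(G)$), so (\ref{ch-oc3a}) implies (\ref{ch-oc2}). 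The equivalence (\ref{ch-oc3a})$\Leftrightarrow$(\ref{ch-oc3}) is just $\forb(\C)=\{P_4\}$, exactly as in Theorem~\ref{th-ch-dco}. Finally, for (\ref{ch-oc6}) I would invoke Lemma~\ref{le-co-t}: an oriented co-graph lies in $\free(\{\overleftrightarrow{K_2},D_1,D_5\})$ and is therefore transitive, and it also avoids $D_8$, giving (\ref{ch-oc1})$\Rightarrow$(\ref{ch-oc6}); for the converse I would use that $D_1$ and $D_5$ are themselves non-transitive three-vertex digraphs, so a transitive $G$ automatically belongs to $\free(\{D_1,D_5\})$, and together with $G\in\free(\{\overleftrightarrow{K_2},D_8\})$ this reproduces characterization~(\ref{ch-oc2}).

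The steps carrying the genuine content are the structural identity $\OC=\DC\cap\free(\{\overleftrightarrow{K_2}\})$ and the small collection of inspection facts (which $D_i$ contain $\overleftrightarrow{K_2}$, that $\un(D_8)=P_4$, and that $D_1,D_5$ are non-transitive). I expect the $\otimes$-elimination in the ``$\supseteq$'' direction of the identity to be the main obstacle, as it is the one place where I must reason about all possible decompositions of $G$ instead of merely quoting Theorem~\ref{th-ch-dco}; everything else is either a direct application of the cited results or a finite check on the tables.
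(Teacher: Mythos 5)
Your proposal is correct and follows essentially the same route as the paper: it reduces everything to Theorem~\ref{th-ch-dco} via the identity $\OC=\DC\cap\free(\{\overleftrightarrow{K_2}\})$, uses Lemma~\ref{le-co-t} and the non-transitivity of $D_1,D_5$ for statement~(\ref{ch-oc6}), and handles the underlying-graph statements via Observation~\ref{obs-oco1} and the fact that $D_8$ is a biorientation of $P_4$ (the paper's Observation~\ref{forb-u-d2}). The only difference is that you spell out steps the paper leaves implicit, namely the induction showing that a $\overleftrightarrow{K_2}$-free directed co-graph admits a series-free construction, and the redundancy of $D_2,D_3,D_4,D_6,D_7$ (and of the condition $G\in\free(\{D_2,D_3\})$ in the clique-width characterization) once $\overleftrightarrow{K_2}$ is forbidden.
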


\begin{proof}
$(\ref{ch-oc1})\Rightarrow(\ref{ch-oc2})$ If $G$ is an oriented co-graph, then
$G$ is a directed co-graph and by Theorem \ref {th-ch-dco} it holds that $G\in \free(\{D_1,\ldots, D_8\})$. Further $G\in \free(\{\overleftrightarrow{K_2}\})$
because of the missing series composition. This leads to $G\in \free(\{D_1, D_5, D_8, \overleftrightarrow{K_2}\})$.

$(\ref{ch-oc2})\Rightarrow(\ref{ch-oc1})$
If $G\in \free(\{D_1, D_5, D_8, \overleftrightarrow{K_2}\})$ then $G\in \free(\{D_1,\ldots, D_8\})$ and is a directed co-graph.
Since $G\in \free(\{\overleftrightarrow{K_2}\})$
there is no series operation in any
construction of $G$ which implies that $G$ is an oriented co-graph.

$(\ref{ch-oc3a})\Leftrightarrow(\ref{ch-oc3})$ Since $\forb(\C)=\{P_4\}$.

$(\ref{ch-oc2})\Rightarrow(\ref{ch-oc6})$ By Lemma \ref{le-co-t}
we know that $G$ is transitive.

$(\ref{ch-oc6})\Rightarrow(\ref{ch-oc2})$ If $G$ is transitive, then
$G\in\free(\{D_1, D_5\})$.

$(\ref{ch-oc1})\Leftrightarrow(\ref{ch-oc4})$ and  $(\ref{ch-oc1})\Leftrightarrow(\ref{ch-oc5})$ By Theorem \ref {th-ch-dco}.

$(\ref{ch-oc2})\Rightarrow(\ref{ch-oc3})$ By Observation \ref{obs-oco1}. 

$(\ref{ch-oc3a})\Rightarrow(\ref{ch-oc2})$
by Observation \ref{forb-u-d2}.
\end{proof}

\begin{observation}
Every oriented co-graph is a DAG.
\end{observation}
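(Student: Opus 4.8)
The plan is to proceed by structural induction on the recursive definition of oriented co-graphs (Definition \ref{ocog}). Recall that a digraph is a DAG precisely when it contains no directed cycle, equivalently when it admits a topological ordering of its vertices. I would in fact prove the slightly stronger-looking statement that every oriented co-graph admits a topological ordering, since this is the property that composes cleanly under the two operations allowed in the construction, namely disjoint union $\oplus$ and order composition $\oslash$.

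For the base case, the single-vertex digraph $(\{v\},\emptyset)$ has no arcs and is trivially acyclic. For the inductive step, suppose $G_1=(V_1,E_1)$ and $G_2=(V_2,E_2)$ are oriented co-graphs already known to be DAGs, with topological orderings $\sigma_1$ of $V_1$ and $\sigma_2$ of $V_2$. In the disjoint union $G_1\oplus G_2$ there are no arcs between $V_1$ and $V_2$, so any directed cycle would lie entirely inside one $G_i$; by the induction hypothesis none exists, and the concatenation of $\sigma_1$ and $\sigma_2$ is a topological ordering. In the order composition $G_1\oslash G_2$, every added arc is directed from $V_1$ to $V_2$ and no arc runs from $V_2$ back to $V_1$; hence listing all vertices of $V_1$ (in the order $\sigma_1$) before all vertices of $V_2$ (in the order $\sigma_2$) respects every arc — arcs internal to $G_1$ or $G_2$ by the inductive orderings, and crossing arcs because they all point forward. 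The only point requiring a moment of care is confirming that a directed cycle in $G_1\oslash G_2$ cannot exploit a crossing arc: any cycle visiting both parts would need an arc from $V_2$ to $V_1$, which does not exist, so every cycle is confined to a single $G_i$ and is ruled out by the induction hypothesis. This is routine, and I do not expect a genuine obstacle.

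As an alternative, self-containedness aside, I could invoke Theorem \ref{th-ch-oco}, by which every oriented co-graph is transitive and lies in $\free(\{\overleftrightarrow{K_2}\})$. A transitive digraph containing a directed cycle $v_1\to v_2\to\cdots\to v_k\to v_1$ would, by repeated application of transitivity, also contain the arc $(v_1,v_k)$; together with the arc $(v_k,v_1)$ this yields both orientations between $v_1$ and $v_k$, i.e.\ an induced $\overleftrightarrow{K_2}$, contradicting $G\in\free(\{\overleftrightarrow{K_2}\})$. This gives a one-line argument but leans on the characterization theorem, whereas the inductive proof above is elementary and directly follows the defining operations.
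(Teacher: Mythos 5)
Your proof is correct. The paper states this as an observation without any proof, and your structural induction over the two defining operations (disjoint union and order composition, each of which visibly cannot create a directed cycle crossing between the two parts) is exactly the routine argument the authors are implicitly relying on. Your alternative one-line derivation from Theorem \ref{th-ch-oco} (transitivity plus $\free(\{\overleftrightarrow{K_2}\})$ forbids a directed cycle) is also valid and is a nice cross-check, though it depends on the characterization theorem rather than on the definition alone.
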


\begin{theorem}[\cite{CLS81}]
A graph $G$ is a co-graph if and only if there exists an orientation $G'$ of  $G$
such that $G'$ is an oriented co-graph.
\end{theorem}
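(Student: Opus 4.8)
The plan is to prove the two implications separately, with the reverse implication being immediate from earlier results and the forward implication following by a structural induction on the recursive construction of a co-graph.

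For the direction starting from an oriented co-graph, suppose $G'$ is an orientation of $G$ that happens to be an oriented co-graph. By the definition of an orientation we have $\un(G')=G$, and by Observation \ref{obs-oco1} the underlying undirected graph of any oriented co-graph is a co-graph; hence $G=\un(G')$ is a co-graph. This settles one direction with no further work.

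For the converse, suppose $G$ is a co-graph. I would induct on the co-tree witnessing $G\in\C$, building in parallel an oriented co-graph $G'$ with $\un(G')=G$. In the base case a single vertex is simultaneously a co-graph and an oriented co-graph and is its own orientation. For the inductive step I distinguish the two co-graph operations. If $G=G_1\oplus G_2$, the induction hypothesis supplies oriented co-graphs $G_1',G_2'$ with $\un(G_i')=G_i$; then $G_1'\oplus G_2'$ is again an oriented co-graph and, since disjoint union neither creates nor destroys adjacencies, $\un(G_1'\oplus G_2')=\un(G_1')\oplus\un(G_2')=G_1\oplus G_2=G$. If instead $G=G_1\otimes G_2$, I replace the join by the order composition: $G':=G_1'\oslash G_2'$ is an oriented co-graph by definition, and the key computation is that $\un(G_1'\oslash G_2')=\un(G_1')\otimes\un(G_2')$. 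Indeed, the order composition inserts exactly the arcs $(u,v)$ for $u\in V_1,\,v\in V_2$, so after forgetting directions one obtains precisely all edges $\{u,v\}$ with $u\in V_1,\,v\in V_2$, which is exactly the join. Thus $\un(G')=G_1\otimes G_2=G$, completing the induction.

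The only point requiring care is to confirm that the digraph produced is genuinely an orientation of $G$, i.e.\ that each edge of $G$ is covered by exactly one arc and that no pair of antiparallel arcs arises. Within each order-composition step only the single arc $(u,v)$ is introduced per crossing pair, never its reverse, and by the induction hypothesis $G_1',G_2'$ are themselves orientations, so no bidirectional edge is ever formed. Equivalently, the constructed $G'$ lies in $\free(\{\overleftrightarrow{K_2}\})$, matching the characterization of oriented co-graphs in Theorem \ref{th-ch-oco}. I expect this verification to be the main, though routine, obstacle, since it is where the asymmetry between the symmetric join $\otimes$ and the directed order composition $\oslash$ must be reconciled with the requirement that an orientation use exactly one arc per edge.
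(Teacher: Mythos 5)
Your proof is correct. Note, however, that the paper does not actually prove this statement: it is quoted as a known result from Corneil, Lerchs and Stewart-Burlingham, so there is no in-paper argument to compare against line by line. The closest analogue in the paper is the proof of Theorem \ref{th-ori-tp} (the same statement for trivially perfect graphs), which proceeds quite differently: there the forward direction invokes the fact that the graph is a comparability graph, takes a transitive orientation, and then checks membership via the forbidden-subdigraph characterization ($D_8$, $D_{12}$, $\overleftrightarrow{K_2}$), while the reverse direction argues that $D_8$ and $D_{12}$ are the unique transitive orientations of $P_4$ and $C_4$. Your route is a direct structural induction on the co-tree, replacing each join $\otimes$ by an order composition $\oslash$ and using $\un(G_1'\oslash G_2')=\un(G_1')\otimes\un(G_2')$; the reverse direction is immediate from Observation \ref{obs-oco1}. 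This is more elementary and self-contained (no appeal to comparability graphs or to the forbidden-subdigraph theorems), and it yields an explicit linear-time construction of the orientation from the co-tree; the paper's forbidden-subgraph style of argument, by contrast, generalizes more uniformly to the other subclasses treated there, where a clean recursive replacement of operations is not always available. Your final paragraph correctly addresses the one point that genuinely needs checking, namely that the constructed digraph is an orientation (one arc per edge, no antiparallel pair), which holds because $\oslash$ introduces only the arcs $(u,v)$ with $u\in V_1$, $v\in V_2$ and never their reverses.
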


\subsection{Series-parallel partial order digraphs}\label{sec-spd}

We recall the definitions of from \cite{BG18} which are based on \cite{VTL82}.
A series-parallel partial order is a partially ordered set $(X,\leq)$ that is
constructed by the series composition and
the parallel composition operation starting with a single element.
\begin{itemize}
\item
Let $(X_1 ,\leq)$ and $(X_2 , \leq)$ be two disjoint series-parallel partial
orders, then distinct elements $x,y \in X_1 \cup X_2$ of a series
composition\footnote{Note that the series composition in this case corresponds to the order composition in the definition of directed co-graphs.} have the same order they have in $X_1$ or $X_2$. Respectively, this holds if both of them are from the same set, and $x \leq  y$, if $x \in X_1$ and $y \in X_2$.
\item
Two elements $x, y \in X_1 \cup X_2$ of a parallel composition are comparable if and only if both of them are in $X_1$ or both in $X_2$, while they keep their corresponding order.
\end{itemize}

\begin{definition}[Series-parallel partial order digraphs]
A {\em series-parallel partial order digraph} $G=(V,E)$ is a digraph, where $(V,\leq)$
is a series-parallel partial order  and $(u,v)\in E$ if and only if
$u \neq v$ and $u \leq  v$.

The class of series-parallel partial order digraphs is denoted by $\SPO$.
\end{definition}

For a digraph $G=(V,E)$ an edge $(u,v)\in E$ is symmetric, if $(v,u)\in E$. Thus each bidirectional arc is symmetric. Further, an edge is asymmetric, if it is not symmetric, i.e. each edge with only one direction. We define the symmetric part of $G$ as $sym(G)$, which is the spanning subdigraph of $G$ that contains exactly the symmetric arcs of $G$. Analogously we define the asymmetric part of $G$ as $asym(G)$, which is the spanning subdigraph with only asymmetric edges.

Moreover, Bechet et al. showed in \cite{BGR97} the following property of directed co-graphs.

\begin{lemma}[\cite{BGR97}]
  For every directed co-graph $G$ it holds that the asymmetric part of $G$ is a series-parallel partial order digraph and for the symmetric part the underlying undirected graph a co-graph.
\end{lemma}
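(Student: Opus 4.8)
The plan is to proceed by structural induction along the binary di-co-tree of $G$, following the three operations $\oplus$, $\otimes$, and $\oslash$ used to build a directed co-graph. For the base case $G=\bullet$, both $sym(G)$ and $asym(G)$ consist of a single vertex with no arcs: the former has underlying undirected graph $\bullet$, which is a co-graph, and the latter is the $\SPO$ digraph of a one-element partial order. The inductive step rests on a single combinatorial observation about how the operations act on cross arcs: the disjoint union $\oplus$ introduces no arcs between $V_1$ and $V_2$; the series composition $\otimes$ introduces only the bidirectional (hence symmetric) cross arcs $(u,v),(v,u)$ for $u\in V_1$, $v\in V_2$; and the order composition $\oslash$ introduces only the one-directional (hence asymmetric) cross arcs $(u,v)$ for $u\in V_1$, $v\in V_2$, since no arc from $V_2$ to $V_1$ is ever added.

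First I would track the symmetric part. As an arc lies in $sym(G)$ exactly when both of its directions are present, the observation gives $sym(G_1\oplus G_2)=sym(G_1)\oplus sym(G_2)$ and $sym(G_1\oslash G_2)=sym(G_1)\oplus sym(G_2)$ (the asymmetric cross arcs of $\oslash$ never become symmetric), while $sym(G_1\otimes G_2)=sym(G_1)\otimes sym(G_2)$ (all cross arcs of $\otimes$ are symmetric). Passing to underlying undirected graphs, $\un(sym(G))$ is therefore obtained from the single-vertex graph by the undirected disjoint union and join operations, so by the definition of the class $\C$ it is a co-graph.

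Next I would track the asymmetric part. Since an arc lies in $asym(G)$ exactly when only one of its directions is present, the same observation yields $asym(G_1\oplus G_2)=asym(G_1)\oplus asym(G_2)$ and $asym(G_1\otimes G_2)=asym(G_1)\oplus asym(G_2)$ (the symmetric cross arcs of $\otimes$ contribute nothing to the asymmetric part), whereas $asym(G_1\oslash G_2)=asym(G_1)\oslash asym(G_2)$ (all cross arcs of $\oslash$ survive). By the induction hypothesis, $asym(G_1)$ and $asym(G_2)$ are the $\SPO$ digraphs of series-parallel partial orders $(X_1,\le_1)$ and $(X_2,\le_2)$. I would then verify that the digraph operation $\oplus$ on these corresponds exactly to the parallel composition of $(X_1,\le_1)$ and $(X_2,\le_2)$, and that the digraph operation $\oslash$ corresponds exactly to their series composition; in each case the arc set of the resulting digraph is precisely the comparability relation of the composed partial order, so $asym(G)$ is again the $\SPO$ digraph of a series-parallel partial order, completing the induction.

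The step I expect to require the most care is this last correspondence for the asymmetric part: one must check that adding all arcs $(u,v)$ with $u\in V_1$, $v\in V_2$ to $asym(G_1)$ and $asym(G_2)$ reproduces exactly the comparability digraph of the series composition, rather than merely some superset of arcs. In particular one needs that the result remains transitive and antisymmetric and introduces no spurious comparabilities, and here the fact that $\oslash$ adds no reverse arcs, keeping the asymmetric part a genuine strict order, is what makes the partial-order structure go through.
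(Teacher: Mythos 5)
Your proposal is correct and follows essentially the same route as the paper: the paper's sketch of ``exchange each order composition with a disjoint union'' (for the symmetric part) and ``exchange each series composition with a disjoint union'' (for the asymmetric part) in the di-co-tree is precisely your three identities for $sym$ and $asym$ under $\oplus$, $\otimes$, $\oslash$, organized as a structural induction. You merely spell out in more detail what the paper leaves as an easy observation, including the identification of $\SPO$ digraphs with oriented co-graphs, which the paper also invokes.
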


The class of series-parallel partial ordered digraphs is equal to the class of oriented co-graphs, since they have exactly the same recursive structure. Thus this lemma is easy to prove with the following idea. Let $G$ be a directed co-graph and $T_G$ its corresponding di-co-tree.
\begin{itemize}
  \item
  \textbf{symmetric part:} Exchange each order composition with a directed union composition. Since there are no more oriented arcs left, this tree represents a co-graph.
  \item
  \textbf{asymmetric part:} Exchange each series composition with a directed union composition. Since there are no more bidirectional edges left, this tree represents an oriented co-graph, e.g. a series-parallel partial order digraph.
\end{itemize}

\subsection{Directed trivially perfect graphs}\label{sec-dtpg}

\begin{definition}[Directed trivially perfect graphs] \label{dtpg}
The class of {\em directed trivially perfect graphs} is recursively defined as follows.
\begin{enumerate}[(i)]
\item Every digraph on a single vertex $(\{v\},\emptyset)$,
denoted by $\bullet$, is a  directed trivially perfect  graph.

\item If $G_1$ and $G_2$ are directed trivially perfect  graphs,
then $G_1\oplus G_2$  is a  directed trivially perfect  graph.

\item If $G$ is a  directed trivially perfect graph,
then
\begin{inparaenum}[(a)]
\item
$G\oslash \bullet$,
\item
$\bullet\oslash G$, and
\item
$G\otimes \bullet$  are  directed trivially perfect  graphs.
\end{inparaenum}

\end{enumerate}
The class of directed trivially perfect graphs is denoted by $\DTP$.
\end{definition}

The recursive definition of directed and undirected  trivially perfect graphs lead to the following
observation.

\begin{observation}\label{obs-tp1}
For every directed trivially perfect graph $G$ the
underlying undirected graph $\un(G)$ is a trivially perfect graph.
\end{observation}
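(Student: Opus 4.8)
The plan is to proceed by structural induction on the recursive definition of directed trivially perfect graphs, mirroring the standard argument that shows $\un(G)$ is a co-graph (Observation \ref{obs-co1}). The base case is immediate: the single-vertex digraph $\bullet$ has underlying undirected graph equal to the single-vertex graph, which is a trivially perfect graph, since it is generated by the $\TP$-rules trivially. For the inductive step I would assume that whenever $G_1, G_2$ are smaller directed trivially perfect graphs, their underlyings $\un(G_1), \un(G_2)$ are (undirected) trivially perfect graphs, and then check each of the four constructors in Definition \ref{dtpg}.

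The key observation driving the induction is that each directed operation projects, under $\un(\cdot)$, onto an allowed undirected $\TP$-operation as recorded in Table \ref{t-uc}. Concretely, for the disjoint union I would note that $\un(G_1 \oplus G_2) = \un(G_1)\oplus \un(G_2)$, which is exactly the disjoint-union rule permitted for trivially perfect graphs. For the three remaining constructors, which all attach a single new vertex, the point is that passing to the underlying undirected graph collapses the order composition $\oslash$ and the series composition $\otimes$ to the same undirected join: both $\un(G\oslash\bullet)$ and $\un(\bullet\oslash G)$ and $\un(G\otimes\bullet)$ equal $\un(G)\otimes\bullet$, the join of $\un(G)$ with a single vertex. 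This is precisely the restricted join operation $G_1\otimes\bullet$ that generates trivially perfect graphs in Table \ref{t-uc}. Hence in every case the underlying graph is obtained from $\TP$-graphs by a permitted $\TP$-operation, so by the recursive definition of $\TP$ it is itself trivially perfect, closing the induction.

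The main technical point to be careful about, rather than a genuine obstacle, is verifying that the underlying undirected graph of each order/series composition with a single vertex really is the undirected join with that vertex, i.e.\ that no arcs are lost or spuriously created when directions are forgotten and multiplicities are removed. For $G\otimes\bullet$ this is clear since the series composition already inserts both arcs between the new vertex and every old vertex, and $\un$ merges the bidirectional pair into one edge. For $G\oslash\bullet$ and $\bullet\oslash G$ the order composition inserts exactly one arc (in a fixed direction) between the new vertex and each old vertex, and forgetting orientation again yields exactly the join edges. Thus the asymmetry between $\oslash$ and $\otimes$ in the directed world disappears under $\un$, and all three single-vertex constructors land on the single admissible undirected rule, which is what makes the induction go through cleanly.
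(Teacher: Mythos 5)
Your proof is correct and is exactly the argument the paper intends: the observation is justified there only by the remark that the recursive definitions of directed and undirected trivially perfect graphs align, and your induction spells out precisely that alignment (disjoint union maps to disjoint union, and all three single-vertex compositions collapse under $\un$ to the join $\un(G)\otimes\bullet$ permitted for $\TP$). No gaps; you have simply written out in full what the paper leaves implicit.
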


The reverse direction
only holds under certain conditions, see Theorem \ref{ch-dtp}.

\begin{lemma}[\cite{GRR18}]\label{le-tt}
For every digraph $G$ the following statements are equivalent.
\begin{enumerate}
\item \label{s11b} $G$ is a transitive tournament.
\item \label{s11c} $G$ is an acyclic tournament.
\item \label{s11d} $G\in\free(\{\overrightarrow{C_3}\})$  and $G$ is a tournament.
\item \label{s11h} $G$ can be constructed from the one-vertex graph $K_1$ by repeatedly adding
an  out-dominating vertex.
\item \label{s11i} $G$ can be constructed from the one-vertex graph $K_1$ by repeatedly adding
an in-dominated  vertex.
\end{enumerate}
\end{lemma}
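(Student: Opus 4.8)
The plan is to establish the five-way equivalence through a short cycle of implications among the three ``static'' characterizations \ref{s11b}, \ref{s11c}, \ref{s11d}, and then to connect these to the two ``constructive'' characterizations \ref{s11h} and \ref{s11i} by an induction on the number of vertices.

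First I would handle $\ref{s11b}\Rightarrow\ref{s11c}\Rightarrow\ref{s11d}\Rightarrow\ref{s11b}$. For $\ref{s11b}\Rightarrow\ref{s11c}$: if a tournament contained a directed cycle, iterating transitivity along the cycle would force both arcs $(u,w)$ and $(w,u)$ between some pair of its vertices, contradicting the tournament property; hence a transitive tournament is acyclic. For $\ref{s11c}\Rightarrow\ref{s11d}$: since $\overrightarrow{C_3}$ is itself a directed cycle, an acyclic tournament cannot contain it as an induced subdigraph, so $G\in\free(\{\overrightarrow{C_3}\})$. The crucial step is $\ref{s11d}\Rightarrow\ref{s11b}$: given arcs $(u,v),(v,w)\in A$ with $u\neq w$, the tournament property yields either $(u,w)\in A$ or $(w,u)\in A$, but $(w,u)\in A$ would make the triple $\{u,v,w\}$ induce the directed triangle $u\to v\to w\to u$, i.e.\ a $\overrightarrow{C_3}$; so $(u,w)\in A$ and $G$ is transitive. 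This implication carries the real content, as it shows that forbidding the single smallest directed cycle already forces full transitivity in a tournament.

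Next I would connect \ref{s11b} to \ref{s11h} by induction on $|V|$. The key observation for the forward direction is that an acyclic tournament on at least one vertex always has a source: a first element of a topological order has in-degree $0$, hence in a tournament has out-degree $|V|-1$, so it is out-dominating. Deleting it leaves a smaller transitive tournament, and the induction hypothesis provides a construction of the remainder to which this source is appended as the final out-dominating vertex. For the converse I would verify that adding an out-dominating vertex $x$ to a transitive tournament $H$ preserves both properties: $H+x$ is visibly a tournament, and the only new arcs are $(x,u)$ for $u\in V(H)$, so the only transitivity conditions to check are of the form $(x,u),(u,w)$, which hold because $x$ dominates every vertex of $H$.

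Finally, $\ref{s11b}\Leftrightarrow\ref{s11i}$ I would obtain essentially for free from the converse digraph: $G$ is a transitive tournament if and only if $G^c$ is, since reversing every arc preserves both the tournament property and transitivity, and an out-dominating vertex of $G^c$ is exactly an in-dominated vertex of $G$; thus characterization \ref{s11h} applied to $G^c$ translates into \ref{s11i} for $G$. The main obstacle is the pair of genuinely nontrivial steps $\ref{s11d}\Rightarrow\ref{s11b}$ and the existence-of-a-source argument inside $\ref{s11b}\Rightarrow\ref{s11h}$; everything else follows directly from the definitions of tournament and transitivity.
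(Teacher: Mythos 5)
Your proof is correct. Note, however, that the paper itself gives no proof of this lemma: it is imported by citation from \cite{GRR18}, so there is no in-paper argument to compare yours against. Your chain $\ref{s11b}\Rightarrow\ref{s11c}\Rightarrow\ref{s11d}\Rightarrow\ref{s11b}$ is sound; in particular the key step $\ref{s11d}\Rightarrow\ref{s11b}$ correctly exploits that in a tournament the triangle $u\to v\to w\to u$ is automatically an \emph{induced} $\overrightarrow{C_3}$ (exactly one arc per pair), and your $\ref{s11b}\Rightarrow\ref{s11c}$ argument is compatible with the paper's definition of transitivity (which only demands $(u,w)\in A$ when $u\neq w$; the iteration along a cycle of length at least $3$ never needs the excluded case, and $2$-cycles are ruled out by the tournament property itself). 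The induction for $\ref{s11b}\Leftrightarrow\ref{s11h}$ via existence of a source in an acyclic tournament, and the reduction of \ref{s11i} to \ref{s11h} through the converse digraph $G^c$ (which the paper defines and which swaps out-dominating and in-dominated vertices while preserving transitivity and the tournament property), are both complete and match the standard argument one would expect to find in \cite{GRR18}.
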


The class $\DTP$ can also be defined by forbidden induced subdigraphs.
It holds that

\begin{theorem}\label{ch-dtp}
Let $G$ be a digraph. The following properties are equivalent:
\begin{enumerate}
	\item\label{ch-dtp1}  $G$ is a directed trivially perfect graph.
	\item\label{ch-dtp2}  $G\in\free(\{D_1, \dots, D_{15}\})$.
        \item\label{ch-dtp4}  $G\in\free(\{D_1,D_2,D_3,D_4,D_5,D_6,D_{10},D_{11},D_{13},D_{14},D_{15}\})$
		   and $\un(G)\in \free(\{C_4,P_4\})$.
        \item\label{ch-dtp3}  $G\in\free(\{D_1,D_2,D_3,D_4,D_5,D_6,D_{10},D_{11},D_{13},D_{14},D_{15}\})$
		   and $\un(G)$ is a trivially perfect graph.
\end{enumerate}
\end{theorem}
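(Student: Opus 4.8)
The plan is to prove the equivalences along the chain $(\ref{ch-dtp4})\Leftrightarrow(\ref{ch-dtp3})$, $(\ref{ch-dtp2})\Leftrightarrow(\ref{ch-dtp3})$, $(\ref{ch-dtp1})\Rightarrow(\ref{ch-dtp3})$, and $(\ref{ch-dtp3})\Rightarrow(\ref{ch-dtp1})$, closely following the template already used for Theorem \ref{th-ch-dco}. The equivalence $(\ref{ch-dtp4})\Leftrightarrow(\ref{ch-dtp3})$ is immediate: since $\forb(\TP)=\{P_4,C_4\}$ by Table \ref{t-uc}, the underlying graph $\un(G)$ is trivially perfect if and only if $\un(G)\in\free(\{C_4,P_4\})$.

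For $(\ref{ch-dtp2})\Leftrightarrow(\ref{ch-dtp3})$ I would exploit that the four subdigraphs appearing in the full list but not in the reduced list are exactly the biorientations of the two undirected obstructions of $\TP$: namely $D_7,D_8$ with $\un(D_7)=\un(D_8)=P_4$ and $D_9,D_{12}$ with $\un(D_9)=\un(D_{12})=C_4$, which can be read off the defining tables. Assuming $(\ref{ch-dtp2})$, the reduced list is avoided trivially; furthermore $G$ avoids $D_1,\dots,D_8$, so $G$ is a directed co-graph and $\un(G)$ is a co-graph by Observation \ref{obs-co1}, hence $P_4$-free, while $C_4$-freeness of $\un(G)$ follows from Observation \ref{forb-u-d1} once a finite check confirms that every biorientation of $C_4$ contains a member of $\{D_1,\dots,D_{15}\}$. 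Conversely, assuming $(\ref{ch-dtp3})$, the condition $\un(G)\in\free(\{C_4,P_4\})$ together with Observation \ref{forb-u-d2} shows that $G$ avoids all biorientations of $P_4$ and $C_4$, in particular $D_7,D_8,D_9,D_{12}$, which combined with the reduced list gives $G\in\free(\{D_1,\dots,D_{15}\})$.

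The implication $(\ref{ch-dtp1})\Rightarrow(\ref{ch-dtp3})$ is the easy structural direction. If $G\in\DTP$, then $\un(G)$ is trivially perfect by Observation \ref{obs-tp1}, and since every $\DTP$ operation is a special case of a directed co-graph operation, $G$ is a directed co-graph and therefore avoids $D_1,\dots,D_6$. The remaining memberships $G\in\free(\{D_{10},D_{11},D_{13},D_{14},D_{15}\})$ hold because $\DTP$ is hereditary and a finite inspection shows that none of these five digraphs is itself a directed trivially perfect graph.

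The substance of the theorem is $(\ref{ch-dtp3})\Rightarrow(\ref{ch-dtp1})$, which I would establish by induction on $|V(G)|$. As argued above, $(\ref{ch-dtp3})$ forces $G$ to be a directed co-graph with $\un(G)$ trivially perfect, and $(\ref{ch-dtp3})$ is hereditary, so every induced subdigraph satisfies the induction hypothesis. If $\un(G)$ is disconnected, then $G$ is the disjoint union of the subdigraphs induced by its components, each in $\DTP$ by induction, and $\DTP$ is closed under $\oplus$. If $\un(G)$ is connected, then, being trivially perfect, it has a universal vertex, and the key lemma is that $G$ then admits a universal vertex $v$ that is \emph{uniformly} adjacent to all other vertices, i.e.\ out-dominating, in-dominated, or bidirectionally adjacent; in these three cases $G$ equals $\bullet\oslash(G-v)$, $(G-v)\oslash\bullet$, or $(G-v)\otimes\bullet$, and $G-v\in\DTP$ by induction, so $G\in\DTP$. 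Proving this key lemma is the main obstacle. The idea is that mixed adjacencies of a universal vertex are forbidden: an out-only and an in-only neighbour that are non-adjacent induce an $\overrightarrow{P_3}$ (hence $D_1$ or $D_5$, contradicting that $G$ is a directed co-graph), whereas the cases where the two distinguished neighbours are adjacent, or where a bidirectional neighbour is combined with a one-directional one, are precisely the three- and four-vertex configurations excluded by $D_9,\dots,D_{15}$; when the first chosen universal vertex is mixed yet all small obstructions are avoided, transitivity of the asymmetric part of a directed co-graph (a series-parallel partial order) allows one to exhibit another universal vertex that is uniform. Verifying this finite case distinction over the listed forbidden subdigraphs is the technical heart of the argument.
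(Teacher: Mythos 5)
Your handling of the easy equivalences is correct and, for $(\ref{ch-dtp2})\Leftrightarrow(\ref{ch-dtp3})$, somewhat more explicit than the paper's (which just cites Observations \ref{obs-tp1} and \ref{forb-u-d2}); identifying $D_7,D_8,D_9,D_{12}$ as the biorientations of $P_4$ and $C_4$ occurring in the full list is the right way to see why the two forbidden sets correspond. The problem is the hard direction $(\ref{ch-dtp3})\Rightarrow(\ref{ch-dtp1})$. You reduce it to a ``key lemma'' asserting that a connected $G$ satisfying $(\ref{ch-dtp3})$ has a \emph{uniformly} adjacent universal vertex, you explicitly leave its proof as ``the technical heart'', and the local analysis you sketch in its support is wrong: a universal vertex having an in-only and an out-only neighbour that are adjacent, or having a bidirectional neighbour alongside a one-directional one, is \emph{not} excluded by $D_9,\ldots,D_{15}$. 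The middle vertex of the transitive tournament $T_3$ realizes the first configuration and the vertex $v$ of $\overleftrightarrow{K_2}\oslash\bullet$ (with $v$ on the bidirected edge) realizes the second, yet both digraphs lie in $\DTP$. So mixed universal vertices genuinely occur; the lemma can only be proved by exhibiting a \emph{different}, uniform universal vertex, and that is exactly the step you have not supplied.

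The gap is real but closable, and the paper's route shows how: argue at the top node of the di-co-tree rather than at an arbitrary universal vertex of $\un(G)$. Since $\un(G)$ is connected, $G=G_1\otimes G_2$ or $G=G_1\oslash G_2$. The digraphs $D_9,D_{10},D_{11}$ are precisely the series compositions of two $2$-vertex digraphs neither of which is $\overleftrightarrow{K_2}$, so excluding them forces one factor of a series node (on at least two vertices each) to be bidirectionally complete; the digraphs $D_{12},\ldots,D_{15}$ are precisely the order compositions of two $2$-vertex non-tournaments, so excluding them, together with $D_5$ and Lemma \ref{le-tt}, forces one factor of an order node to be a transitive tournament. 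Any vertex of the $\overleftrightarrow{K_m}$ factor, respectively the source or the sink of the transitive-tournament factor, is then the uniform universal vertex your induction needs, after which peeling it off and invoking heredity of condition $(\ref{ch-dtp3})$ works as you describe. With that substitution your induction is a clean (and arguably tidier) packaging of the paper's argument; as written, however, the proposal does not prove the crucial implication.
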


\begin{proof}
$(\ref{ch-dtp1})\Rightarrow(\ref{ch-dtp2})$ The given forbidden digraphs  $D_1, \dots, D_{15}$ are not directed
trivially perfect graphs
and the set of directed  trivially perfect
graphs is closed under taking induced subdigraphs.

$(\ref{ch-dtp2})\Rightarrow(\ref{ch-dtp1})$
Since $G\in\free(\{D_1,D_2,D_3,D_4,D_5,D_6,D_7,D_8\})$
digraph $G$ is a directed co-graph  by \cite{CP06}
and thus has a construction using disjoint union, series composition, and
order composition.

Since $G\in\free(\{D_9,D_{10},D_{11}\})$ we know that for every series combination between two graphs on
at least two vertices at least one of the graphs is bidirectional complete.
Such a subgraph can be inserted by a number of feasible operations
for directed trivially perfect graphs.

Since $G\in\free(\{D_{12},D_{13},D_{14},D_{15}\})$ we know that
for every order combination between two graphs on
at least two vertices at least one of the graphs is a tournament.
Since $G\in\free(\{D_{5}\})=\free(\{C_{3}\})$ by Lemma \ref{le-tt} we even
know that at least one of the graphs is a transitive tournament.
Such a graph can be defined by a sequence of outdominating or indominating vertices
(Lemma \ref{le-tt}) which are also feasible operations
for directed trivially perfect graphs.

$(\ref{ch-dtp4}) \Leftrightarrow(\ref{ch-dtp3})$  Since $\forb(\TP)=\{C_4,P_4\}$.

$(\ref{ch-dtp2}) \Rightarrow(\ref{ch-dtp3})$ By Observation \ref{obs-tp1}. 

$(\ref{td-2aa})\Rightarrow(\ref{td-2})$
By Observation \ref{forb-u-d2}.
\end{proof}

For directed trivially perfect graphs Observation \ref{obs18} leads to the next result.

\begin{propostion}
$\DTP\neq\co\DTP$
\end{propostion}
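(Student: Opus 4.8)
The plan is to show $\DTP \neq \co\DTP$ by exhibiting an explicit digraph that belongs to $\DTP$ but whose complement does not, or conversely. The cleanest route is to use the forbidden-induced-subdigraph characterization from Theorem~\ref{ch-dtp}, namely $\DTP = \free(\{D_1,\dots,D_{15}\})$, together with the general principle from Lemma~\ref{th-free} that $\co\DTP = \free(\co\{D_1,\dots,D_{15}\})$. Thus it suffices to prove that the two forbidden families $\{D_1,\dots,D_{15}\}$ and $\co\{D_1,\dots,D_{15}\}$ generate different classes, and by Theorem~\ref{th-free-forb} (uniqueness and minimality of $\forb$) this reduces to checking that the sets of minimal forbidden subdigraphs genuinely differ.

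First I would invoke Observation~\ref{obs18}, which tells us that $\{D_1,\dots,D_8\} = \co\{D_1,\dots,D_8\}$ and $\{D_{12},\dots,D_{15}\} = \co\{D_{12},\dots,D_{15}\}$. Consequently, applying the complement operation to the whole family leaves $D_1,\dots,D_8$ and $D_{12},\dots,D_{15}$ fixed (as sets), so the \emph{only} part of the forbidden family that can change under complementation is the block $\{D_9, D_{10}, D_{11}\}$. The asymmetry of $\DTP$ must therefore come entirely from these three digraphs. This is exactly why Table~\ref{F-co2a} singles out $\co D_9$, $\co D_{10}$, and $\co D_{11} = 2\overrightarrow{P_2}$: these complements are what distinguish $\co\DTP$ from $\DTP$.

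The key step is then to verify that $\{D_9,D_{10},D_{11}\} \neq \{\co D_9, \co D_{10}, \co D_{11}\}$ as sets of (isomorphism classes of) digraphs, and moreover that this difference is not washed out by the rest of the family — i.e.\ that at least one of the $\co D_i$ is not an induced subdigraph of any member of $\{D_1,\dots,D_{15}\}$ other than itself, so it survives as a genuine new obstruction. Concretely, I would argue by producing a witness digraph $G \in \DTP$ with $\co G \notin \DTP$. A natural candidate is $D_9$ itself (or a small digraph built so that $\co G$ contains an induced copy of one of $D_9,D_{10},D_{11}$ while $G$ contains none of $D_1,\dots,D_{15}$). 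Since the recursive definition of $\DTP$ is visibly \emph{not} self-complementary — the disjoint union $\oplus$ is allowed freely but its ``complement operation'' (which would be the full series composition $\otimes$ between two arbitrary members) is restricted to $G\otimes\bullet$ — one expects a disjoint union of two nontrivial directed trivially perfect graphs whose complement requires an unrestricted series composition to reconstruct.

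The main obstacle I anticipate is the bookkeeping needed to confirm minimality and non-redundancy: after complementing, one must ensure the distinguishing digraph is genuinely a \emph{minimal} forbidden subdigraph for the other class and is not already excluded by (or isomorphic to) one of the fixed members $D_1,\dots,D_8, D_{12},\dots,D_{15}$. This requires inspecting the concrete structure of $D_9,D_{10},D_{11}$ in Table~\ref{F-co2} and their complements in Table~\ref{F-co2a}, checking adjacencies vertex by vertex. Once a single witness $G$ with $G\in\DTP$ and $\co G\notin\DTP$ is exhibited and verified against Theorem~\ref{ch-dtp}, the inequality $\DTP \neq \co\DTP$ follows immediately, since membership in $\DTP$ is not preserved under complementation.
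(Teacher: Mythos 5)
Your approach matches the paper's, which simply invokes Observation~\ref{obs18}: since $\{D_1,\ldots,D_8\}$ and $\{D_{12},\ldots,D_{15}\}$ are closed under complementation while $\{D_9,D_{10},D_{11}\}$ is not (their complements are the disconnected digraphs of Table~\ref{F-co2a}, hence not isomorphic to any $D_i$), the unique minimal forbidden sets of $\DTP$ and $\co\DTP$ differ. One slip: your proposed witness ``$D_9$ itself'' cannot work, since $D_9\notin\DTP$; the correct witness is $\co D_9$ (or $\co D_{11}=2\overrightarrow{P_2}$), which lies in $\DTP$ as a disjoint union of two-vertex directed trivially perfect graphs while its complement $D_9$ (resp.\ $D_{11}$) is forbidden --- your parenthetical recipe already points at exactly this.
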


This motivates us to consider the class of edge complements of
directed trivially perfect graphs.


\begin{definition}[Directed co-trivially perfect graphs] \label{dctpg}
The class of {\em directed co-trivially perfect graphs} is recursively defined as follows.
\begin{enumerate}[(i)]
\item Every digraph on a single vertex $(\{v\},\emptyset)$,
denoted by $\bullet$, is a  directed co-trivially perfect  graph.

\item If $G_1$ and $G_2$ are directed co-trivially perfect  graphs,
then $G_1\otimes G_2$  is a  directed trivially perfect  graph.

\item If $G$ is a  directed co-trivially perfect graph,
then
\begin{inparaenum}[(a)]
\item
$G\oslash \bullet$,
\item
$\bullet\oslash G$, and
\item
$G\oplus \bullet$  are  directed co-trivially perfect  graphs.
\end{inparaenum}

\end{enumerate}
The class of directed co-trivially perfect graphs is denoted by $\DCTP$.
\end{definition}

Theorem \ref{ch-dtp} and Lemma \ref{th-free} lead to the
following characterization for directed co-trivially perfect graphs.

\begin{theorem}
Let $G$ be a digraph. The following properties are equivalent:
\begin{enumerate}
	\item $G$ is a directed co-trivially perfect graph.
	\item $G\in \free(\{D_1, \dots, D_{8}, \co D_{11},\co D_{10},\co D_9,D_{12},\ldots,D_{15}\})$.
    \item $G\in \free(\{D_1,\ldots,D_6,D_{12},\ldots,D_{15}\})$ and  $\un(G)\in \free(\{P_4,2K_2\})$.
    \item $G\in \free(\{D_1,\ldots,D_6,D_{12},\ldots,D_{15}\})$ and  $\un(G)$ is a co-trivially perfect graph.
\end{enumerate}
\end{theorem}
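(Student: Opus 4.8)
The plan is to obtain the whole theorem from the already established characterization of $\DTP$ (Theorem \ref{ch-dtp}) by observing that directed co-trivially perfect graphs are exactly the edge complements of directed trivially perfect graphs, i.e.\ $\DCTP=\co\DTP$. This is the decisive step. Comparing Definition \ref{dtpg} with Definition \ref{dctpg}, both classes share the single-vertex base and the two order compositions $G\oslash\bullet$, $\bullet\oslash G$, whereas the binary disjoint union $\oplus$ and binary series composition $\otimes$ (together with their single-vertex variants $\otimes\bullet$ and $\oplus\bullet$) are interchanged. Since the digraph complement satisfies $\co(G_1\oplus G_2)=\co G_1\otimes\co G_2$, $\co(G_1\otimes G_2)=\co G_1\oplus\co G_2$, $\co(G_1\oslash G_2)=\co G_2\oslash\co G_1$ and $\co\bullet=\bullet$, an induction over the recursive construction turns every $\DTP$-expression into a valid $\DCTP$-expression under complementation and vice versa; the order reversal in the complement of $\oslash$ causes no trouble because $\bullet$ is self-complementary and the two order compositions are mutually dual. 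This gives $\DCTP=\co\DTP$.

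With this identity in hand, $(1)\Leftrightarrow(2)$ follows formally. By Theorem \ref{ch-dtp}, $\DTP=\free(\{D_1,\dots,D_{15}\})$, so the complement part of Lemma \ref{th-free} yields $\DCTP=\co\DTP=\free(\co\{D_1,\dots,D_{15}\})$. To read off statement $(2)$ I evaluate the complemented set with Observation \ref{obs18}: there $\co\{D_1,\dots,D_8\}=\{D_1,\dots,D_8\}$ and $\co\{D_{12},\dots,D_{15}\}=\{D_{12},\dots,D_{15}\}$, while the three remaining members contribute $\{\co D_9,\co D_{10},\co D_{11}\}$ (whose explicit shapes, e.g.\ $\co D_{11}=2\overrightarrow{P_2}$, are listed in Table \ref{F-co2a}). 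Hence $\co\{D_1,\dots,D_{15}\}=\{D_1,\dots,D_8,\co D_{11},\co D_{10},\co D_9,D_{12},\dots,D_{15}\}$, which is precisely the forbidden set of $(2)$.

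For the underlying-graph formulations I would mirror the closing steps of the proof of Theorem \ref{ch-dtp}. Equivalence $(3)\Leftrightarrow(4)$ is immediate from $\forb(\CTP)=\{P_4,2K_2\}$ (Table \ref{t-uc}), since then $\un(G)\in\free(\{P_4,2K_2\})$ says exactly that $\un(G)$ is co-trivially perfect. For $(2)\Rightarrow(3)$, the set of $(2)$ contains $\{D_1,\dots,D_6,D_{12},\dots,D_{15}\}$, and because $(2)\Leftrightarrow(1)$ makes $G$ directed co-trivially perfect, the co-analogue of Observation \ref{obs-tp1}—that $\un(G)$ is then co-trivially perfect, proved by the same recursion as above—gives $\un(G)\in\free(\{P_4,2K_2\})$. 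For $(3)\Rightarrow(2)$, the only members of $(2)$ not already forbidden by $\{D_1,\dots,D_6,D_{12},\dots,D_{15}\}$ are $D_7,D_8,\co D_9,\co D_{10},\co D_{11}$; inspection of Tables \ref{F-co2} and \ref{F-co2a} shows $\un(D_7)=\un(D_8)=P_4$ and $\un(\co D_9)=\un(\co D_{10})=\un(\co D_{11})=2K_2$, so all five are biorientations of $P_4$ or of $2K_2$, and Observation \ref{forb-u-d2} with $\un(G)\in\free(\{P_4,2K_2\})$ excludes each of them.

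I expect the main obstacle to be the first step, the clean verification of $\DCTP=\co\DTP$: one has to track the orientation reversal in $\co(G_1\oslash G_2)=\co G_2\oslash\co G_1$ and confirm that the split between binary and single-vertex forms of $\oplus$ and $\otimes$ in the two definitions is matched correctly by complementation. Once this correspondence is nailed down, everything else is bookkeeping with Lemma \ref{th-free}, Observation \ref{obs18} and Observation \ref{forb-u-d2}.
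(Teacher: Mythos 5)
Your proposal is correct and follows essentially the same route as the paper, which derives this theorem in one line from Theorem \ref{ch-dtp} and Lemma \ref{th-free} via $\DCTP=\co\DTP$. You simply make explicit the steps the paper leaves implicit (the verification that complementation swaps the defining operations of $\DTP$ and $\DCTP$, the simplification of $\co\{D_1,\dots,D_{15}\}$ via Observation \ref{obs18}, and the underlying-graph equivalences via Observations \ref{forb-u-d2} and the co-analogue of Observation \ref{obs-tp1}), and these details all check out.
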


\subsection{Oriented trivially perfect graphs}\label{sec-otpg}

\begin{definition}[Oriented trivially perfect graphs] \label{otpg}
The class of {\em oriented trivially perfect graphs} is recursively defined as follows.
\begin{enumerate}[(i)]
\item Every digraph on a single vertex $(\{v\},\emptyset)$,
denoted by $\bullet$, is an oriented trivially perfect  graph.

\item If $G_1$ and $G_2$ are  oriented trivially perfect  graphs,
then $G_1\oplus G_2$  is an oriented trivially perfect  graph.

\item If $G$ is an  oriented trivially perfect graph, then
\begin{inparaenum}[(a)]
\item
$G\oslash \bullet$ and
\item
$\bullet\oslash G$ are  oriented trivially perfect graphs.
\end{inparaenum}
\end{enumerate}
The class of oriented trivially perfect graphs is denoted by $\OTP$.
\end{definition}

The recursive definition of oriented and undirected  trivially perfect graphs lead to the following
observation.

\begin{observation}\label{obs-otp1}
For every oriented trivially perfect graph $G$ the
underlying undirected graph $\un(G)$ is a trivially perfect graph.
\end{observation}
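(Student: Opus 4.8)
The plan is to proceed by structural induction on the recursive construction of $G$ as an oriented trivially perfect graph, exactly mirroring the way the analogous observations (such as Observations \ref{obs-co1}, \ref{obs-oco1}, and \ref{obs-tp1}) are obtained. The essential ingredient is to track how the operation $\un$ interacts with the two digraph operations $\oplus$ and $\oslash$ that appear in Definition \ref{otpg}, and to check that each resulting undirected operation is among those permitted in the recursive definition of (undirected) trivially perfect graphs, namely $\bullet$, $G_1\oplus G_2$, and $G_1\otimes\bullet$ as listed in Table \ref{t-uc}.

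For the base case, a single vertex $\bullet$ satisfies $\un(\bullet)=\bullet$, which is trivially perfect. For the inductive step I would verify two correspondences. First, $\un(G_1\oplus G_2)=\un(G_1)\oplus\un(G_2)$, which is immediate since the disjoint union introduces no new arcs and $\un$ acts componentwise. Second, and this is the only point requiring a small argument, $\un(G\oslash\bullet)=\un(G)\otimes\bullet$ and likewise $\un(\bullet\oslash G)=\un(G)\otimes\bullet$: the order composition with a single vertex $v$ adds the arc $(u,v)$ (respectively $(v,u)$) for every vertex $u$ of $G$, and under $\un$ each such single-direction arc becomes the undirected edge $\{u,v\}$, so that in the underlying graph $v$ is joined to all vertices of $\un(G)$. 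Both $G\oslash\bullet$ and $\bullet\oslash G$ therefore collapse to the same undirected graph, the join $\un(G)\otimes\bullet$.

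With these two identities established, the induction closes directly: by the inductive hypothesis $\un(G_1)$, $\un(G_2)$, and $\un(G)$ are trivially perfect, so $\un(G_1)\oplus\un(G_2)$ is trivially perfect by operation (ii) for $\TP$, and $\un(G)\otimes\bullet$ is trivially perfect by operation (iii) for $\TP$. Hence every oriented trivially perfect graph has a trivially perfect underlying undirected graph.

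The proof has no genuine obstacle; the single point deserving care is the verification that the order composition with one vertex corresponds to a join with one vertex rather than to some richer operation, which hinges on the fact that a uni-directional arc and its reverse both project to the same undirected edge, and that $\TP$ only ever permits joining with a single vertex (not with an arbitrary member of the class). This is exactly why the oriented restriction lands inside $\TP$ and not merely inside the larger class $\C$ of co-graphs.
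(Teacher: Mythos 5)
Your proof is correct and follows exactly the argument the paper intends (the observation is stated without an explicit proof, being justified only by the remark that it follows from the recursive definitions): a structural induction matching each operation of $\OTP$ with the corresponding operation of $\TP$ under $\un$. The only step needing care, that $\un(G\oslash\bullet)=\un(\bullet\oslash G)=\un(G)\otimes\bullet$, is handled properly.
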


Similar as for oriented co-graphs we obtain a  definition of $\OTP$ by
forbidden induced subdigraphs.

\begin{theorem}\label{char-otop}
Let $G$ be a digraph. The following properties are equivalent:
\begin{enumerate}
	\item\label{ch-otp1} $G$ is an oriented trivially perfect graph.
	
	\item\label{ch-otp2} $G\in \free(\{D_1, D_5, D_8, D_{12}, \overleftrightarrow{K_2}\})$.
    
    \item\label{ch-otp3a} $G\in\free(\{D_1, D_5, \overleftrightarrow{K_2}\})$
		  and $\un(G)\in \free(\{C_4,P_4\})$.
	
	\item\label{ch-otp3} $G\in\free(\{D_1, D_5, \overleftrightarrow{K_2}\})$
		  and $\un(G)$ is a trivially perfect graph.

    \item\label{ch-otp4} $G$ is  transitive and $G\in\free(\{\overleftrightarrow{K_2}, D_8, D_{12}\})$.
\end{enumerate}
\end{theorem}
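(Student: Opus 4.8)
The plan is to establish the equivalences for oriented trivially perfect graphs by mirroring the proof of Theorem~\ref{ch-dtp} for $\OTP$, using the observation that oriented co-graphs are exactly the directed co-graphs avoiding $\overleftrightarrow{K_2}$, and that $\OTP$ is precisely $\DTP$ restricted to oriented graphs. First I would prove $(\ref{ch-otp1})\Rightarrow(\ref{ch-otp2})$: since every oriented trivially perfect graph is in particular a directed trivially perfect graph, Theorem~\ref{ch-dtp} gives $G\in\free(\{D_1,\dots,D_{15}\})$, so certainly $G\in\free(\{D_1,D_5,D_8,D_{12}\})$; and because the series composition is omitted in Definition~\ref{otpg}, there is no bidirectional arc in any construction of $G$, hence $G\in\free(\{\overleftrightarrow{K_2}\})$. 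For the converse $(\ref{ch-otp2})\Rightarrow(\ref{ch-otp1})$, I would argue that $\free(\{D_1,D_5,D_8,\overleftrightarrow{K_2}\})$ is the class of oriented co-graphs by Theorem~\ref{th-ch-oco}, and then show that forbidding $D_{12}$ additionally forces every order composition between two graphs on at least two vertices to have one side a tournament; combined with $G\in\free(\{D_5\})=\free(\{\overrightarrow{C_3}\})$ and Lemma~\ref{le-tt}, that side is a transitive tournament, which is constructible by repeatedly adding out-dominating or in-dominated vertices—exactly the operations (a) and (b) permitted for $\OTP$.

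Next I would handle the underlying-graph characterizations. The equivalence $(\ref{ch-otp3a})\Leftrightarrow(\ref{ch-otp3})$ is immediate since $\forb(\TP)=\{C_4,P_4\}$ by Table~\ref{t-uc}. For $(\ref{ch-otp2})\Rightarrow(\ref{ch-otp3})$ I would invoke Observation~\ref{obs-otp1} to get that $\un(G)$ is trivially perfect, while noting $\free(\{D_1,D_5,D_8,D_{12},\overleftrightarrow{K_2}\})\subseteq\free(\{D_1,D_5,\overleftrightarrow{K_2}\})$ trivially. For the reverse implication $(\ref{ch-otp3a})\Rightarrow(\ref{ch-otp2})$, I would apply Observation~\ref{forb-u-d2}: since $\un(G)\in\free(\{C_4,P_4\})$, every biorientation of $C_4$ and $P_4$ is forbidden in $G$, and in particular $D_8$ (a biorientation of $P_4$) and $D_{12}$ (a biorientation of $C_4$) are excluded, yielding $G\in\free(\{D_1,D_5,D_8,D_{12},\overleftrightarrow{K_2}\})$.

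Finally, for the transitivity characterization $(\ref{ch-otp4})$, I would prove $(\ref{ch-otp2})\Rightarrow(\ref{ch-otp4})$ via Lemma~\ref{le-co-t}: since $G\in\free(\{\overleftrightarrow{K_2},D_1,D_5\})$, $G$ is transitive, and the remaining forbidden digraphs $\overleftrightarrow{K_2},D_8,D_{12}$ are all contained in the set of condition~(\ref{ch-otp2}). For $(\ref{ch-otp4})\Rightarrow(\ref{ch-otp2})$, I would use that a transitive digraph contains no $D_1$ (which witnesses non-transitivity of the two one-directional arcs) and no $D_5=\overrightarrow{C_3}$ (an acyclic/transitive digraph has no directed triangle), so transitivity already forces $G\in\free(\{D_1,D_5\})$, and adding the explicit exclusions $\overleftrightarrow{K_2},D_8,D_{12}$ completes the set.

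The main obstacle I expect is the verification in $(\ref{ch-otp2})\Rightarrow(\ref{ch-otp1})$ that forbidding $D_{12}$ exactly captures the absence of series composition at the level of order compositions—i.e.\ that it forces one factor of every nontrivial order composition to be a transitive tournament. This requires carefully identifying $D_{12}$ as the minimal obstruction (a biorientation of $C_4$, so two independent one-directional arcs under an order composition) and checking that its exclusion, together with $\free(\{\overrightarrow{C_3}\})$, leaves only transitive tournaments as admissible order-factors; the rest of the argument is bookkeeping parallel to Theorem~\ref{ch-dtp} and Theorem~\ref{th-ch-oco}.
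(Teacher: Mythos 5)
Your proposal is correct and follows essentially the same route as the paper: reduce to a known superclass, use the absence of $\overleftrightarrow{K_2}$ to rule out series compositions, and dispatch the remaining equivalences via Observations \ref{obs-otp1} and \ref{forb-u-d2}, the fact that $\forb(\TP)=\{C_4,P_4\}$, and Lemma \ref{le-co-t}. The only divergence is in $(\ref{ch-otp2})\Rightarrow(\ref{ch-otp1})$, where the paper deduces $G\in\free(\{D_1,\dots,D_{15}\})$ and invokes the directed trivially perfect characterization of Theorem \ref{ch-dtp}, whereas you pass through oriented co-graphs (Theorem \ref{th-ch-oco}) and argue directly that excluding $D_{12}$, together with $D_5$-freeness and Lemma \ref{le-tt}, forces a transitive-tournament factor in every nontrivial order composition; both arguments are valid, and yours is in fact somewhat more explicit at that step.
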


\begin{proof}
$(\ref{ch-otp1})\Rightarrow(\ref{ch-otp2})$ If $G$ is an oriented trivially perfect  graph, then
$G$ is a directed  trivially perfect graph and $G\in \free(\{D_1, \dots, D_{15}\})$.
Further $G\in \free(\{\overleftrightarrow{K_2}\})$
because of the missing series composition. This leads to $G\in \free(\{D_1, D_5, D_8, D_{12}, \overleftrightarrow{K_2}\})$.

$(\ref{ch-otp2})\Rightarrow(\ref{ch-otp1})$
If $G\in  \free(\{D_1, D_5, D_8, D_{12}, \overleftrightarrow{K_2}\})$, then $G\in \free(\{D_1, \dots, D_{15}\})$ and
is a directed trivially perfect graph.
Since $G\in \free(\{\overleftrightarrow{K_2}\})$
there is no series operation in any
construction of $G$ which implies that $G$ is an oriented  trivially perfect graph.

$(\ref{ch-otp3a}) \Leftrightarrow(\ref{ch-otp3})$  Since $\forb(\TP)=\{C_4,P_4\}$.

$(\ref{ch-otp2})\Rightarrow(\ref{ch-otp3})$ By Observation \ref{obs-otp1}. 

$(\ref{ch-otp3a})\Rightarrow(\ref{ch-otp2})$ By Observation \ref{forb-u-d2}

$(\ref{ch-otp2})\Rightarrow(\ref{ch-otp4})$ By Lemma \ref{le-co-t}
we know that $G$ is transitive. 

$(\ref{ch-otp4})\Rightarrow(\ref{ch-otp2})$ If $G$ is transitive, then  $G$ has no induced $D_1, D_5$.
\end{proof}

\begin{theorem}\label{th-ori-tp}
A graph $G$ is a trivially perfect graph if and only if there exists an orientation $G'$ of  $G$
such that $G'$ is an oriented trivially perfect graph.
\end{theorem}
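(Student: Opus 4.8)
The plan is to prove the two implications separately, exploiting the fact that the recursive construction of a trivially perfect graph uses only operations whose oriented analogues are available in Definition \ref{otpg}.

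The reverse direction is essentially immediate. Suppose $G$ admits an orientation $G'$ that is an oriented trivially perfect graph. By Observation \ref{obs-otp1}, the underlying undirected graph $\un(G')$ is a trivially perfect graph. Since $G'$ is by hypothesis an orientation of $G$, we have $\un(G')=G$, and hence $G$ is trivially perfect. This mirrors exactly the easy direction of the analogous statement for (oriented) co-graphs.

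For the forward direction I would argue by structural induction on a construction of $G$ as a trivially perfect graph. According to Table \ref{t-uc}, such a construction uses only the single-vertex base $\bullet$, disjoint union $G_1\oplus G_2$, and join with a single vertex $G_1\otimes\bullet$. The key observation driving the induction is that $\un(H\oslash\bullet)=\un(H)\otimes\bullet$, since the order composition with a single new vertex adds exactly one arc for each edge produced by the join; thus $H\oslash\bullet$ is an orientation of $\un(H)\otimes\bullet$. Concretely, in the base case a single vertex is trivially an oriented trivially perfect graph orienting itself. If $G=G_1\oplus G_2$, take by induction orientations $G_1',G_2'$ that are oriented trivially perfect graphs; then $G_1'\oplus G_2'$ is an oriented trivially perfect graph with $\un(G_1'\oplus G_2')=G$. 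If $G=G_1\otimes\bullet$, take an orientation $G_1'$ of $G_1$ that is an oriented trivially perfect graph; then $G_1'\oslash\bullet$ is an oriented trivially perfect graph and, by the observation above, an orientation of $G_1\otimes\bullet=G$. (The symmetric choice $\bullet\oslash G_1'$ works equally well.)

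I do not expect any genuine obstacle here: the content is entirely in matching each trivially perfect operation with an oriented counterpart that preserves the underlying undirected graph, namely $\oplus\mapsto\oplus$ and $\otimes\bullet\mapsto\oslash\bullet$. The only point requiring a moment's care is confirming that order composition with a single vertex really is an orientation of the corresponding join, which is precisely the identity $\un(H\oslash\bullet)=\un(H)\otimes\bullet$ used above. Everything else is routine bookkeeping of the induction, exactly parallel to the co-graph/oriented co-graph correspondence established earlier in the excerpt.
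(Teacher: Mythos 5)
Your proof is correct, but it takes a genuinely different route from the paper's. The paper handles the forward direction by invoking the fact that a trivially perfect graph is a comparability graph, taking a transitive orientation $G'$, and then checking the forbidden-subdigraph characterization of Theorem \ref{char-otop} (transitivity together with $G'\in\free(\{\overleftrightarrow{K_2},D_8,D_{12}\})$, the latter following because $\un(D_8)=P_4$ and $\un(D_{12})=C_4$); the reverse direction is likewise argued via forbidden subdigraphs, using that $D_8$ and $D_{12}$ are the only transitive orientations of $P_4$ and $C_4$. You instead argue the forward direction by structural induction on the recursive construction of $G$, mapping $\oplus\mapsto\oplus$ and $\otimes\,\bullet\mapsto\oslash\,\bullet$ and using the identity $\un(H\oslash\bullet)=\un(H)\otimes\bullet$, and you get the reverse direction directly from Observation \ref{obs-otp1}. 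Your argument is more elementary and self-contained: it needs neither the (unreferenced) fact that trivially perfect graphs are comparability graphs nor Theorem \ref{char-otop}, and it constructs the witnessing orientation explicitly from the construction sequence. The paper's argument, in exchange, is shorter once the characterization theorem is in place and yields the slightly stronger information that \emph{every} transitive orientation of a trivially perfect graph is an oriented trivially perfect graph, not merely that one exists. Both proofs are sound; yours mirrors the construction-matching strategy the paper itself uses elsewhere (e.g.\ for the co-graph/oriented co-graph correspondence) rather than the forbidden-subdigraph strategy it chose here.
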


\begin{proof}
Let $G$ be a trivially perfect graph. Then $G$ is also a comparability graph,
which implies that $G$ has a transitive orientation $G'$. Since $G\in\free(\{C_4,P_4\})$
it follows that $G'\in \free(\{D_8,D_{12}\})$. Further by definition $G'\in\free(\{\overleftrightarrow{K_2}\})$.
By Theorem \ref{char-otop} we know that $G'$ is an oriented trivially perfect graph.

For the reverse direction let $G'$ be  an oriented trivially perfect graph.
Then by Theorem \ref{char-otop} it holds that $G'\in \free(\{D_8,D_{12}\})$.
Since $D_8$ is the only transitive orientation of $P_4$ and $D_{12}$
is the only transitive orientation of $C_4$ it holds that $\un(G)\in \free(\{C_4,P_4\})$.
Thus   $\un(G)$ is a trivially perfect graph.
\end{proof}

\begin{observation}
  If $G\in DTP$ then the underlying undirected graph of the symmetric part of $G$ is trivially perfect and the asymmetric part of $G$ is an oriented trivially perfect digraph.
\end{observation}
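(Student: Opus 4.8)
The plan is to prove the statement by induction on the recursive construction of $G$ as a directed trivially perfect graph (Definition \ref{dtpg}), establishing both claims simultaneously: that $\un(sym(G))$ is trivially perfect and that $asym(G)$ is an oriented trivially perfect graph. The guiding observation is that each of the three composition operations only inserts arcs \emph{between} the two combined parts and leaves every internal arc — hence the internal symmetric and asymmetric structure — untouched; moreover $G_1\otimes\bullet$ inserts only bidirectional (symmetric) arcs, while $G_1\oslash\bullet$ and $\bullet\oslash G_1$ insert only one-way (asymmetric) arcs, and $G_1\oplus G_2$ inserts none. Since the adjacency of any pair of vertices is decided at the single operation that first combines them, no operation can ever turn an existing asymmetric arc into a symmetric one or vice versa, so the decomposition into $sym$ and $asym$ behaves well under each step.

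For the base case $G=\bullet$ both $sym(G)$ and $asym(G)$ are the single vertex, which lies in $\TP$ (resp.\ $\OTP$). For the inductive step I would examine each operation and track how it acts on the two parts. A disjoint union distributes over both parts, $sym(G_1\oplus G_2)=sym(G_1)\oplus sym(G_2)$ and $asym(G_1\oplus G_2)=asym(G_1)\oplus asym(G_2)$, and closure of $\TP$ and $\OTP$ under $\oplus$ finishes this case. For $G=G_1\otimes\bullet$ the new vertex is joined bidirectionally to all of $G_1$, so $sym(G)=sym(G_1)\otimes\bullet$ and, passing to underlying graphs, $\un(sym(G))=\un(sym(G_1))\otimes\bullet$ — exactly the join-with-a-vertex operation that, together with $\oplus$, defines $\TP$ (Table \ref{t-uc}); in the asymmetric part the new vertex receives no arc, so $asym(G)=asym(G_1)\oplus\bullet$, which is an $\OTP$ operation. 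The cases $G=G_1\oslash\bullet$ and $G=\bullet\oslash G_1$ are dual: the new vertex is isolated in the symmetric part, giving $sym(G)=sym(G_1)\oplus\bullet$ and hence $\un(sym(G))\in\TP$, while in the asymmetric part $asym(G)=asym(G_1)\oslash\bullet$ (resp.\ $\bullet\oslash asym(G_1)$), which is precisely the order-composition-with-a-vertex operation defining $\OTP$ (Definition \ref{otpg}).

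The main point to verify carefully — though it is not really an obstacle — is the exact correspondence between the operations of $\DTP$ restricted to each part and the defining operations of $\TP$ and $\OTP$: one must check that $\otimes\bullet$ collapses to $\oplus\bullet$ on the asymmetric part, that $\oslash\bullet$ (and $\bullet\oslash$) collapses to $\oplus\bullet$ on the symmetric part, and that in every case no arc of one part leaks into the other. Once this bookkeeping is in place the induction closes immediately. Equivalently, one may argue directly on the di-co-tree $T_G$ of $G$ as in the sketch following the lemma of Bechet et al.: replacing every order composition by a disjoint union yields exactly $sym(G)$ and leaves a construction using only $\oplus$ and $\otimes\bullet$, whose underlying graph is trivially perfect, while replacing every series composition by a disjoint union yields exactly $asym(G)$ and leaves a construction using only $\oplus$, $\oslash\bullet$ and $\bullet\oslash$, i.e.\ an oriented trivially perfect graph.
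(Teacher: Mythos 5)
Your proof is correct and takes essentially the same approach as the paper, which justifies the observation with a one-line remark that the symmetric part is built by the rules of trivially perfect graphs and the asymmetric part by the rules of $\OTP$; your induction on the recursive construction (and the alternative di-co-tree rewriting you mention at the end) is simply a careful expansion of that argument. The case analysis of how each operation acts on $sym(G)$ and $asym(G)$ is accurate throughout.
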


This holds since the asymmetric part is exactly build with the same rules like trivially perfect graphs and the asymmetric part with the rules of \OTP.


\begin{definition}[Oriented co-trivially perfect graphs] \label{octpg}
The class of {\em oriented co-trivially perfect graphs} is recursively defined as follows.
\begin{enumerate}[(i)]
\item Every digraph on a single vertex $(\{v\},\emptyset)$,
denoted by $\bullet$, is an oriented co-trivially perfect  graph.

\item If $G$ is an oriented co-trivially perfect graph,
then
\begin{inparaenum}[(a)]
\item
$G\oslash \bullet$,
\item
$\bullet\oslash G$, and
\item
$G\oplus \bullet$  are oriented co-trivially perfect  graphs.
\end{inparaenum}

\end{enumerate}
The class of oriented co-trivially perfect graphs is denoted by $\OCTP$.
\end{definition}

Restricting the operations of directed co-trivially perfect graphs
to oriented graphs leads to the same operations as
the class of orientated threshold graphs, which will
be considered in Section  \ref{sec-or-tresh}.

\subsection{Directed Weakly Quasi Threshold Graphs}\label{sec-dwqtg}

\begin{definition}[Directed weakly quasi threshold graphs] \label{ddwqt}
The class of {\em directed weakly quasi threshold graphs} is recursively defined as follows.
\begin{enumerate}[(i)]
\item Every edgeless digraph is a directed weakly quasi threshold graph.

\item If $G_1$ and $G_2$ are directed  weakly quasi threshold graphs,
then $G_1\oplus G_2$  is a directed  weakly quasi threshold graph.

\item
If $G$ is a  directed  weakly quasi threshold graph and $I$ is an edgeless digraph,
then
\begin{inparaenum}[(a)]
\item
$G\oslash I$,
\item
$I \oslash G$, and
\item
$G\otimes I$  are  directed  weakly quasi threshold graphs.
\end{inparaenum}
\end{enumerate}
The class of directed weakly quasi threshold graphs is denoted by $\DWQT$.
\end{definition}

\begin{observation}\label{observation:und_dir_WQT}
	If $G$ is a directed weakly quasi threshold graph, $\un(G)$ is weakly quasi threshold graph.
\end{observation}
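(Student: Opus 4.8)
The plan is to prove the statement by structural induction on the recursive construction of the directed weakly quasi threshold graph $G$ given in Definition \ref{ddwqt}, with the engine of the argument being that the map $\un$ sends each admissible directed operation to an admissible undirected one from Table \ref{t-uc}. First I would record the three identities
$$\un(G_1\oplus G_2)=\un(G_1)\oplus\un(G_2),\qquad \un(G_1\otimes G_2)=\un(G_1)\otimes\un(G_2),\qquad \un(G_1\oslash G_2)=\un(G_1)\otimes\un(G_2),$$
each of which is immediate from the definitions of the operations and of $\un$: the disjoint union inserts no arcs between the two parts, while the series composition (which inserts both $(u,v)$ and $(v,u)$) and the order composition (which inserts the single arc $(u,v)$) each place at least one arc between every cross pair, so under $\un$ every such pair becomes an edge, i.e. a join.

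For the base case, if $G$ is an edgeless digraph then $\un(G)$ is an edgeless graph, which is a weakly quasi threshold graph by the starting rule $I$ of $\WQT$. For the inductive step I would distinguish the cases of Definition \ref{ddwqt}. If $G=G_1\oplus G_2$ with $G_1,G_2\in\DWQT$, then by the induction hypothesis $\un(G_1),\un(G_2)$ are weakly quasi threshold graphs, and $\un(G)=\un(G_1)\oplus\un(G_2)$ is again one by the disjoint union rule. In each remaining case $G$ is obtained from a directed weakly quasi threshold graph $G'$ and an edgeless digraph $I$ as $G'\oslash I$, $I\oslash G'$, or $G'\otimes I$; applying the identities above together with the commutativity of the join, each of these yields $\un(G)=\un(G')\otimes\un(I)$, the join of the weakly quasi threshold graph $\un(G')$ (induction hypothesis) with the edgeless graph $\un(I)$. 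This is exactly the operation $G_1\otimes I$ permitted for $\WQT$, so $\un(G)$ is a weakly quasi threshold graph, completing the induction.

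The only point requiring care, and hence the main (though mild) obstacle, is checking that the three distinct directed operations, namely $\oslash$ with an edgeless factor on either side and $\otimes$ with an edgeless factor, all collapse under $\un$ to the single admissible undirected operation ``join with an edgeless graph''; once the three identities above are in place this is routine. An alternative, non-inductive route would be to combine a forbidden-subdigraph characterisation of $\DWQT$ (in the spirit of Theorems \ref{th-ch-dco} and \ref{ch-dtp}) with Observation \ref{forb-u-d1} and the known set $\forb(\WQT)=\{P_4,\co 2P_3\}$, but the direct structural induction is shorter and entirely self-contained.
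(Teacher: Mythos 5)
Your proof is correct and matches the paper's (implicit) argument: the paper states this observation as an immediate consequence of comparing the recursive definitions of $\DWQT$ and $\WQT$, which is exactly the structural induction you carry out, with the key point being that $\un$ sends $\oplus$ to disjoint union and both $\otimes$ and $\oslash$ to the join. Your write-up simply makes explicit what the paper leaves to the reader.
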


\begin{table}[h]
  \caption{Digraphs to build the forbidden subdigraphs of $\DWQT$.}\label{tbl:Ymengen}
  \center
  \begin{tabular}{|c|c|c|c|}
  \hline
   & & & \\
	\begin{tikzpicture}
  \draw (0,0) \bnode (n1) {};
  \draw (1,0) \bnode (n2) {};

  \draw [-\ahead] (n1) edge (n2);
  \draw [-\ahead] (n2) edge (n1);
\end{tikzpicture} & \begin{tikzpicture}
  \draw (0,0) \bnode (n1) {};
  \draw (1,0) \bnode (n2) {};

  \draw [-\ahead] (n1) edge (n2);
\end{tikzpicture} & \begin{tikzpicture}
  \draw (0,0) \bnode (n1) {};
  \draw (1,0) \bnode (n2) {};
  \draw (2,0) \bnode (n3) {};

  \draw [-\ahead] (n1) edge (n2);
  \draw [-\ahead] (n2) edge (n1);
\end{tikzpicture}  & \begin{tikzpicture}
  \draw (0,0) \bnode (n1) {};
  \draw (1,0) \bnode (n2) {};
  \draw (2,0) \bnode (n3) {};

  \draw [-\ahead] (n1) edge (n2);
\end{tikzpicture} \\
	$Y_1$ & $Y_2$ & $Y_3$ & $Y_4$ \\
	 & & & \\
  \hline
  \end{tabular}
\end{table}

\begin{table}[h]
\caption{Constructions of the forbidden subdigraphs of $\DWQT$.}\label{table:Combinations_Ymengen}
  \center
  \begin{tabular}{|c|c|c|c|}
  \hline
	$Q_1 = Y_2 \otimes Y_2$ & $Q_2 = Y_1 \oslash Y_1$   & $Q_3 = Y_3 \otimes Y_3$   & $Q_4 = Y_2 \otimes Y_3$ \\
	\hline
    $Q_5 = Y_1 \oslash Y_4$ &   $Q_6 = Y_4 \oslash Y_1$ &   $Q_7 = Y_4 \oslash Y_4$ & \\
  \hline
  \end{tabular}
\end{table}

\begin{theorem}\label{char-dwqt}
Let $G$ be a digraph. The following properties are equivalent:
\begin{enumerate}
	\item\label{ch-dwqt1}
  $G$ is a directed weakly quasi threshold graph.
	\item\label{ch-dwqt2}
  $G\in \free(\{D_1,\ldots,D_8,Q_1, \ldots, Q_7\})$\footnote{Note that $Q_1=D_{11}$ and $Q_2 = D_{15}$.}.
  \item\label{ch-dwqt3}
  $G\in \free(\{D_1,\ldots,D_6,Q_1, Q_2, Q_4, Q_5, Q_6\})$ and $\un(G)\in \free(\{P_4,  \co 2P_3 \})$.
  \item\label{ch-dwqt4}
  $G\in \free(\{D_1,\ldots,D_6,Q_1, Q_2, Q_4, Q_5, Q_6\})$ and $\un(G)$ is a weakly quasi threshold graph.
\end{enumerate}
\end{theorem}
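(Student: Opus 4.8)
The plan is to prove the chain of equivalences following the scheme of Theorems~\ref{th-ch-dco} and~\ref{ch-dtp}: establish $(\ref{ch-dwqt1})\Leftrightarrow(\ref{ch-dwqt2})$ by a structural analysis of the directed co-tree, obtain $(\ref{ch-dwqt3})\Leftrightarrow(\ref{ch-dwqt4})$ for free from $\forb(\WQT)=\{P_4,\co 2P_3\}$ (Table~\ref{t-uc}), and connect these two through the underlying graph via Observations~\ref{observation:und_dir_WQT} and~\ref{forb-u-d2}.

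For $(\ref{ch-dwqt1})\Rightarrow(\ref{ch-dwqt2})$ I first observe that $\DWQT$ is hereditary: by induction on a $\DWQT$-construction, restricting to an induced subdigraph leaves every operand either edgeless or again $\DWQT$. Hence it suffices to check that none of the listed digraphs is in $\DWQT$. As every $\DWQT$-operation is a directed co-graph operation we have $\DWQT\subseteq\DC$, so $D_1,\dots,D_8\notin\DWQT$ by Theorem~\ref{th-ch-dco}; and $Q_3,Q_7\notin\DWQT$ because $\un(Q_3)=\un(Q_7)=\co 2P_3\notin\WQT$ together with Observation~\ref{observation:und_dir_WQT}. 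For the remaining $Q_1,Q_2,Q_4,Q_5,Q_6$ I would rule out each possible top operation of a hypothetical $\DWQT$-construction: a disjoint union is excluded since each of these digraphs is connected, and the only admissible series resp.\ order split is seen to leave an edge on both sides, which is not a feasible $\DWQT$-operation.

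The core of the theorem is $(\ref{ch-dwqt2})\Rightarrow(\ref{ch-dwqt1})$. From $G\in\free(\{D_1,\dots,D_8\})$ and Theorem~\ref{th-ch-dco}, $G$ is a directed co-graph; I pass to its canonical merged di-co-tree, in which no inner node carries the label of its parent, so the children of a series ($\otimes$) node and of an order ($\oslash$) node are leaves or nodes of the other two labels. The two claims to prove are (A) every $\otimes$-node has at most one non-edgeless child, and (B) every $\oslash$-node has at most one non-edgeless child. For (A): a non-edgeless child either contains an asymmetric arc, hence an induced $Y_2$, or has all its edges symmetric; in the latter case it cannot be an $\oslash$-node (these have asymmetric arcs) nor an $\otimes$-node (merging), so it is a union node and therefore exhibits an isolated vertex next to a $\overleftrightarrow{K_2}$, i.e.\ an induced $Y_3$. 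Two non-edgeless children then force an induced $Y_2\otimes Y_2=Q_1$, $Y_2\otimes Y_3=Q_4$, or $Y_3\otimes Y_3=Q_3$, all forbidden. Claim (B) is the dual: now a symmetric edge already yields an induced $Y_1=\overleftrightarrow{K_2}$ on two vertices, whereas an all-asymmetric child is a union node and yields an induced $Y_4$ (an arc together with an isolated vertex); with the arcs directed from the earlier to the later child this produces one of $Q_2,Q_5,Q_6,Q_7$. Given (A) and (B), I rebuild $G$ bottom-up: union nodes by $\oplus$, and each series resp.\ order node by constructing its unique non-edgeless child first (or any child if all are edgeless) and attaching the remaining, edgeless, children one at a time by $G\otimes I$, $G\oslash I$, or $I\oslash G$; associativity of the compositions shows the result is exactly the node.

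Finally, $(\ref{ch-dwqt3})\Leftrightarrow(\ref{ch-dwqt4})$ is immediate from $\forb(\WQT)=\{P_4,\co 2P_3\}$. For $(\ref{ch-dwqt2})\Rightarrow(\ref{ch-dwqt3})$ the forbidden set of $(\ref{ch-dwqt3})$ is contained in that of $(\ref{ch-dwqt2})$, while $\un(G)\in\free(\{P_4,\co 2P_3\})$ follows from $(\ref{ch-dwqt1})$ by Observation~\ref{observation:und_dir_WQT}. For $(\ref{ch-dwqt3})\Rightarrow(\ref{ch-dwqt2})$ the four missing digraphs are recovered from the underlying graph by Observation~\ref{forb-u-d2}: $D_7,D_8$ are biorientations of $P_4$ and $Q_3,Q_7$ are biorientations of $\co 2P_3$ (indeed $\un(Q_3)=\un(Q_7)=\co 2P_3$), so $\un(G)\in\free(\{P_4,\co 2P_3\})$ excludes all of them from $G$. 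I expect the case analysis behind (A) and (B)---in particular the use of the merged tree to force a symmetric-edge (resp.\ asymmetric-arc) child to expose the isolated vertex of $Y_3$ (resp.\ $Y_4$)---to be the main obstacle.
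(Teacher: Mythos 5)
Your proposal is correct and takes essentially the same route as the paper: hereditariness plus membership checks for $(\ref{ch-dwqt1})\Rightarrow(\ref{ch-dwqt2})$, the observation that $Q_1,Q_3,Q_4$ (via $Y_2,Y_3$) govern the admissible series compositions and $Q_2,Q_5,Q_6,Q_7$ (via $Y_1,Y_4$) the admissible order compositions for $(\ref{ch-dwqt2})\Rightarrow(\ref{ch-dwqt1})$, and Observations~\ref{observation:und_dir_WQT} and~\ref{forb-u-d2} together with $\forb(\WQT)=\{P_4,\co 2P_3\}$ for the remaining equivalences. Your claims (A) and (B) on the canonical merged di-co-tree, with the explicit case split into an asymmetric arc (giving $Y_2$ resp.\ $Y_4$) versus an all-symmetric child that must be a union node (giving $Y_3$ resp.\ $Y_1$), make precise what the paper only asserts informally, so your write-up is if anything the more rigorous of the two.
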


\begin{proof}
  $(\ref{ch-dwqt1}) \Rightarrow (\ref{ch-dwqt2})$
  The given forbidden digraphs  $D_1,\ldots,D_8,Q_1, \ldots, Q_7$ are not directed weakly quasi threshold graphs and the set of directed weakly quasi threshold graphs is hereditary.

 $(\ref{ch-dwqt2}) \Rightarrow (\ref{ch-dwqt1})$
 Let $G$ be a digraph without induced $D_1,\ldots,D_8,Q_1, \ldots, Q_7$. Since there are no induced $D_1,\ldots,D_8$, it holds that $G\in \DC$. Thus, $G$ is constructed by the disjoint union, the series and the order composition.
 $Q_1, Q_3$ and $ Q_4$ can only be build by a series composition of two graphs $G_1$ and $G_2$, where $G_1,G_2\in \{Y_2,Y_3\}$.
 We note that $Y_2, Y_3$ are contained in every directed co-graph containing more vertices, that is not a sequence of length at least one of series compositions of independent sets.
 Consequently, there are no bigger forbidden induced subdigraphs that emerged through a series operation, such that the $Q_1, Q_3$ and $ Q_4$ characterize exactly the allowed series compositions in $\DWQT$.

 The $Q_2, Q_5, Q_6$ and $ Q_7$ can only be build by an order composition of two graphs $G_1$ and $G_2$, where $G_1,G_2\in \{Y_1,Y_4\}$. We note that $Y_1, Y_4$ are contained in every directed co-graph containing more vertices, that is not a sequence of length at least one of order compositions of independent sets.
 Consequently, there are no forbidden induced subdigraphs containing more vertices that emerged through an order operation, such that the $Q_2, Q_5, Q_6$ and $ Q_7$ characterize exactly the allowed order compositions in $\DWQT$.
 Finally, by excluding these digraphs, we end up in the Definition \ref{ddwqt} for $\DWQT$, such that $G\in \DWQT$.

$(\ref{ch-dwqt2}) \Rightarrow (\ref{ch-dwqt4})$  By Observation \ref{observation:und_dir_WQT}.

  $(\ref{ch-dwqt3}) \Rightarrow (\ref{ch-dwqt2})$
  By Observation \ref{forb-u-d2}.
  
	$(\ref{ch-dwqt3}) \Leftrightarrow (\ref{ch-dwqt4})$
  Since $\forb(\WQT)= \{P_4,  \co 2P_3\}$.
\end{proof}

\begin{table}[h]
  \caption{Forbidden induced subdigraphs for $\DWQT$.}\label{tbl:DWQT}
  \center
  \begin{tabular}{c c c c }
	\begin{tikzpicture}
  \draw (0,0) \bnode (n1) {};
  \draw (1,0) \bnode (n2) {};
  \draw (0,1) \bnode (n3) {};
  \draw (1,1) \bnode (n4) {};
  \draw [-\ahead] (n1) edge (n2);
  \draw [-\ahead] (n3) edge (n4);

  \draw [-\ahead] (n1) edge (n3);
  \draw [-\ahead] (n1) edge (n4);
  \draw [-\ahead] (n2) edge (n3);
  \draw [-\ahead] (n2) edge (n4);
  \draw [-\ahead] (n3) edge (n1);
  \draw [-\ahead] (n3) edge (n2);
  \draw [-\ahead] (n4) edge (n1);
  \draw [-\ahead] (n4) edge (n2);
\end{tikzpicture} & \begin{tikzpicture}
  \draw (0,0) \bnode (n1) {};
  \draw (1,0) \bnode (n2) {};
  \draw (0,1) \bnode (n3) {};
  \draw (1,1) \bnode (n4) {};
  \draw [-\ahead] (n1) edge (n2);
  \draw [-\ahead] (n3) edge (n4);

  \draw [-\ahead] (n1) edge (n2);
  \draw [-\ahead] (n3) edge (n4);
  \draw [-\ahead] (n2) edge (n1);
  \draw [-\ahead] (n4) edge (n3);
  
  \draw [-\ahead] (n3) edge (n1);
  \draw [-\ahead] (n3) edge (n2);
  \draw [-\ahead] (n4) edge (n1);
  \draw [-\ahead] (n4) edge (n2);
\end{tikzpicture} & \begin{tikzpicture}
  \draw (0,1) \bnode (n0) {};
  \draw (1,1) \bnode (n1) {};
  \draw (2,1) \bnode (n2) {};
  \draw (0,0) \bnode (n3) {};
  \draw (1,0) \bnode (n4) {};
  \draw (2,0) \bnode (n5) {};
  \draw [-\ahead] (n0) edge (n1);
  \draw [-\ahead] (n1) edge (n0);

  \draw [-\ahead] (n3) edge (n4);
  \draw [-\ahead] (n4) edge (n3);
  \draw [-\ahead] (n0) edge (n3);
  \draw [-\ahead] (n1) edge (n3);
  \draw [-\ahead] (n2) edge (n3);
  \draw [-\ahead] (n0) edge (n4);
  \draw [-\ahead] (n1) edge (n4);
  \draw [-\ahead] (n2) edge (n4);
  \draw [-\ahead] (n0) edge (n5);
  \draw [-\ahead] (n1) edge (n5);
  \draw [-\ahead] (n2) edge (n5);

  \draw [-\ahead] (n3) edge (n0);
  \draw [-\ahead] (n3) edge (n1);
  \draw [-\ahead] (n3) edge (n2);
  \draw [-\ahead] (n4) edge (n0);
  \draw [-\ahead] (n4) edge (n1);
  \draw [-\ahead] (n4) edge (n2);
  \draw [-\ahead] (n5) edge (n0);
  \draw [-\ahead] (n5) edge (n2);
  \draw [-\ahead] (n5) edge (n1);

\end{tikzpicture} & \begin{tikzpicture}
  \draw (0.5,1) \bnode (n1) {};
  \draw (1.5,1) \bnode (n2) {};
  \draw (0,0) \bnode (n3) {};
  \draw (1,0) \bnode (n4) {};
  \draw (2,0) \bnode (n5) {};
  \draw [-\ahead] (n1) edge (n2);

  \draw [-\ahead] (n3) edge (n4);
  \draw [-\ahead] (n4) edge (n3);

  \draw [-\ahead] (n1) edge (n3);
  \draw [-\ahead] (n1) edge (n4);
  \draw [-\ahead] (n1) edge (n5);
  \draw [-\ahead] (n2) edge (n3);
  \draw [-\ahead] (n2) edge (n4);
  \draw [-\ahead] (n2) edge (n5);

  \draw [-\ahead] (n3) edge (n1);
  \draw [-\ahead] (n4) edge (n1);
  \draw [-\ahead] (n5) edge (n1);
  \draw [-\ahead] (n3) edge (n2);
  \draw [-\ahead] (n4) edge (n2);
  \draw [-\ahead] (n5) edge (n2);
\end{tikzpicture} \\
	$Q_1$ & $Q_2$ & $Q_3$ & $Q_4$ \\
	 & & & \\
	\begin{tikzpicture}
  \draw (1,1.4) \bnode (n1) {};
  \draw (0,0.7) \bnode (n2) {};
  \draw (1,0.7) \bnode (n3) {};
  \draw (2,0.7) \bnode (n4) {};
  \draw (1,0) \bnode (n5) {};
  \draw [-\ahead] (n2) edge (n1);
  \draw [-\ahead] (n2) edge (n3);
  \draw [-\ahead] (n2) edge (n5);
  \draw [-\ahead] (n4) edge (n1);
  \draw [-\ahead] (n4) edge (n3);
  \draw [-\ahead] (n4) edge (n5);

  \draw [-\ahead,in=0,out=0, looseness=2.8] (n5) edge (n1);
  \draw [-\ahead,in=90,out=90, looseness=1.5] (n2) edge (n4);
  \draw [-\ahead,in=90,out=90, looseness=1.5] (n4) edge (n2);
\end{tikzpicture} & \begin{tikzpicture}
  \draw (1,1.4) \bnode (n1) {};
  \draw (0,0.7) \bnode (n2) {};
  \draw (1,0.7) \bnode (n3) {};
  \draw (2,0.7) \bnode (n4) {};
  \draw (1,0) \bnode (n5) {};
  \draw [-\ahead] (n1) edge (n2);
  \draw [-\ahead] (n3) edge (n2);
  \draw [-\ahead] (n5) edge (n2);
  \draw [-\ahead] (n1) edge (n4);
  \draw [-\ahead] (n3) edge (n4);
  \draw [-\ahead] (n5) edge (n4);

  \draw [-\ahead,in=0,out=0, looseness=2.8] (n5) edge (n1);
  \draw [-\ahead,in=90,out=90, looseness=1.5] (n2) edge (n4);
  \draw [-\ahead,in=90,out=90, looseness=1.5] (n4) edge (n2);
\end{tikzpicture}  & \begin{tikzpicture}
  \draw (1,1) \bnode (n1) {};
  \draw (2,1) \bnode (n2) {};
  \draw (3,1) \bnode (n3) {};
  \draw (1,2) \bnode (n4) {};
  \draw (2,2) \bnode (n5) {};
  \draw (3,2) \bnode (n6) {};
  \draw [-\ahead] (n4) edge (n1);
  \draw [-\ahead] (n4) edge (n2);
  \draw [-\ahead] (n4) edge (n3);
  \draw [-\ahead] (n5) edge (n1);
  \draw [-\ahead] (n5) edge (n2);
  \draw [-\ahead] (n5) edge (n3);
  \draw [-\ahead] (n6) edge (n1);
  \draw [-\ahead] (n6) edge (n2);
  \draw [-\ahead] (n6) edge (n3);

  \draw [-\ahead,in=130,out=50, looseness=0.7] (n4) edge (n6);
  \draw [-\ahead,in=-130,out=-50, looseness=0.7] (n1) edge (n3);
\end{tikzpicture} & \\
	$Q_5$ & $Q_6$ & $Q_7$ &\\
  \end{tabular}
\end{table}

Since $\co \{Q_1,\ldots,Q_7\}\neq \{Q_1,\ldots,Q_7\}$ it is reasonable to introduce the complementary class of $\DWQT$. 

\subsection{Oriented weakly quasi threshold graphs}\label{sec-owqtg}

\begin{definition}[Oriented weakly quasi threshold graphs] \label{dowqt}
The class of {\em oriented weakly quasi threshold graphs} is recursively defined as follows.
\begin{enumerate}[(i)]
\item Every edgeless digraph is an oriented weakly quasi threshold graph.

\item If $G_1$ and $G_2$ are oriented  weakly quasi threshold graphs,
then $G_1\oplus G_2$  is an oriented  weakly quasi threshold graph.

\item
If $G$ is an oriented weakly quasi threshold   graph and $I$ is an edgeless digraph,
then
\begin{inparaenum}[(a)]
\item
$G\oslash I$ and
\item
$I \oslash G$ are oriented  weakly quasi threshold graphs.
\end{inparaenum}
\end{enumerate}
The class of oriented weakly quasi threshold graphs is denoted by $\OWQT$.
\end{definition}

\begin{theorem}\label{char-owqt}
Let $G$ be a digraph. The following properties are equivalent:
\begin{enumerate}
	\item\label{cc-owqt1}
  $G$ is a oriented  weakly quasi threshold graph.
	\item\label{cc-owqt2}
  $G \in \free(\{D_1,D_5,D_8, \overleftrightarrow{K_2},Q_7\})$.
  \item\label{cc-owqt3}
  $G \in \free(\{D_8, \overleftrightarrow{K_2}, Q_7 \})$ and $G$ is transitive.
  \item\label{cc-owqt4}
  $G \in \free(\{D_1,D_5,\overleftrightarrow{K_2} \})$ and $\un(G)$ is a weakly quasi threshold graph.
\end{enumerate}
\end{theorem}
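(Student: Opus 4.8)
The plan is to prove all equivalences with statement (\ref{cc-owqt2}) serving as the hub, mirroring the template already used for oriented co-graphs in Theorem \ref{th-ch-oco} and for oriented trivially perfect graphs in Theorem \ref{char-otop}. Concretely, I would establish $(\ref{cc-owqt1})\Leftrightarrow(\ref{cc-owqt2})$ by reducing the forbidden set of $\DWQT$, then $(\ref{cc-owqt2})\Leftrightarrow(\ref{cc-owqt3})$ through transitivity, and finally $(\ref{cc-owqt2})\Leftrightarrow(\ref{cc-owqt4})$ by passing to the underlying undirected graph.

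For $(\ref{cc-owqt1})\Rightarrow(\ref{cc-owqt2})$ I would note that an oriented weakly quasi threshold graph is in particular a directed weakly quasi threshold graph, since the defining operations of $\OWQT$ form a subset of those of $\DWQT$; hence Theorem \ref{char-dwqt} gives $G\in\free(\{D_1,\ldots,D_8,Q_1,\ldots,Q_7\})$, and the absence of any series composition yields $G\in\free(\{\overleftrightarrow{K_2}\})$, so in particular $G\in\free(\{D_1,D_5,D_8,\overleftrightarrow{K_2},Q_7\})$. The converse $(\ref{cc-owqt2})\Rightarrow(\ref{cc-owqt1})$ is where the reduction must be justified: once $\overleftrightarrow{K_2}$ is forbidden, each of $D_2,D_3,D_4,D_6,D_7$ and $Q_1,\ldots,Q_6$ is automatically excluded because it contains an induced $\overleftrightarrow{K_2}$. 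For the $D_i$ this is precisely the reduction already carried out in Theorem \ref{th-ch-oco}; for the $Q_i$ it follows from their constructions in Table \ref{table:Combinations_Ymengen}, since $Q_2,Q_5,Q_6$ are built using the bidirectional block $Y_1=\overleftrightarrow{K_2}$, while $Q_1,Q_3,Q_4$ arise from a series composition $\otimes$, which inserts bidirectional arcs. The only exception is $Q_7=Y_4\oslash Y_4$, an order composition of two triples each carrying a single arc, which contains no bidirectional arc and therefore must be retained. Thus $G\in\free(\{D_1,\ldots,D_8,Q_1,\ldots,Q_7\})$, so $G\in\DWQT$ by Theorem \ref{char-dwqt}, and forbidding $\overleftrightarrow{K_2}$ rules out any series operation, giving $G\in\OWQT$.

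The equivalence $(\ref{cc-owqt2})\Leftrightarrow(\ref{cc-owqt3})$ I would obtain exactly as in Theorem \ref{char-otop}: from (\ref{cc-owqt2}) we have $G\in\free(\{\overleftrightarrow{K_2},D_1,D_5\})$, so $G$ is transitive by Lemma \ref{le-co-t}, whereas conversely a transitive digraph has no induced $D_1$ or $D_5$; the residual forbidden set $\{D_8,\overleftrightarrow{K_2},Q_7\}$ is shared by both statements. For $(\ref{cc-owqt2})\Leftrightarrow(\ref{cc-owqt4})$ I would route through $\un(G)$. The direction $(\ref{cc-owqt2})\Rightarrow(\ref{cc-owqt4})$ uses that $G$ is an oriented, hence directed, weakly quasi threshold graph, so $\un(G)$ is weakly quasi threshold by Observation \ref{observation:und_dir_WQT}, while $G\in\free(\{D_1,D_5,\overleftrightarrow{K_2}\})$ is immediate from (\ref{cc-owqt2}). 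For $(\ref{cc-owqt4})\Rightarrow(\ref{cc-owqt2})$, since $\forb(\WQT)=\{P_4,\co 2P_3\}$, Observation \ref{forb-u-d2} forbids every biorientation of $P_4$ and of $\co 2P_3$ as an induced subdigraph of $G$; as $D_8$ is an orientation of $P_4$ and $Q_7$ is an orientation of $\co 2P_3$ (its underlying graph, read off from $Y_4\oslash Y_4$, is exactly $\co 2P_3$), we get $G\in\free(\{D_8,Q_7\})$, which together with $G\in\free(\{D_1,D_5,\overleftrightarrow{K_2}\})$ yields (\ref{cc-owqt2}).

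The main obstacle I expect is the bookkeeping in $(\ref{cc-owqt2})\Rightarrow(\ref{cc-owqt1})$: one must check, digraph by digraph, that every member of $\{D_1,\ldots,D_8,Q_1,\ldots,Q_7\}$ other than $D_1,D_5,D_8,Q_7$ contains an induced $\overleftrightarrow{K_2}$, and simultaneously confirm that $Q_7$ does \emph{not}, so that the reduced list is both sufficient and minimal. The second delicate point is the identification of $\un(Q_7)$ with $\co 2P_3$, which is precisely what licenses the use of Observation \ref{forb-u-d2} in $(\ref{cc-owqt4})\Rightarrow(\ref{cc-owqt2})$.
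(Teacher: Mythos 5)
Your proposal is correct and follows essentially the same architecture as the paper's proof: statement (2) as the hub, the reduction of the $\DWQT$ forbidden set via the absence of $\overleftrightarrow{K_2}$ (noting that only $Q_7$ survives because it alone among the $Q_i$ is built without a series composition or a $Y_1=\overleftrightarrow{K_2}$ block), Lemma \ref{le-co-t} for the transitivity equivalence, and $\forb(\WQT)=\{P_4,\co 2P_3\}$ together with Observations \ref{observation:und_dir_WQT} and \ref{forb-u-d2} for statement (4). The only local variation is in $(\ref{cc-owqt2})\Rightarrow(\ref{cc-owqt1})$, where you pass through $\DWQT$ and then discard series compositions, whereas the paper first concludes $G\in\OC$ and then argues that forbidding $Q_7$ characterizes the admissible order compositions; both are valid, and your explicit identification of $\un(Q_7)$ with $\co 2P_3$ usefully fills in a step the paper leaves implicit.
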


\begin{proof}
	$(\ref{cc-owqt1})\Rightarrow (\ref{cc-owqt2})$
	Let $G$ be in $\OWQT$. Since $\OWQT \subset \OC$ we know that $D_1,D_5, D_8$ and $\overleftrightarrow{K_2}$ are forbidden.
	Further, since  $\OWQT \subset \DWQT$  the graphs of Table \ref{tbl:DWQT} are forbidden.
	As $\overleftrightarrow{K_2}$ is forbidden, too, this reduces the table and only $Q_7$ remains.
	Since $\OWQT$ is closed under taking induced subdigraphs, this holds for every digraph in this class.

	$(\ref{cc-owqt2})\Rightarrow (\ref{cc-owqt1})$
	Since $D_1, D_5, D_8$ and $\overleftrightarrow{K_2}$ are forbidden induced subdigraphs of $G$, it holds that $G$ is an oriented co-graph, see Theorem \ref{char-oc}.
	Consequently, $G$ has been constructed by the operations allowed for oriented co-graphs, which are the disjoint union and the order composition.
	Further, $Q_7$ is excluded, since it must have been the result of an order composition.
	Since $Q_7=Y_4 \oslash Y_4$ and $Y_4$ is neither an edgeless digraph nor a transitive tournament, this order composition is not valid in $\OWQT$ and thus, $Q_7$ is a forbidden subdigraph.
	For every other possible order composition of two oriented digraphs $G_1$ and $G_2$, either $G_1$ or $G_2$ is edgeless or a transitive tournament, which is allowed in $\OWQT$, or both graphs contain $Y_4$ as induced subdigraph, such that the forbidden $Q_7$ completely characterizes the restriction of the order compositions in $\OWQT$. Since $G$ does not contain one of these forbidden subdigraphs, $G\in OWQT$ follows.
	
	$(\ref{cc-owqt2})\Rightarrow (\ref{cc-owqt3})$
	Follows by Lemma \ref{le-tt}.

	$(\ref{cc-owqt3}) \Rightarrow (\ref{cc-owqt2})$
	The transitivity of $G$ implies that $D_1$ and $D_5$ are forbidden, as they do not satisfy the definition of transitivity.

  $(\ref{cc-owqt2}) \Leftrightarrow (\ref{cc-owqt4})$ Since $\forb(\WQT)=\{P_4, \co 2P_3\}$.
\end{proof}

\begin{observation}
  If $G\in \DWQT$ then the underlying undirected graph of the symmetric part of $G$ is a weakly quasi threshold graph and the asymmetric part of $G$ is an oriented weakly quasi threshold graph.
\end{observation}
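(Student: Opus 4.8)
The plan is to argue by structural induction on the recursive construction of $G\in\DWQT$ given in Definition \ref{ddwqt}, tracking in parallel how the symmetric part $sym(G)$ and the asymmetric part $asym(G)$ evolve under each of the three building operations. The crucial observation is that a symmetric (bidirectional) arc can only be produced by a series composition $\otimes$, while a single-directional arc can only be produced by an order composition $\oslash$; the disjoint union $\oplus$ creates no arc between its two arguments at all. Hence each operation acts on $sym(G)$ and $asym(G)$ essentially independently, and I would simply record these actions.

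First I would handle the symmetric part. In the base case an edgeless digraph has empty symmetric part, so $\un(sym(G))$ is edgeless. For the inductive step, both $G_1\oplus G_2$ and $G\oslash I$ add only asymmetric arcs across the cut, so restricted to the symmetric part they give the disjoint union $\un(sym(G_1))\oplus\un(sym(G_2))$ (respectively $\un(sym(G))\oplus\un(I)$ with $\un(I)$ edgeless); the series composition $G\otimes I$ adds bidirectional arcs between every vertex of $G$ and every vertex of $I$, so restricted to the symmetric part it gives the join $\un(sym(G))\otimes\un(I)$ with $\un(I)$ edgeless. These are exactly the generating operations of $\WQT$ listed in Table \ref{t-uc} (base $I$, disjoint union, and join $G_1\otimes I$ with an edgeless graph), so by induction $\un(sym(G))$ is a weakly quasi threshold graph.

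Symmetrically, I would handle the asymmetric part. The base case is again edgeless, hence an oriented weakly quasi threshold graph. For the inductive step, both $G_1\oplus G_2$ and $G\otimes I$ contribute no asymmetric arc across the cut (the latter contributes only bidirectional arcs), so restricted to the asymmetric part they give a disjoint union; the order compositions $G\oslash I$ and $I\oslash G$ contribute exactly the single-directional arcs and restrict to the same order composition on $asym(G)$. These coincide precisely with the operations of Definition \ref{dowqt} for $\OWQT$ (edgeless base, disjoint union, and order composition with an edgeless digraph), so by induction $asym(G)$ is an oriented weakly quasi threshold graph.

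The argument is entirely routine once this decomposition is in place; the only point requiring care is to verify that the two restrictions genuinely decouple, i.e.\ that no single operation contributes both symmetric and relevant asymmetric arcs to the same part. This follows immediately from the exclusivity noted above, since each of $\oplus$, $\oslash$, $\otimes$ produces arcs of a single type. I therefore expect no substantial obstacle beyond this bookkeeping.
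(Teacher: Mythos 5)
Your proof is correct and takes essentially the same approach as the paper, which justifies the observation by noting that the symmetric part is built by exactly the rules of $\WQT$ and the asymmetric part by the rules of $\OWQT$; you simply make the case analysis over the three operations explicit.
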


This holds since the asymmetric part is exactly build with the same rules like weakly quasi threshold graphs and the asymmetric part with the rules of \OWQT.

\subsection{Directed co-weakly quasi threshold graphs}\label{sec-dcwqtg}

\begin{definition}[Directed co-weakly quasi threshold graphs] \label{dcwqt}
The class of {\em directed co-weakly quasi threshold graphs} is recursively defined as follows.
\begin{enumerate}[(i)]
\item Every bidirectional complete digraph is a directed co-weakly quasi threshold graph.

\item If $G_1$ and $G_2$ are directed  co-weakly quasi threshold graphs,
then $G_1\otimes G_2$  is a directed  co-weakly quasi threshold graph.

\item
If $G$ is a directed  co-weakly quasi threshold graph and $K$ is a  bidirectional complete digraph,
then
\begin{inparaenum}[(a)]
\item
$G\oslash K$,
\item
$K \oslash G$, and
\item
$G\oplus K$  are  directed  co-weakly quasi threshold graphs.
\end{inparaenum}
\end{enumerate}
The class of directed co-weakly quasi threshold graphs is denoted by $\DCWQT$.
\end{definition}

\begin{theorem}\label{char-dcwqt}
	For a graph $G$ the following properties are equivalent.
	\begin{enumerate}
		\item $G \in \DCWQT$.
		\item $G \in \free(\{D_1,\ldots, D_8, \co Q_1, \ldots, \co Q_7 \})$, see Table \ref{tbl:DCWQT}.
	\end{enumerate}
\end{theorem}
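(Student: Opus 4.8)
The plan is to prove the equivalence $G \in \DCWQT \Leftrightarrow G \in \free(\{D_1,\ldots,D_8,\co Q_1,\ldots,\co Q_7\})$ by exploiting the complementation duality already established for the undirected case, together with Lemma~\ref{th-free} (which tells us $\co X = \free(\co F)$ for $X = \free(F)$). The key structural observation is that $\DCWQT$ is exactly the edge-complement class of $\DWQT$, i.e.\ $\DCWQT = \co\DWQT$. This mirrors the undirected situation in Table~\ref{t-uc}, where co-weakly quasi threshold graphs are the complements of weakly quasi threshold graphs, and it is visible directly from comparing Definition~\ref{dcwqt} with Definition~\ref{ddwqt}: complementation sends the disjoint union $\oplus$ to the series composition $\otimes$ (and vice versa) and sends edgeless digraphs to bidirectional complete digraphs, while the order composition $\oslash$ is self-dual under edge complementation (since $\co(G_1 \oslash G_2) = \co G_1 \oslash \co G_2$ when one accounts for the reversal of the non-arc into the backward arc direction — I would state this small composition-complementation lemma explicitly).

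First I would verify the three complementation identities for the operations at the digraph level: that $\co(G_1 \oplus G_2) = \co G_1 \otimes \co G_2$, that $\co(G_1 \otimes G_2) = \co G_1 \oplus \co G_2$, and that $\co(G_1 \oslash G_2) = \co G_1 \oslash \co G_2$, the last being the crucial self-duality of the order composition. Using these, an induction on the recursive construction of $\DWQT$ shows that applying $\co$ to every digraph in $\DWQT$ turns each building rule of Definition~\ref{ddwqt} into the corresponding rule of Definition~\ref{dcwqt}: the edgeless base becomes the bidirectional-complete base, the $\oplus$-closure becomes $\otimes$-closure, the two $\oslash$-rules map to the two $\oslash$-rules, and the $G \otimes I$ rule maps to $G \oplus K$. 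This establishes $\co\DWQT = \DCWQT$.

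With that identity in hand, the theorem follows almost mechanically. By Theorem~\ref{char-dwqt} we have $\DWQT = \free(\{D_1,\ldots,D_8,Q_1,\ldots,Q_7\})$, so Lemma~\ref{th-free} gives
$$
\DCWQT = \co\DWQT = \co\free(\{D_1,\ldots,D_8,Q_1,\ldots,Q_7\}) = \free(\co\{D_1,\ldots,D_8,Q_1,\ldots,Q_7\}).
$$
Finally I would apply Observation~\ref{obs18}(1), which states $\co\{D_1,\ldots,D_8\} = \{D_1,\ldots,D_8\}$, to simplify the complement of the forbidden set to exactly $\{D_1,\ldots,D_8,\co Q_1,\ldots,\co Q_7\}$, matching the claimed list (and Table~\ref{tbl:DCWQT}). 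This closes the proof.

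The step I expect to be the main obstacle is verifying the self-duality $\co(G_1 \oslash G_2) = \co G_1 \oslash \co G_2$ cleanly, because the order composition introduces all forward arcs $(u,v)$ with $u \in V_1, v \in V_2$ but no backward arcs; under edge complementation these forward arcs are removed while the backward non-arcs $(v,u)$ become arcs, so a naive reading would suggest the complement is an order composition in the \emph{opposite} direction. I would resolve this by being careful that the internal complements $\co G_1, \co G_2$ are taken first and that the cross arcs of the complemented digraph are precisely the backward direction, confirming that the result is genuinely an order composition (of $\co G_1$ and $\co G_2$ in the appropriate orientation) and that the class $\DCWQT$ as defined is closed under this. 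Once this composition lemma is pinned down, the remainder is routine bookkeeping via Lemma~\ref{th-free} and Observation~\ref{obs18}.
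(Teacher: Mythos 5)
Your proposal is correct and follows essentially the same route as the paper, which likewise derives the result from $\DCWQT=\co\DWQT$, Lemma~\ref{th-free} applied to Theorem~\ref{char-dwqt}, and the self-complementarity of $\{D_1,\ldots,D_8\}$ (the paper simply states these steps without spelling out the operation-level complementation identities you verify). The only nit is that the order-composition identity should read $\co(G_1\oslash G_2)=\co G_2\oslash\co G_1$ (the operands swap), a point you already flag and which is harmless here since Definition~\ref{dcwqt} contains both $G\oslash K$ and $K\oslash G$.
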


\begin{proof}
	By Lemma \ref{th-free}, we know that $\co \DWQT=\free(\co \{D_1,\ldots, D_8, Q_1, \ldots, Q_7 \})$. Since $\DCWQT=\co \DWQT$,	it follows by Theorem \ref{char-dwqt} that the forbidden induced subdigraphs are $D_1,\ldots, D_8$ since they are self complementary, and $\co Q_1, \ldots, \co Q_7$. 
\end{proof}

\begin{table}[h]
  \caption{Forbidden induced subdigraphs for $\DCWQT$.}\label{tbl:DCWQT}
  \center

  \begin{tabular}{c c c c }
	\begin{tikzpicture}
  \draw (0,0) \bnode (n1) {};
  \draw (1,0) \bnode (n2) {};
  \draw (0,1) \bnode (n3) {};
  \draw (1,1) \bnode (n4) {};
  \draw [-\ahead] (n1) edge (n2);
  \draw [-\ahead] (n3) edge (n4);
\end{tikzpicture} & \begin{tikzpicture}
  \draw (0.5,1) \bnode (n1) {};
  \draw (1.5,1) \bnode (n2) {};
  \draw (0,0) \bnode (n3) {};
  \draw (1,0) \bnode (n4) {};
  \draw (2,0) \bnode (n5) {};
  \draw [-\ahead] (n1) edge (n2);
  \draw [-\ahead] (n3) edge (n4);
  \draw [-\ahead] (n4) edge (n3);
  \draw [-\ahead] (n4) edge (n5);
  \draw [-\ahead] (n5) edge (n4);
\end{tikzpicture} & \begin{tikzpicture}
  \draw (0,1) \bnode (n0) {};
  \draw (1,1) \bnode (n1) {};
  \draw (2,1) \bnode (n2) {};
  \draw (0,0) \bnode (n3) {};
  \draw (1,0) \bnode (n4) {};
  \draw (2,0) \bnode (n5) {};
  \draw [-\ahead] (n0) edge (n1);
  \draw [-\ahead] (n1) edge (n0);
  \draw [-\ahead] (n1) edge (n2);
  \draw [-\ahead] (n2) edge (n1);

  \draw [-\ahead] (n3) edge (n4);
  \draw [-\ahead] (n4) edge (n3);
  \draw [-\ahead] (n4) edge (n5);
  \draw [-\ahead] (n5) edge (n4);
\end{tikzpicture} & \begin{tikzpicture}
  \draw (0,0) \bnode (n1) {};
  \draw (1,0) \bnode (n2) {};
  \draw (0,1) \bnode (n3) {};
  \draw (1,1) \bnode (n4) {};
  \draw [-\ahead] (n1) edge (n2);
  \draw [-\ahead] (n1) edge (n3);
  \draw [-\ahead] (n4) edge (n2);
  \draw [-\ahead] (n4) edge (n3);
\end{tikzpicture} \\
	$\co Q_1$ & $\co Q_4$ & $\co Q_3$ & $\co Q_2$ \\
	 & & & \\
	\begin{tikzpicture}
  \draw (1,2) \bnode (n1) {};
  \draw (0,1) \bnode (n2) {};
  \draw (1,1) \bnode (n3) {};
  \draw (2,1) \bnode (n4) {};
  \draw (1,0) \bnode (n5) {};
  \draw [-\ahead] (n1) edge (n3);
  \draw [-\ahead] (n3) edge (n1);
  \draw [-\ahead] (n5) edge (n3);
  \draw [-\ahead] (n3) edge (n5);
  \draw [-\ahead] (n1) edge (n2);
  \draw [-\ahead] (n3) edge (n2);
  \draw [-\ahead] (n5) edge (n2);
  \draw [-\ahead] (n1) edge (n4);
  \draw [-\ahead] (n3) edge (n4);
  \draw [-\ahead] (n5) edge (n4);

  \draw [-\ahead,in=0,out=0, looseness=2] (n5) edge (n1);
\end{tikzpicture} & \begin{tikzpicture}
  \draw (1,2) \bnode (n1) {};
  \draw (0,1) \bnode (n2) {};
  \draw (1,1) \bnode (n3) {};
  \draw (2,1) \bnode (n4) {};
  \draw (1,0) \bnode (n5) {};
  \draw [-\ahead] (n1) edge (n3);
  \draw [-\ahead] (n3) edge (n1);
  \draw [-\ahead] (n5) edge (n3);
  \draw [-\ahead] (n3) edge (n5);
  \draw [-\ahead] (n2) edge (n1);
  \draw [-\ahead] (n2) edge (n3);
  \draw [-\ahead] (n2) edge (n5);
  \draw [-\ahead] (n4) edge (n1);
  \draw [-\ahead] (n4) edge (n3);
  \draw [-\ahead] (n4) edge (n5);

  \draw [-\ahead,in=0,out=0, looseness=2] (n5) edge (n1);
\end{tikzpicture}  & \begin{tikzpicture}
  \draw (1,1) \bnode (n1) {};
  \draw (2,1) \bnode (n2) {};
  \draw (3,1) \bnode (n3) {};
  \draw (1,2) \bnode (n4) {};
  \draw (2,2) \bnode (n5) {};
  \draw (3,2) \bnode (n6) {};
  \draw [-\ahead] (n2) edge (n1);
  \draw [-\ahead] (n1) edge (n2);
  \draw [-\ahead] (n3) edge (n2);
  \draw [-\ahead] (n2) edge (n3);
  \draw [-\ahead] (n4) edge (n5);
  \draw [-\ahead] (n5) edge (n4);
  \draw [-\ahead] (n5) edge (n6);
  \draw [-\ahead] (n6) edge (n5);
  \draw [-\ahead] (n4) edge (n1);
  \draw [-\ahead] (n4) edge (n2);
  \draw [-\ahead] (n4) edge (n3);
  \draw [-\ahead] (n5) edge (n1);
  \draw [-\ahead] (n5) edge (n2);
  \draw [-\ahead] (n5) edge (n3);
  \draw [-\ahead] (n6) edge (n1);
  \draw [-\ahead] (n6) edge (n2);
  \draw [-\ahead] (n6) edge (n3);

  \draw [-\ahead,in=120,out=60, looseness=1] (n4) edge (n6);
  \draw [-\ahead,in=-120,out=-60, looseness=1] (n1) edge (n3);
\end{tikzpicture} & \\
	$\co Q_5$ & $\co Q_6$ & $\co Q_7$ &\\
  \end{tabular}
\end{table}

\subsection{Oriented co-weakly quasi threshold graphs}\label{sec-ocwqtg}

\begin{definition}[Oriented co-weakly quasi threshold graphs] \label{ocwqt}
The class of {\em oriented co-weakly quasi threshold graphs} is recursively defined as follows.
\begin{enumerate}[(i)]
\item Every transitive tournament is an oriented co-weakly quasi threshold graph.

\item
If $G$ is an oriented co-weakly quasi threshold graph and $T$ is a transitive tournament,
then
\begin{inparaenum}[(a)]
\item
$G\oslash T$,
\item
$T \oslash G$, and
\item
$G\oplus T$  are  oriented co-weakly quasi threshold graphs.
\end{inparaenum}
\end{enumerate}
The class of oriented co-weakly quasi threshold graphs is denoted by $\OCWQT$.
\end{definition}

Obviously, a transitive tournament $\overrightarrow{T_n}$ is a subdigraph, but not an induced subdigraph of $\overleftrightarrow{K_n}$.
Thus, the operations allowed in $\OCWQT$ are not exactly building the complement of the digraphs in $\DCWQT$, but since biorientations are forbidden in oriented digraphs, it is sensible to define the class like we did above.

\begin{table}\label{fig:forbiddenOCWQT}
	\begin{center}
	\caption{Forbidden induced subdigraphs of $\OCWQT$.}
  	\vspace{5mm}
		\begin{tikzpicture}
  \draw (0,0.5) \bnode (n1) {};
  \draw (1,0.5) \bnode (n2) {};
  \draw (2,0.5) \bnode (n3) {};
  \draw (0,1) \bnode (n4) {};
  \draw (1,1) \bnode (n5) {};
  \draw (2,1) \bnode (n6) {};
  \draw [-\ahead] (n1) edge (n2);
  \draw [-\ahead] (n3) edge (n2);
  \draw [-\ahead] (n4) edge (n5);
  \draw [-\ahead] (n6) edge (n5);

  \draw (1,0) node (t1){$D_{21}$};

  \draw (3,0.5) \bnode (n1) {};
  \draw (4,0.5) \bnode (n2) {};
  \draw (5,0.5) \bnode (n3) {};
  \draw (3,1) \bnode (n4) {};
  \draw (4,1) \bnode (n5) {};
  \draw (5,1) \bnode (n6) {};
  \draw [-\ahead] (n2) edge (n1);
  \draw [-\ahead] (n2) edge (n3);
  \draw [-\ahead] (n5) edge (n4);
  \draw [-\ahead] (n5) edge (n6);

  \draw (4,0) node (t2){$D_{22}$};

  \draw (6,0.5) \bnode (n1) {};
  \draw (7,0.5) \bnode (n2) {};
  \draw (8,0.5) \bnode (n3) {};
  \draw (6,1) \bnode (n4) {};
  \draw (7,1) \bnode (n5) {};
  \draw (8,1) \bnode (n6) {};
  \draw [-\ahead] (n2) edge (n1);
  \draw [-\ahead] (n2) edge (n3);
  \draw [-\ahead] (n4) edge (n5);
  \draw [-\ahead] (n6) edge (n5);

  \draw (7,0) node (t3){$D_{23}$};

\end{tikzpicture}
	\end{center}
\end{table}

\begin{table}[ht]
\caption{Digraphs to build the forbidden subdigraphs of $\OCWQT$.}
\label{fig:X_of_OCWQT}
	\begin{center}
		\begin{tikzpicture}
  \draw (0,1) \bnode (n1) {};
  \draw (1,1) \bnode (n2) {};
  \draw (2,1) \bnode (n3) {};

  \draw [-\ahead] (n1) edge (n2);
  \draw [-\ahead] (n3) edge (n2);

  \draw (1,0.5) node (t1){$X_1$};

  \draw (3,1) \bnode (n1) {};
  \draw (4,1) \bnode (n2) {};
  \draw (5,1) \bnode (n3) {};

  \draw [-\ahead] (n2) edge (n1);
  \draw [-\ahead] (n2) edge (n3);

  \draw (4,0.5) node (t2){$X_2$};

\end{tikzpicture}
	\end{center}
\end{table}

\begin{theorem}\label{char-ocwqt}
	For a graph $G$ the following properties are equivalent.
	\begin{enumerate}
     \item\label{occc1}
    $G$ is a oriented co-weakly quasi threshold graph.
		\item\label{occc2}
    $G \in \free(\{D_1, D_5, D_8, \overleftrightarrow{K_2}, D_{12}, D_{21}, D_{22}, D_{23}\})$\footnote{$D_{12}$ is equal to $\co Q_2$.}.
    \item\label{occc3}
    $G$ is an oriented co-graph and $G \in \free(\{D_{12}, D_{21}, D_{22}, D_{23}\})$.
    \item\label{occc4}
    $G \in \free(\{D_8, \overleftrightarrow{K_2}, D_{12}, D_{21}, D_{22}, D_{23}\})$ and $G$ is transitive.
	\end{enumerate}
\end{theorem}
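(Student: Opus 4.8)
The plan is to prove the cycle $(\ref{occc1})\Rightarrow(\ref{occc2})\Rightarrow(\ref{occc3})\Rightarrow(\ref{occc1})$ and, separately, the equivalence $(\ref{occc2})\Leftrightarrow(\ref{occc4})$, following the template of Theorems \ref{th-ch-oco} and \ref{char-owqt}. For $(\ref{occc1})\Rightarrow(\ref{occc2})$ I would first observe that every transitive tournament is an oriented co-graph and that both operations in Definition \ref{ocwqt} are oriented co-graph operations, so $\OCWQT\subseteq\OC$; Theorem \ref{th-ch-oco} then already excludes $D_1,D_5,D_8,\overleftrightarrow{K_2}$. It remains to verify directly that none of $D_{12},D_{21},D_{22},D_{23}$ is itself an oriented co-weakly quasi threshold graph, after which heredity of $\OCWQT$ finishes this implication. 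The step $(\ref{occc2})\Rightarrow(\ref{occc3})$ is immediate: freeness from $\{D_1,D_5,D_8,\overleftrightarrow{K_2}\}$ means exactly that $G$ is an oriented co-graph by Theorem \ref{th-ch-oco}, and the remaining four forbidden digraphs are carried over unchanged.

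The substantial implication is $(\ref{occc3})\Rightarrow(\ref{occc1})$, which I would prove by induction on $|V(G)|$ using the co-tree of the oriented co-graph $G$. Two structural observations drive it. First, $D_{12}=(\bullet\oplus\bullet)\oslash(\bullet\oplus\bullet)$, so $D_{12}$ occurs in any order composition $A\oslash B$ in which both $A$ and $B$ contain two non-adjacent vertices; hence freeness from $D_{12}$ forces every order composition to have a tournament on one side. Second, $X_1=(\bullet\oplus\bullet)\oslash\bullet$ and $X_2=\bullet\oslash(\bullet\oplus\bullet)$ are, up to isomorphism, the only weakly connected oriented co-graphs on three vertices that are not transitive tournaments, and every weakly connected oriented co-graph that is not a transitive tournament contains $X_1$ or $X_2$: writing such a graph as $A\oslash B$, its non-adjacent pair lies wholly inside $A$ or inside $B$, and adjoining one vertex from the opposite side yields an induced $X_1$ or $X_2$. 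Since $D_{21},D_{22},D_{23}$ are precisely the three disjoint unions $X_1\oplus X_1$, $X_2\oplus X_2$ and $X_1\oplus X_2$, freeness from $\{D_{21},D_{22},D_{23}\}$ guarantees that a digraph has at most one weakly connected component that is not a transitive tournament.

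With these facts the induction proceeds by cases. If $G$ is a tournament then, being a transitive oriented co-graph (Lemmas \ref{le-co-t} and \ref{le-tt}), it is a transitive tournament and lies in $\OCWQT$ by the base case of Definition \ref{ocwqt}. If $G$ is weakly connected and larger, decompose it into order-indecomposable factors $G_1\oslash\cdots\oslash G_m$; freeness from $D_{12}$ forces all but at most one factor to be a single vertex, so $G=T'\oslash G_k\oslash T''$ with $T',T''$ transitive tournaments and $G_k$ a disconnected factor, and the rules $T\oslash G$ and $G\oslash T$ reduce the claim to $G_k$. If $G$ is disconnected, write it as the disjoint union of its weakly connected components; by the component bound at most one of them fails to be a transitive tournament, so $G$ arises from that single exceptional component (supplied by the induction hypothesis) by repeatedly attaching transitive tournaments via $\oplus$, exactly as permitted in Definition \ref{ocwqt}. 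Finally $(\ref{occc2})\Leftrightarrow(\ref{occc4})$ follows as in Theorem \ref{char-owqt}: Lemma \ref{le-co-t} converts freeness from $\{D_1,D_5,\overleftrightarrow{K_2}\}$ into transitivity, while conversely a transitive digraph contains neither $D_1$ nor $D_5$, and the remaining graphs $D_8,\overleftrightarrow{K_2},D_{12},D_{21},D_{22},D_{23}$ occur in both statements.

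The main obstacle is the completeness claim inside $(\ref{occc3})\Rightarrow(\ref{occc1})$: one must be sure that $D_{12}$ alone controls \emph{every} illegal order composition and that $\{D_{21},D_{22},D_{23}\}$ alone controls \emph{every} illegal disjoint union, i.e.\ that no further forbidden subdigraph is needed. This rests on the minimality statements of the second paragraph, namely that $(\bullet\oplus\bullet)\oslash(\bullet\oplus\bullet)$ is the smallest order composition outside $\OCWQT$ and that $X_1,X_2$ are the smallest non-transitive-tournament components. The subtle point is to run the induction on the genuine weakly connected components of $G$ rather than on the disjoint-union children of an arbitrary co-tree, since only the former make the ``at most one non-tournament component'' invariant accurate; the permissive disjoint union of single vertices (for instance an edgeless digraph, which is freely constructible in $\OCWQT$) shows why the bound must be phrased in terms of components and not of co-tree nodes.
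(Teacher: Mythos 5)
Your proof is correct and follows essentially the same route as the paper's: reduce to oriented co-graphs via the forbidden set $\{D_1,D_5,D_8,\overleftrightarrow{K_2}\}$, use $D_{12}$ to control the order compositions and $D_{21},D_{22},D_{23}$ to control the disjoint unions, and obtain statement (4) from transitivity via Lemmas \ref{le-co-t} and \ref{le-tt}. Your explicit induction on $|V(G)|$, together with the minimality of $X_1,X_2$ and the ``at most one non-tournament weakly connected component'' invariant, merely makes rigorous the completeness argument that the paper states informally.
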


\begin{proof}
	$(\ref{occc1}) \Rightarrow (\ref{occc2})$
	Let $G\in \OCWQT$ and assume $G$ contains one of  $D_1, D_5, D_8, \overleftrightarrow{K_2}, D_{12}, D_{21}, D_{22}, D_{23}$.
	Since these digraphs are not constructed by the operations of $\OCWQT$, this contradicts that $G$ is in this class. Since $\OCWQT$ is closed under taking induced subdigraphs, $G$ cannot contain one of these graphs as induced subdigraphs.

	$(\ref{occc2}) \Rightarrow (\ref{occc1})$
	Since $D_1, D_5, D_8$ and $\overleftrightarrow{K_2}$ are forbidden induced subdigraphs of $G$, $G$ is an oriented co-graph, see Theorem \ref{char-ocwqt}. Thus, $G$ has been constructed by the operations allowed for oriented co-graphs, which are the disjoint union and the order composition.

	Since $D_{12}$ is not an induced subdigraph of $G$, in every order composition of two graphs $G_1$ and $G_2$ in the construction of $G$, at least one of them must have been a transitive tournament. For every bigger digraph, composed by an order composition of two oriented digraphs $G_1$ and $G_2$, it holds that either one of them is a transitive tournament, which is allowed in the order composition in $\OCWQT$, or else $I_2$ is included as induced subdigraph in $G_1$ and $G_2$, which means that $D_{12}$ is contained as induced subdigraph. This leads to the conclusion that prohibiting $D_{12}$ exactly characterizes the order composition allowed in $\OCWQT$.

	Since $D_{21}, D_{22}$ and $D_{23}$ are not completely connected, they must be the result of the disjoint union of two oriented co-graphs $G_1$ and $G_2$, with $G_1,G_2\in \{X_1,X_2\}$.
	As $X_1$ and $X_2$ are neither transitive tournaments nor edgeless digraphs, they are forbidden in $\OCWQT$.
	Every bigger oriented digraph emerging by a disjoint union either contains $D_{21}, D_{22}$ and $D_{23}$, or is feasible to build by adding transitive tournaments one by one. Consequently, excluding  $D_{21}, D_{22}$ and $D_{23}$ exactly characterizes the restriction of the disjoint union in $\OCWQT$.

	This exactly leads to Definition \ref{char-owqt} of $\OWQT$, such that $G\in \OCWQT$.

	$(\ref{occc2}) \Leftrightarrow (\ref{occc3})$
	Holds by Theorem \ref{char-oc}.

	$(\ref{occc2}) \Rightarrow (\ref{occc4})$
	Holds by Lemma \ref{le-tt}.

	$(\ref{occc4}) \Rightarrow (\ref{occc2})$
	The transitivity of $G$ implies that there cannot be $D_1$ or $D_5$, as both do not satisfy the definition of transitivity.
\end{proof}

The next class we consider is the class of simple co-graphs, of which we construct a directed version.

\subsection{Directed simple co-graphs}\label{sec-dscg}

\begin{definition}[Directed simple co-graphs] \label{dscg}
The class of {\em directed simple co-graphs} is recursively defined as follows.
\begin{enumerate}[(i)]
\item Every digraph on a single vertex $(\{v\},\emptyset)$,
denoted by $\bullet$, is a directed simple co-graph.

\item
If $G$ is a  directed simple co-graph $I$ is an edgeless digraph,
then
\begin{inparaenum}[(a)]
\item
$G\oplus I$,
\item
$G\oslash I$,
\item
$I \oslash G$, and
\item
$G\otimes I$  are directed simple co-graphs.
\end{inparaenum}
\end{enumerate}
The class of directed simple co-graphs is denoted by $\DSC$.
\end{definition}

\begin{observation}\label{observation:und_dir_SC}
	If $G$ is a directed simple co-graph, $\un(G)$ is a simple co-graph.
\end{observation}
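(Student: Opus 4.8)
The plan is to prove this by structural induction on the recursive definition of $\DSC$ given in Definition \ref{dscg}, mirroring the strategy behind the analogous Observations \ref{obs-co1} and \ref{observation:und_dir_WQT}. The base case is immediate: the single-vertex digraph $\bullet$ has underlying undirected graph equal to the single-vertex graph, which is a simple co-graph by the first clause of the definition of $\SC$ in Table \ref{t-uc}.

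For the inductive step, the key fact I would isolate first is how $\un(\cdot)$ interacts with the three digraph compositions. Directly from the definitions one checks that for vertex-disjoint digraphs $G_1, G_2$ we have $\un(G_1 \oplus G_2) = \un(G_1) \oplus \un(G_2)$, while both $\un(G_1 \oslash G_2) = \un(G_1) \otimes \un(G_2)$ and $\un(G_1 \otimes G_2) = \un(G_1) \otimes \un(G_2)$. The essential point is that the order composition $\oslash$ and the series composition $\otimes$ each place at least one arc between every pair of vertices drawn from the two parts, so after forgetting orientations both yield exactly the undirected join $\otimes$. I would also record the trivial fact that if $I$ is an edgeless digraph then $\un(I)$ is an edgeless undirected graph.

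With these facts in hand the four inductive cases all collapse to valid $\SC$-operations. Writing $H = \un(G)$ for the graph obtained by the induction hypothesis (so $H \in \SC$) and $I' = \un(I)$ for the edgeless undirected graph coming from the edgeless digraph $I$: the case $G \oplus I$ gives $\un(G \oplus I) = H \oplus I'$, while the cases $G \oslash I$, $I \oslash G$ and $G \otimes I$ all give $\un(\cdot) = H \otimes I'$ (using commutativity of the join for $I \oslash G$). Since $\SC$ is closed under $H \oplus I'$ and $H \otimes I'$ for edgeless $I'$, each underlying graph lies in $\SC$, which closes the induction and proves the observation.

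There is essentially no hard step here; the only point requiring care is the verification that both $\oslash$ and $\otimes$ degenerate to the same undirected join under $\un$, which is precisely what allows the two distinct directed compositions to be absorbed into the single undirected join operation available in the construction of simple co-graphs.
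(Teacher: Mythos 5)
Your proof is correct and is precisely the argument the paper implicitly relies on: the observation is stated as an immediate consequence of the parallel recursive definitions, and your structural induction together with the identities $\un(G_1\oplus G_2)=\un(G_1)\oplus\un(G_2)$ and $\un(G_1\oslash G_2)=\un(G_1\otimes G_2)=\un(G_1)\otimes\un(G_2)$ spells out exactly that correspondence. Nothing is missing.
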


Since directed  simple co-graphs are a subset of directed weakly quasi threshold
graphs they can be defined by adding further subdigraphs to those given
for directed weakly quasi threshold graphs. These have to ensure that for every disjoint union of $G_1$ and $G_2$ either $G_1$
or $G_2$ has no edge. This can be done by
excluding $\co D_{11},\co D_{10},\co D_9$.

\begin{theorem}\label{char-dsc}
	The following statements are equivalent.
	\begin{enumerate}
		\item\label{dsc1} $G\in \DSC$.
		\item\label{dsc2} $G\in \free(\{D_1,\ldots,D_8, Q_1,\ldots,Q_7, \co D_9, \co D_{10}, \co D_{11}\})$.
		\item\label{dsc3} $G \in \free(\{D_1,\ldots,D_8, Q_1,\ldots,Q_7\})$ and $\un(G)\in \free(\{P_4, \co 2P_3, 2K_2\})$.
		\item\label{dsc4} $G \in \free(\{D_1,\ldots,D_8, Q_1,\ldots,Q_7\})$ and $\un(G)\in \SC$.
	\end{enumerate}
\end{theorem}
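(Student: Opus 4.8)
The plan is to follow the scheme of the proof of Theorem~\ref{char-dwqt}: I would first establish the direct equivalence $(\ref{dsc1})\Leftrightarrow(\ref{dsc2})$ and then tie the two remaining conditions to it through the undirected forbidden set of $\SC$, closing the loop via $(\ref{dsc2})\Rightarrow(\ref{dsc4})\Leftrightarrow(\ref{dsc3})\Rightarrow(\ref{dsc2})$. Throughout I rely on the fact that $\DSC$ is a subclass of $\DWQT$ in which the disjoint union is further restricted to have an edgeless part.

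For $(\ref{dsc1})\Rightarrow(\ref{dsc2})$ I would check that none of the listed digraphs lies in $\DSC$: the $D_i$ are not even directed co-graphs, the $Q_j$ are already forbidden for the superclass $\DWQT$ by Theorem~\ref{char-dwqt}, and each of $\co D_9,\co D_{10},\co D_{11}$ is a disjoint union of two non-edgeless one-arc digraphs, that is, one of the three biorientations of $2K_2$ (Table~\ref{F-co2a}), which cannot be produced by the operations of Definition~\ref{dscg} because every disjoint union there has an edgeless part. Since $\DSC$ is hereditary, no member contains any of these as an induced subdigraph.

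For the converse $(\ref{dsc2})\Rightarrow(\ref{dsc1})$, I would note that $G\in\free(\{D_1,\ldots,D_8,Q_1,\ldots,Q_7\})$ gives $G\in\DWQT$ by Theorem~\ref{char-dwqt}, so $G$ admits a construction using disjoint union, order composition with an edgeless part, and series composition with an edgeless part, where every node of the construction is an induced subdigraph of $G$. The only operation more permissive in $\DWQT$ than in $\DSC$ is the disjoint union, and the extra exclusions control exactly this: if some node $G_1\oplus G_2$ had an arc in each of $G_1$ and $G_2$, then selecting one arc (inducing $\overrightarrow{P_2}$ or $\overleftrightarrow{K_2}$) from each part would produce an induced subdigraph isomorphic to one of $\co D_9,\co D_{10},\co D_{11}$, contradicting $(\ref{dsc2})$. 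Hence every disjoint union has an edgeless part, so the construction is a $\DSC$-construction and $G\in\DSC$.

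Finally I would close the loop on the remaining conditions. For $(\ref{dsc2})\Rightarrow(\ref{dsc4})$, the just-proved implication $(\ref{dsc2})\Rightarrow(\ref{dsc1})$ together with Observation~\ref{observation:und_dir_SC} yields $\un(G)\in\SC$, while the directed forbidden set in $(\ref{dsc4})$ is a subset of that in $(\ref{dsc2})$. The equivalence $(\ref{dsc3})\Leftrightarrow(\ref{dsc4})$ is immediate from $\forb(\SC)=\{P_4,\co 2P_3,2K_2\}$. For $(\ref{dsc3})\Rightarrow(\ref{dsc2})$, I would use that $\co D_9,\co D_{10},\co D_{11}$ are precisely the biorientations of $2K_2$: since $\un(G)\in\free(\{P_4,\co 2P_3,2K_2\})$, Observation~\ref{forb-u-d2} gives $G\in\free(\{\co D_9,\co D_{10},\co D_{11}\})$, and combined with the directed forbidden set of $(\ref{dsc3})$ this is exactly $(\ref{dsc2})$. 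I expect the main obstacle to be the reconstruction step $(\ref{dsc2})\Rightarrow(\ref{dsc1})$, where one must argue carefully that the nodes of the $\DWQT$-construction are induced subdigraphs and that the three biorientations of $2K_2$ capture \emph{all} ways a disjoint union can join two non-edgeless parts, so that excluding them forces precisely the edgeless-part restriction of Definition~\ref{dscg}.
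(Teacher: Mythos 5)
Your proposal is correct and follows essentially the same route as the paper: both reduce to the $\DWQT$ characterization of Theorem~\ref{char-dwqt} and show that excluding $\co D_9,\co D_{10},\co D_{11}$ (the three biorientations of $2K_2$) enforces exactly the edgeless-part restriction on the disjoint union, then close the cycle through Observations~\ref{observation:und_dir_SC} and~\ref{forb-u-d2} and $\forb(\SC)=\{P_4,\co 2P_3,2K_2\}$. If anything, your reconstruction step is argued slightly more explicitly than the paper's.
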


\begin{proof}
	$(\ref{dsc1}) \Rightarrow (\ref{dsc2})$
    The given forbidden digraphs  $D_1,\ldots,D_8, Q_1,\ldots,Q_7, \co D_9, \co D_{10}, \co D_{11}$ are	no directed simple co-graphs and the set of directed directed simple co-graphs is hereditary.
  
	$(\ref{dsc2}) \Rightarrow (\ref{dsc1})$
	Let $G\in \free(\{D_1,\ldots,D_8, Q_1,\ldots,Q_7, \co D_9, \co D_{10}, \co D_{11}\})$. Then, we know by Theorem \ref{char-dwqt} that $G$ is in $\DWQT$, such that in every series and order composition of two digraphs $G_1$ and $G_2$, either $G_1$ or $G_2$ must be an edgeless digraph (see proof of Theorem \ref{char-dwqt}), exactly as in the definition of $\DSC$.
	By excluding $\co D_9, \co D_{10}$ and $\co D_{11}$ there is no disjoint union $G_1 \oplus G_2$ allowed, in which at least one of the two digraphs contains an edge. Therefore, one of $G_1$ or $G_2$ must be edgeless, which exactly is the restriction for the disjoint union from Definition \ref{dscg}. For every other digraph built via $G_1 \oplus G_2$, either $G_1$ and $G_1$ contain edges,
	such that $\co D_9, \co D_{10}$ or $ \co D_{11}$ is an induced subdigraph, or  $G_1$ or $G_1$ contains no edges, which leads to a legit digraph of $\DSC$.
	Finally,we end up in Definition \ref{dscg}. Consequently, $G$ must be in $\DSC$.
	
	$(\ref{dsc2}) \Rightarrow (\ref{dsc4})$ By Observation \ref{observation:und_dir_SC}.
  
    $(\ref{dsc3}) \Rightarrow (\ref{dsc2})$ By Observation \ref{forb-u-d2}.
  
	$(\ref{dsc3}) \Leftrightarrow (\ref{dsc4})$ Since $\forb(\SC)= \{P_4, \co 2P_3, 2K_2\}$.
\end{proof}

\subsection{Oriented simple co-graphs}\label{sec-oscg}

\begin{definition}[Oriented simple co-graphs] \label{oscg}
The class of {\em oriented simple co-graphs} is recursively defined as follows.
\begin{enumerate}[(i)]
\item Every digraph on a single vertex $(\{v\},\emptyset)$,
denoted by $\bullet$, is an oriented simple co-graph.
\item
If $G$ is an oriented simple co-graph $I$ is an edgeless digraph,
then
\begin{inparaenum}[(a)]
\item
$G\oplus I$,
\item
$G\oslash I$,
\item
$I \oslash G$ are  oriented simple co-graphs.
\end{inparaenum}
\end{enumerate}
The class of oriented simple co-graphs is denoted by $\OSC$.
\end{definition}

\begin{table}[h]
\caption{Complementary graphs of $D_9, D_{10}$ and $D_{11}$}
\label{tbl:D9_D10_D11}
  \center
  \begin{tabular}{c c c }
	\begin{tikzpicture}
  \draw (0,0.5) \bnode (n1) {};
  \draw (1,0.5) \bnode (n2) {};
  \draw (0,0) \bnode (n3) {};
  \draw (1,0) \bnode (n4) {};

  \draw [-\ahead] (n1) edge (n2);
  \draw [-\ahead] (n2) edge (n1);
  \draw [-\ahead] (n3) edge (n4);
  \draw [-\ahead] (n4) edge (n3);
\end{tikzpicture} & \begin{tikzpicture}
  \draw (0,0.5) \bnode (n1) {};
  \draw (1,0.5) \bnode (n2) {};
  \draw (0,0) \bnode (n3) {};
  \draw (1,0) \bnode (n4) {};

  \draw [-\ahead] (n1) edge (n2);
  \draw [-\ahead] (n3) edge (n4);
  \draw [-\ahead] (n4) edge (n3);
\end{tikzpicture} & \begin{tikzpicture}
  \draw (0,0.5) \bnode (n1) {};
  \draw (1,0.5) \bnode (n2) {};
  \draw (0,0) \bnode (n3) {};
  \draw (1,0) \bnode (n4) {};

  \draw [-\ahead] (n1) edge (n2);
  \draw [-\ahead] (n3) edge (n4);
\end{tikzpicture} \\
	$\co D_{9}$ & $\co D_{10}$ & $\co D_{11}$
  \end{tabular}
\end{table}

\begin{theorem}\label{char-osc}
Let $G$ be a digraph. The following properties are equivalent:
\begin{enumerate}
  \item\label{osc1} $G$ is an oriented  simple co-graph.
  \item\label{osc2} $G\in \free(\{D_1,D_5,D_8, Q_7, \co D_{11},\overleftrightarrow{K_2}\})$.
  \item\label{osc3} $G\in \free(\{D_8, Q_7, \co D_{11},\overleftrightarrow{K_2}\})$ and $G$ is transitive.
\end{enumerate}
\end{theorem}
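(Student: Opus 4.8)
The plan is to prove $(\ref{osc1}) \Rightarrow (\ref{osc2}) \Rightarrow (\ref{osc1})$ and $(\ref{osc2}) \Leftrightarrow (\ref{osc3})$, reusing the machinery of Theorem~\ref{char-owqt} ($\OWQT$) and Theorem~\ref{char-dsc} ($\DSC$). Two structural facts drive the argument. First, $\OSC$ is $\DSC$ with the series composition dropped, so $\OSC = \DSC \cap \free(\{\overleftrightarrow{K_2}\})$: the series composition is the only $\DSC$ operation producing bidirectional arcs between two nonempty parts, whence a directed simple co-graph is oriented exactly when no series composition occurs. Second, $\OSC \subset \OWQT$, the two classes differing only in the disjoint union, which $\OWQT$ leaves unrestricted among its members but $\OSC$ forces to have an edgeless side.

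For $(\ref{osc1}) \Rightarrow (\ref{osc2})$, I would combine heredity with the inclusions $\OSC \subset \OC$ and $\OSC \subset \DSC$. By Theorem~\ref{th-ch-oco} and Theorem~\ref{char-dsc}, every graph in $\OSC$ avoids $\{D_1, D_5, D_8, \overleftrightarrow{K_2}\}$ and the full $\DSC$-set. By Lemma~\ref{th-free}, the class $\DSC \cap \free(\{\overleftrightarrow{K_2}\})$ equals $\free$ of the union of these forbidden sets; once $\overleftrightarrow{K_2}$ is present, every member carrying a bidirectional arc becomes redundant. This removes $D_2,D_3,D_4,D_6,D_7$, all of $Q_1,\dots,Q_6$ (each obtained through a series composition or from $Y_1=\overleftrightarrow{K_2}$), and both $\co D_9,\co D_{10}$, leaving exactly $D_1, D_5, D_8, Q_7, \co D_{11}, \overleftrightarrow{K_2}$ as in $(\ref{osc2})$.

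For $(\ref{osc2}) \Rightarrow (\ref{osc1})$, I note that the forbidden set of $(\ref{osc2})$ contains $\{D_1, D_5, D_8, \overleftrightarrow{K_2}, Q_7\}$, which is precisely the obstruction set of Theorem~\ref{char-owqt}, so $G \in \OWQT$. In any $\OWQT$ construction only the disjoint union is unrestricted, and forbidding the single remaining graph $\co D_{11} = 2\overrightarrow{P_2}$ exactly tames it: whenever a union $G_1 \oplus G_2$ had an arc in both $G_1$ and $G_2$ it would induce $\co D_{11}$, so at least one side must be edgeless. This turns every operation into one of Definition~\ref{oscg}, giving $G \in \OSC$.

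The equivalence $(\ref{osc2}) \Leftrightarrow (\ref{osc3})$ repeats the transitivity argument of Theorem~\ref{char-owqt}: from $(\ref{osc2})$ we have $G \in \free(\{D_1, D_5, \overleftrightarrow{K_2}\})$, so $G$ is transitive by Lemma~\ref{le-co-t} and of course avoids the smaller set $\{D_8, Q_7, \co D_{11}, \overleftrightarrow{K_2}\}$; conversely a transitive digraph contains neither $D_1$ nor $D_5$, so $(\ref{osc3})$ reinstates all of $(\ref{osc2})$. I expect the only genuine subtlety to lie in $(\ref{osc2}) \Rightarrow (\ref{osc1})$, namely in confirming that $\co D_{11}$ is the \emph{minimal} obstruction forcing an edgeless side in every disjoint union; as in the $\DSC$ proof, I would settle this by arguing that any larger violating union already contains two disjoint arcs, hence an induced $\co D_{11}$.
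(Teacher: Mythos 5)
Your proposal is correct and follows essentially the same route as the paper: both directions rest on viewing $\OSC$ as a superclass ($\OC$/$\OWQT$/$\DSC$) intersected with $\free(\{\overleftrightarrow{K_2}\})$ and then checking that the surviving obstructions $Q_7$ and $\co D_{11}=2\overrightarrow{P_2}$ enforce exactly the restricted order composition and disjoint union, and the transitivity equivalence is handled identically via Lemma~\ref{le-co-t}. The only cosmetic difference is that you factor $(\ref{osc2})\Rightarrow(\ref{osc1})$ through Theorem~\ref{char-owqt} (so that only the disjoint union remains to be tamed by $\co D_{11}$), whereas the paper passes through oriented co-graphs and reuses the case analysis from the proof of Theorem~\ref{char-dsc}; both are sound.
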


\begin{proof}
	$(\ref{osc1}) \Rightarrow (\ref{osc2})$
	Let $G$ be in $\OSC$. Since $\OSC \subset \DSC$ and $\OSC \subset \OC$ we know that the forbidden induced subdigraphs of $\DSC$,
	as well as the forbidden induced subdigraphs of $\OC$, must also be excluded in $\OSC$. Consequently, none of $D_1,D_5,D_8, Q_7, \co D_{11},\overleftrightarrow{K_2}$ can be induced subdigraphs of $G$.
	Since the class $\OSC$ is closed under taking induced subdigraphs this holds for every digraph of the class.

	$(\ref{osc2}) \Rightarrow (\ref{osc1})$
	Let $G$ contain none of the digraphs from above as induced subdigraph. Since $G$ has no induced $D_1, D_5, D_8$ and $\overleftrightarrow{K_2}$, by Theorem \ref{char-oc} $G$ must be an oriented co-graph and thus, $G$ must have been built by a disjoint union or a series composition.
	Further, we know from the proof of Theorem \ref{char-dsc} which forbidden subdigraphs are exactly leading to the restriction of the disjoint union and the order composition. Since $G$ has no induced $\overleftrightarrow{K_2}$, the list reduces to $ Q_7$ and $\co D_{11}$. With the same argumentation as in the proof of Theorem \ref{char-dsc}, every other digraph that is constructed by a directed union or an order composition is either legit for this class, or contains one of these forbidden digraphs as subdigraph. Consequently, this leads exactly to the definition of $\OSC$, such that $G$ must be in $\OSC$.

	$(\ref{osc2}) \Rightarrow (\ref{osc3})$
	Holds with Lemma \ref{le-tt}.

	$(\ref{osc3}) \Rightarrow (\ref{osc2})$
	The transitivity of $G$ implies that $D_1$ and $D_5$ are excluded, as they do not satisfy the definition of transitivity.
\end{proof}

\begin{observation}
  If $G\in DSC$ then the underlying undirected graph of the symmetric part of $G$ is a simple co-graph and the asymmetric part of $G$ is an oriented simple co-graph.
\end{observation}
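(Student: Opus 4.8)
The plan is to argue by structural induction along a construction of $G$ according to Definition \ref{dscg} (equivalently, along its di-co-tree), decomposing each of the four defining operations of $\DSC$ into what it contributes to the symmetric part $sym(G)$ and to the asymmetric part $asym(G)$. The single fact that makes the induction work is that a bidirectional (symmetric) arc can only be produced by a series composition $\otimes$, whereas a one-directional (asymmetric) arc can only be produced by an order composition $\oslash$, while a disjoint union $\oplus$ produces no arc at all. Since every combination step in Definition \ref{dscg} joins a directed simple co-graph $G'$ with an \emph{edgeless} digraph $I$, and an edgeless $I$ satisfies $sym(I)=asym(I)=I$, each step will turn out to induce exactly one admissible operation on each part.

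For the base case $G=\bullet$ both $sym(G)$ and $asym(G)$ equal a single vertex, which lies in $\SC$ (resp.\ $\OSC$). For the inductive step write $G=G'\circ I$ with $G'\in\DSC$, $I$ edgeless and $\circ\in\{\oplus,\oslash,\otimes\}$ (both orders allowed for $\oslash$), and assume $\un(sym(G'))\in\SC$ and $asym(G')\in\OSC$. For the symmetric part, if $\circ=\otimes$ the inserted bidirectional arcs survive, so $sym(G)=sym(G')\otimes I$ and hence $\un(sym(G))=\un(sym(G'))\otimes I$; if $\circ\in\{\oplus,\oslash\}$ no symmetric arc is added and the vertices of $I$ stay isolated in $sym(G)$, so $\un(sym(G))=\un(sym(G'))\oplus I$. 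In either case $\un(sym(G))$ is obtained from the simple co-graph $\un(sym(G'))$ by a join with an edgeless graph or a disjoint union with an edgeless graph, which are precisely the operations defining $\SC$ in Table \ref{t-uc}; thus $\un(sym(G))\in\SC$.

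Dually, for the asymmetric part, if $\circ=\oslash$ the inserted one-directional arcs survive, so $asym(G)=asym(G')\oslash I$ (respectively $I\oslash asym(G')$); if $\circ\in\{\oplus,\otimes\}$ no asymmetric arc is added and the vertices of $I$ stay isolated, so $asym(G)=asym(G')\oplus I$. Hence $asym(G)$ arises from the oriented simple co-graph $asym(G')$ by an order composition with an edgeless digraph or a disjoint union with an edgeless digraph, which are exactly the operations defining $\OSC$ in Definition \ref{oscg}; thus $asym(G)\in\OSC$, closing the induction. The only delicate point---more the crux of the argument than a genuine obstacle---is to confirm that the induced operations never require an $\otimes$ or $\oslash$ with a non-edgeless partner; this is guaranteed because in $\DSC$ every operation already attaches only an edgeless digraph $I$, so the second operand of the induced $\otimes$ or $\oslash$ is always edgeless, matching the restricted operation sets of $\SC$ and $\OSC$.
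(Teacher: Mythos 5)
Your proof is correct and follows essentially the same route as the paper: the paper's one-line justification (replace each order composition by a disjoint union to obtain the symmetric part, and each series composition by a disjoint union to obtain the asymmetric part, noting that the partner digraph $I$ is always edgeless so the induced operations are exactly those of $\SC$ and $\OSC$) is precisely the induction you spell out. Your version is merely more explicit about why each step of the $\DSC$ construction induces an admissible operation on each part.
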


This holds since the asymmetric part is exactly build with the same rules like simple co-graphs and the asymmetric part with the rules of \OSC.

\subsection{Directed co-simple co-graphs}\label{sec-dcscg}

We introduce the complementary class of $\DSC$, since it holds that
	$$\begin{array}{lll}
		& \{D_1,\ldots,D_8, Q_1,\ldots,Q_7, \co D_9, \co D_{10}, \co D_{11}\} \\
		\neq & \co \{D_1,\ldots,D_8, Q_1,\ldots,Q_7, \co D_9, \co D_{10}, \co D_{11}\}.
	\end{array}
	$$

\begin{definition}[Directed co-simple co-graphs] \label{dsccg}
The class of {\em directed co-simple co-graphs} is recursively defined as follows.
\begin{enumerate}[(i)]
\item Every digraph on a single vertex $(\{v\},\emptyset)$,
denoted by $\bullet$, is a directed co-simple co-graph.

\item
If $G$ is a  directed co-simple co-graph and $K$ is a  bidirectional complete digraph,
then
\begin{inparaenum}[(a)]
\item
$G\oplus K$,
\item
$G\oslash K$,
\item
$K \oslash G$, and
\item
$G\otimes K$  are   directed co-simple co-graphs.
\end{inparaenum}
\end{enumerate}
The class of directed simple co-graphs is denoted by $\DCSC$.
\end{definition}

\begin{theorem}\label{char-dcsc}
	The following statements are equivalent.
	\begin{enumerate}
		\item $G\in \DCSC$.
		\item $G\in \free(\{D_1,\ldots,D_8, \co Q_1,\ldots,\co Q_7, Q_1, D_9, D_{10}\})$.
        \item $G\in\free(\{D_1,\ldots,D_8, \co Q_1,\co Q_4,\ldots,\co Q_6, Q_1, D_{10}\})$ and $\un(G)$ is a co-simple co-graph.
	\end{enumerate}
\end{theorem}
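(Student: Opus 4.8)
The plan is to derive the whole statement from the complement identity $\DCSC=\co\DSC$, mirroring the treatment of $\DCWQT=\co\DWQT$ behind Theorem~\ref{char-dcwqt}, and then to trade those forbidden digraphs whose underlying graph is itself a minimal forbidden subgraph of $\CSC$ for the single undirected condition $\un(G)\in\CSC$. First I would verify that complementation sends every generating operation of $\DSC$ (Definition~\ref{dscg}) to one of $\DCSC$ (Definition~\ref{dsccg}). Using $\co(G\oplus H)=\co G\otimes\co H$, $\co(G\otimes H)=\co G\oplus\co H$, $\co(G\oslash H)=\co H\oslash\co G$, and $\co I=K$ for an edgeless $I$, the disjoint union $G\oplus I$ becomes a series composition $\co G\otimes K$, the series composition $G\otimes I$ becomes a disjoint union $\co G\oplus K$, and the order compositions $G\oslash I$ and $I\oslash G$ become $K\oslash\co G$ and $\co G\oslash K$; since $\co$ is an involution this yields $\DCSC=\co\DSC$.

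For the equivalence $(1)\Leftrightarrow(2)$ I would then invoke Theorem~\ref{char-dsc}, which gives $\DSC=\free(\{D_1,\dots,D_8,Q_1,\dots,Q_7,\co D_9,\co D_{10},\co D_{11}\})$, and Lemma~\ref{th-free}, which turns complementation of a class into complementation of its forbidden set. The complemented list then simplifies: by Observation~\ref{obs18} the block $\{D_1,\dots,D_8\}$ is self-complementary, the involution $\co\co D_i=D_i$ turns $\co D_9,\co D_{10},\co D_{11}$ back into $D_9,D_{10},D_{11}$, and the identity $D_{11}=Q_1$ rewrites the last of these, so that one arrives exactly at the list in statement~$(2)$.

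The equivalence $(2)\Leftrightarrow(3)$ I would argue as in Theorem~\ref{char-dsc}. The implication $(2)\Rightarrow(3)$ is immediate for the forbidden part, since the list in $(3)$ is contained in that of $(2)$; the condition $\un(G)\in\CSC$ then follows from $(2)\Rightarrow(1)$ together with the $\DCSC$-analogue of Observation~\ref{observation:und_dir_SC}, namely that $\un$ carries $\oplus K$, $\otimes K$, $\oslash K$ and $K\oslash$ to $\oplus K_n$ and $\otimes K_n$, which are precisely the $\CSC$-operations of Table~\ref{t-uc}. For $(3)\Rightarrow(2)$ I would apply Observation~\ref{forb-u-d2}: since $\un(G)\in\CSC=\free(\{P_4,C_4,2P_3\})$, the digraph $G$ avoids every biorientation of $P_4$, $C_4$ and $2P_3$, so every obstruction present in $(2)$ but absent from $(3)$ that happens to be such a biorientation is excluded automatically, and together with the explicit list of $(3)$ this recovers all of $(2)$.

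The step I expect to be the main obstacle is the bookkeeping that decides exactly which complemented digraphs may be absorbed into the undirected condition, and it rests entirely on computing underlying graphs. The digraphs that can be dropped are $\co Q_2$, $\co Q_3$ and $D_9$: indeed $\co Q_2=D_{12}$ is the transitive orientation of $C_4$, $\co Q_3=2\overleftrightarrow{P_3}$ is the complete biorientation of $2P_3$, and $D_9=\overleftrightarrow{C_4}$ is the complete biorientation of $C_4$, so each has underlying graph in $\forb(\CSC)$ and is removed by $\un(G)\in\CSC$. The delicate point is that this absorption is legitimate only when the underlying graph truly lies in $\forb(\CSC)$; an obstruction whose underlying graph is already a co-simple co-graph cannot be traded away. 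I would therefore check each surviving member individually: $\co Q_1$ has underlying graph $2K_2$, $\co Q_4,\co Q_5,\co Q_6$ have underlying graphs avoiding $P_4,C_4,2P_3$, $Q_1$ has underlying graph $K_4$, and $D_{10}$ has $K_4$ with one edge removed, so all of these must be retained. Most notably, $\un(\co Q_7)=K_6$ is itself a co-simple co-graph, so $\co Q_7$ is a biorientation of no forbidden undirected graph and cannot be removed through the undirected condition; the equivalence $(2)\Leftrightarrow(3)$ must keep it explicitly. Settling this last case correctly is the crux, and it is the point at which $\DCSC$ departs from the simpler pattern of Theorem~\ref{char-dsc}.
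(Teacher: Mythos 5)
Your treatment of $(1)\Leftrightarrow(2)$ coincides with the paper's entire proof: the paper also derives it from $\DCSC=\co\DSC$, Lemma~\ref{th-free}, Theorem~\ref{char-dsc}, the self-complementarity of $\{D_1,\ldots,D_8\}$, the involution $\co\co D_i=D_i$, and the identity $Q_1=D_{11}$; your explicit verification that complementation exchanges the generating operations of Definitions~\ref{dscg} and~\ref{dsccg} is a welcome addition but not a different route.

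For $(2)\Leftrightarrow(3)$ the paper offers no argument at all (its proof ends after item~(2) with ``consequently, this equivalence holds''), so your bookkeeping is genuinely new content --- and it is correct, with an important consequence you should state outright. Your computations check out: $\co Q_2=D_{12}$ and $D_9=\overleftrightarrow{C_4}$ are biorientations of $C_4$, and $\co Q_3=2\overleftrightarrow{P_3}$ is a biorientation of $2P_3$, so by Observation~\ref{forb-u-d2} these three are absorbed by $\un(G)\in\CSC=\free(\{P_4,C_4,2P_3\})$; whereas $\un(\co Q_1)=2K_2$, $\un(Q_1)=K_4$, $\un(D_{10})=K_4-e$ and the underlying graphs of $\co Q_4,\co Q_5,\co Q_6$ all lie in $\CSC$, so those must stay. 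The crux is exactly where you put it: $\un(\co Q_7)=K_6\in\CSC$, and one checks that $\co Q_7$ contains none of $D_1,\ldots,D_8$ (it is a directed co-graph), none of $\co Q_1,\co Q_4,\co Q_5,\co Q_6$ (their underlying graphs are not induced subgraphs of $K_6$), and neither $Q_1$ nor $D_{10}$ (the bidirected part of $\co Q_7$ is $2P_3$ and so contains no $\overleftrightarrow{C_4}$, and $K_6$ has no induced $K_4-e$). Hence $\co Q_7$ satisfies item~(3) exactly as printed, yet $\co Q_7\notin\DCSC$ because $Q_7\notin\DWQT\supseteq\DSC$. In other words, item~(3) of the theorem is not equivalent to items~(1) and~(2) unless $\co Q_7$ is added to its forbidden list, precisely as your last paragraph demands. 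So the defect you have located is in the paper's statement (and its unproved claim of the equivalence with~(3)), not in your argument; with $\co Q_7$ reinstated in~(3), your proof is complete.
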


\begin{proof}
	By Lemma \ref{th-free}, we know that $\co \DSC=\free(\co \{D_1,\ldots,D_8, Q_1,\ldots,Q_7, \co D_9, \co D_{10}, \co D_{11}\})$.
	Since $\DCSC=\co \DSC$, we know by Theorem \ref{char-dsc} that the forbidden induced subdigraphs are $D_1,\ldots, D_8$ since, they are self complementary, 
	and $\co Q_1,\ldots,\co Q_7, Q_1, D_9, D_{10}$.\footnote{Note that $Q_1=D_{11}$.} Consequently, this equivalence holds.
\end{proof}

\subsection{Oriented co-simple co-graphs}\label{sec-ocscg}

\begin{definition}[Oriented co-simple co-graphs] \label{dsocg}
The class of {\em oriented co-simple co-graphs} is recursively defined as follows.
\begin{enumerate}[(i)]
\item Every digraph on a single vertex $(\{v\},\emptyset)$,
denoted by $\bullet$, is an oriented co-simple co-graph.

\item
If $G$ is an orientated co-simple co-graph and $T$ is a transitive tournament,
then
\begin{inparaenum}[(a)]
\item
$G\oplus T$,
\item
$G\oslash T$,  and
\item
$T \oslash G$  are  oriented co-simple co-graphs.
\end{inparaenum}
\end{enumerate}
The class of orientated simple co-graphs is denoted by $\OCSC$.
\end{definition}

It is obvious, that the classes $\OCSC$ and $\OCWQT$ are equal.

\subsection{Directed threshold graphs}\label{sec-dtg}

The class of threshold graphs was introduced by Chvátal and Hammer, see \cite{CH73, CH77}.
Some possible definitions of directed threshold graphs can be found in \cite{GR19c}, 
they work as follows.

\begin{definition}[Directed threshold graphs]\label{dthg}
The class of {\em directed threshold graphs} is recursively defined as follows.
\begin{enumerate}[(i)]
\item Every digraph on a single vertex $(\{v\},\emptyset)$,
denoted by $\bullet$, is a directed threshold graph.

\item If $G$ is a directed threshold graph,
then
\begin{inparaenum}[(a)]
\item
$G\oplus \bullet$,
\item
$G\oslash \bullet$,
\item
$\bullet\oslash G$, and
\item $G\otimes \bullet$ are directed threshold graphs.
\end{inparaenum}
\end{enumerate}
The class of directed threshold graphs is denoted by $\DT$.
\end{definition}

The recursive definition of directed and undirected threshold  graphs lead to the following
observation.

\begin{observation}\label{obs-t1}
For every directed threshold graph $G$ the
underlying undirected graph $\un(G)$ is a threshold graph.
\end{observation}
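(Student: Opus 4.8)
The plan is to proceed by structural induction on the recursive construction of a directed threshold graph, mirroring the argument behind Observations \ref{obs-co1} and \ref{obs-tp1}. The essential preliminary step is to record how $\un(\cdot)$ interacts with the digraph composition operations. Since the arc set of $G_1 \oplus G_2$ is exactly $E_1 \cup E_2$, we have $\un(G_1 \oplus G_2) = \un(G_1) \oplus \un(G_2)$. Crucially, both the series composition and the order composition place an arc (bidirectional for $\otimes$, single-directed for $\oslash$) between every pair $u \in V_1$, $v \in V_2$; hence in the underlying undirected graph every such pair becomes adjacent, giving $\un(G_1 \otimes G_2) = \un(G_1 \oslash G_2) = \un(G_1) \otimes \un(G_2)$. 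In other words, both directed join-type operations collapse to the undirected join.

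Next I would apply these identities to the four operations permitted in Definition \ref{dthg}, each of which composes the current graph with a single vertex $\bullet$. The disjoint union gives $\un(G \oplus \bullet) = \un(G) \oplus \bullet$, that is, the addition of an isolated vertex, while each of $G \oslash \bullet$, $\bullet \oslash G$, and $G \otimes \bullet$ gives $\un(G) \otimes \bullet$, that is, the addition of a dominating vertex. These are precisely the two operations $G_1 \oplus \bullet$ and $G_1 \otimes \bullet$ that generate undirected threshold graphs according to Table \ref{t-uc}.

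With this in hand the induction is immediate: the base case $G = \bullet$ yields $\un(G) = \bullet$, which is a threshold graph, and in the inductive step each directed threshold operation translates, under $\un$, into either an isolated-vertex addition or a dominating-vertex addition, both of which keep us inside the class of threshold graphs. There is no substantial obstacle here; the only point requiring care is the case analysis showing that $\oslash$ (in both orientations) and $\otimes$ each induce the undirected join, so that adjoining a single vertex via any of these three operations becomes the addition of a dominating vertex in $\un(G)$.
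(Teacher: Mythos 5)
Your proof is correct and follows exactly the route the paper intends: the paper gives no explicit argument for this observation, stating only that it follows from the recursive definitions, and your structural induction with the projections $\un(G\oplus\bullet)=\un(G)\oplus\bullet$ and $\un(G\oslash\bullet)=\un(\bullet\oslash G)=\un(G\otimes\bullet)=\un(G)\otimes\bullet$ is precisely that intended reasoning, spelled out. Nothing is missing.
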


There are also several other ways to define directed threshold graphs.
Not only by forbidden induced subdigraphs, but also by graph parameters
as directed linear NLC-width and directed neighbourhood-width,
see \cite{GR19c} for their definition.

\begin{theorem}\label{t-dtp}
Let $G$ be a digraph. The following properties are equivalent:
\begin{enumerate}
	\item \label{dteq1}$G$ is a directed threshold graph.
	\item \label{dteq2}$G\in \free(\{D_1, \dots, D_{15}, \co D_{11},
		  \co D_{10},
		  \co D_9\})$.

	\item\label{dteq2aa} $G\in\free(\{D_1,D_2,D_3,D_4,D_5,D_6,D_{10},D_{11},D_{13},D_{14},D_{15}\})$
 and $\un(G)\in\free(\{P_4,2K_2,C_4\})$.
	\item \label{dteq2a} $G\in\free(\{D_1,D_2,D_3,D_4,D_5,D_6,D_{10},D_{11},D_{13},D_{14},D_{15}\})$
 and $\un(G)$ is a threshold graph.

	\item \label{dteq3}$G$ has directed linear NLC-width 1.
	\item \label{dteq4}$G$ has directed neighbourhood-width 1.
\item \label{dteq4a} $G$ has directed linear clique-width at most $2$ and $G\in \free(\{D_2,D_3,D_9,D_{10},
D_{12},D_{13},D_{14}\})$.
\item\label{dteq5} $G$ and $\co G$ are both directed trivially perfect graphs.

\end{enumerate}
\end{theorem}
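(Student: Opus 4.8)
The plan is to route everything through the structural identity $\DT=\DTP\cap\DCTP$, which mirrors the undirected fact that threshold graphs are exactly the trivially perfect graphs whose complement is again trivially perfect. First I would prove $(\ref{dteq1})\Rightarrow(\ref{dteq5})$: comparing Definition \ref{dthg} with the definitions of $\DTP$ and $\DCTP$, every operation allowed for directed threshold graphs, namely $G\oplus\bullet$, $G\oslash\bullet$, $\bullet\oslash G$ and $G\otimes\bullet$, is simultaneously an operation allowed for directed trivially perfect graphs and for directed co-trivially perfect graphs. A straightforward induction on the construction then gives $G\in\DTP$ and $G\in\DCTP=\co\DTP$, i.e.\ both $G$ and $\co G$ are directed trivially perfect. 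For $(\ref{dteq5})\Rightarrow(\ref{dteq2})$ I would invoke Lemma \ref{th-free}: since $\DTP=\free(\{D_1,\dots,D_{15}\})$ by Theorem \ref{ch-dtp} and $\DCTP=\free(\{D_1,\dots,D_8,\co D_9,\co D_{10},\co D_{11},D_{12},\dots,D_{15}\})$, their intersection is $\free$ of the union of the two forbidden sets, and this union is exactly $\{D_1,\dots,D_{15},\co D_9,\co D_{10},\co D_{11}\}$.

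The substantive step is $(\ref{dteq2})\Rightarrow(\ref{dteq1})$, which I would prove constructively by induction on $|V(G)|$. Since $G$ avoids $D_1,\dots,D_8$ it is a directed co-graph by Theorem \ref{th-ch-dco}, so for $|V(G)|\geq 2$ it decomposes as $G=A\oplus B$, $G=A\otimes B$, or $G=A\oslash B$ at the root of a binary di-co-tree. Reusing the arguments from the proof of Theorem \ref{ch-dtp}, avoiding $D_9,D_{10},D_{11}$ forces that in a series decomposition at least one side is bidirectional complete, while avoiding $D_{12},\dots,D_{15}$ together with $D_5$ forces, via Lemma \ref{le-tt}, that in an order decomposition at least one side is a transitive tournament; additionally, avoiding $\co D_9,\co D_{10},\co D_{11}$ forces that in a disjoint union at least one side is edgeless. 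In each case the ``special'' side $\overleftrightarrow{K_k}$, $T_k$, or $I_k$ contains a vertex $v$ whose adjacency to the whole of $G-v$ is uniform (bidirectional, purely incoming or outgoing as the sink/source of the tournament, or absent), so that $G=(G-v)\circ\bullet$ for the matching operation $\otimes\bullet$, $\oslash\bullet$/$\bullet\oslash$, or $\oplus\bullet$. As $G-v$ is a smaller induced subdigraph, it lies in the same hereditary class, the inductive hypothesis places it in $\DT$, and the peeling operation keeps us in $\DT$.

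The undirected translations are then routine. I would obtain $(\ref{dteq2aa})\Leftrightarrow(\ref{dteq2a})$ immediately from $\forb(\T)=\{P_4,2K_2,C_4\}$ (Table \ref{t-uc}). For $(\ref{dteq2})\Rightarrow(\ref{dteq2a})$ I would use that $(\ref{dteq2})$ already yields $G\in\DT$, so $\un(G)$ is a threshold graph by Observation \ref{obs-t1}, while the directed forbidden set of $(\ref{dteq2aa})$ is a subset of that in $(\ref{dteq2})$. For the converse $(\ref{dteq2aa})\Rightarrow(\ref{dteq2})$ I would apply Observation \ref{forb-u-d2}: the digraphs occurring in $(\ref{dteq2})$ but not in the directed part of $(\ref{dteq2aa})$ are $D_7,D_8$ with underlying graph $P_4$, the digraphs $D_9$ and $D_{12}$ with underlying graph $C_4$, and $\co D_9,\co D_{10},\co D_{11}$ with underlying graph $2K_2$; since $\un(G)\in\free(\{P_4,2K_2,C_4\})$, Observation \ref{forb-u-d2} guarantees that $G$ avoids every biorientation of these three undirected graphs, hence all seven extra digraphs, which upgrades $(\ref{dteq2aa})$ to the full list of $(\ref{dteq2})$.

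Finally, for the width characterizations $(\ref{dteq3})$, $(\ref{dteq4})$ and $(\ref{dteq4a})$ I would appeal to \cite{GR19c}, where directed linear NLC-width, directed neighbourhood-width and directed linear clique-width are introduced. The conceptual point is that, by Definition \ref{dthg}, every operation building a directed threshold graph adds a single vertex, so $G\in\DT$ admits a purely linear expression; with a single label this is exactly a directed linear NLC-width $1$ (equivalently neighbourhood-width $1$) expression, the four vertex-insertion patterns (no arc, incoming arc, outgoing arc, both arcs) corresponding to the four operations $\oplus\bullet$, $\bullet\oslash$, $\oslash\bullet$, $\otimes\bullet$, which yields $(\ref{dteq1})\Leftrightarrow(\ref{dteq3})\Leftrightarrow(\ref{dteq4})$; for $(\ref{dteq4a})$ the extra set $\{D_2,D_3,D_9,D_{10},D_{12},D_{13},D_{14}\}$ trims the slightly larger family of directed linear clique-width $\le 2$ digraphs down to $\DT$. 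I expect the main obstacle to be making the peeling in $(\ref{dteq2})\Rightarrow(\ref{dteq1})$ fully rigorous, in particular verifying that the vertex removed from the special side is uniformly adjacent to \emph{all} of $G-v$ so that the claimed single operation really reconstructs $G$, together with the bookkeeping needed to reduce the width-parameter equivalences cleanly to the results of \cite{GR19c}.
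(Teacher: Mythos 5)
Your proposal is correct and follows essentially the same route as the paper: the cycle $(\ref{dteq1})\Rightarrow(\ref{dteq5})\Rightarrow(\ref{dteq2})\Rightarrow(\ref{dteq1})$ via Lemma \ref{th-free} and the self-complementarity of $\{D_1,\dots,D_8\}$ and $\{D_{12},\dots,D_{15}\}$, the same peeling/decomposition argument for $(\ref{dteq2})\Rightarrow(\ref{dteq1})$ (one side of each composition forced to be bidirectional complete, a transitive tournament, or edgeless, hence absorbable by single-vertex operations), the same use of Observations \ref{obs-t1} and \ref{forb-u-d2} together with $\forb(\T)=\{C_4,P_4,2K_2\}$ for the undirected variants, and the same appeal to \cite{GR19c} for the width parameters. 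The only deviations are presentational: you route $(\ref{dteq5})$ to $(\ref{dteq2})$ through the stated forbidden set of $\DCTP$ rather than through $\free(X\cup\co X)$ directly, and you spell out the induction and the underlying-graph bookkeeping in more detail than the paper does.
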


\begin{proof}
(\ref{dteq1}) $\Rightarrow$ (\ref{dteq2}) The given forbidden graphs are not directed threshold graphs
and the set of directed threshold graphs is closed under taking induced subdigraphs.

(\ref{dteq2}) $\Rightarrow$ (\ref{dteq1}) If digraph $G$ does not contain
$D_1,D_2,D_3,D_4,D_5,D_6,D_7,D_8$ (see Table \ref{F-co}) digraph $G$ is a directed co-graph  by \cite{CP06}
and thus has a construction using disjoint union, series composition, and
order composition.
By excluding $D_9$, $D_{10}$, and $D_{11}$ we know that
for every series composition of $G_1$ and $G_2$ either $G_1$
or $G_2$ is bidirectional complete. Thus this subdigraph can also be added by
a number of series operations with one vertex.

Further by excluding
$D_{12}$, $D_{13}$, $D_{14}$, and $D_{15}$  we know that
for every order composition of $G_1$ and $G_2$ either $G_1$
or $G_2$ is a tournament and since we exclude a directed
cycle of length 3 by $D_5$, by Lemma \ref{le-tt} we know that $G_1$
or $G_2$ is a transitive tournament. Thus this subdigraph can also be added by
a number of order operations with one vertex.

By excluding $\co D_{11},\co D_{10},\co D_9$
for every disjoint union of $G_1$ and $G_2$ either $G_1$
or $G_2$ has no edge. Thus this subdigraph can also be added by
a number of disjoint union operations with one vertex.

(\ref{dteq2aa}) $\Leftrightarrow$ (\ref{dteq2a}) Since $\forb(\T)=\{C_4,P_4,2K_2\}$

(\ref{dteq1}) $\Leftrightarrow$ (\ref{dteq3}) and $(\ref{dteq3})\Leftrightarrow(\ref{dteq4})$ \cite{GR19c}. 

(\ref{dteq1}) $\Rightarrow$ (\ref{dteq4a}) and (\ref{dteq4a}) $\Rightarrow$ (\ref{dteq2}) \cite{GR19c}

(\ref{dteq2}) $\Rightarrow$ (\ref{dteq2a}) By Observation \ref{obs-t1}. 

(\ref{dteq2aa}) $\Rightarrow$ (\ref{dteq2}) By Observation \ref{forb-u-d2}

(\ref{dteq1}) $\Rightarrow$ (\ref{dteq5})  $G$ and $\co G$ are both directed threshold graphs and
thus both are  directed trivially perfect graphs.

(\ref{dteq5}) $\Rightarrow$ (\ref{dteq1})
Let $X=\{D_1,\ldots,D_{15}\}$.
By Theorem \ref{ch-dtp}
we know that $G\in \free(X)$ and $\co G \in \free(X)$. Lemma \ref{th-free} implies that
$G\in \free(X)\cap \free(\co X)$.
And again by Lemma \ref{th-free} we obtain
$G\in \free(X\cup \co X)$.
By Observation \ref{obs18} and part (2) of this theorem it holds
that $G$ is a directed threshold graph.
\end{proof}

For directed threshold graphs
Observation \ref{obs18} leads to the next result.

\begin{propostion}
$\DT=\co\DT$
\end{propostion}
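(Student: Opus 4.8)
The plan is to use the forbidden-subdigraph characterization already established in Theorem~\ref{t-dtp} together with the self-complementarity facts recorded in Observation~\ref{obs18}, exactly as the sentence preceding the statement suggests. First I would set $F=\{D_1,\ldots,D_{15},\co D_9,\co D_{10},\co D_{11}\}$ and recall from the equivalence $(\ref{dteq1})\Leftrightarrow(\ref{dteq2})$ of Theorem~\ref{t-dtp} that $\DT=\free(F)$.

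Next, by Lemma~\ref{th-free} (the part asserting $\co X=\free(\co F_1)$ whenever $X=\free(F_1)$) we have $\co\DT=\free(\co F)$. Hence it suffices to prove that $F$ is closed under complementation, i.e. $\co F=F$; from this the claim follows immediately, since then $\co\DT=\free(\co F)=\free(F)=\DT$.

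To verify $\co F=F$ I would split $F$ into four blocks and complement each, using $\co\co D=D$ throughout. By Observation~\ref{obs18}(1) we have $\co\{D_1,\ldots,D_8\}=\{D_1,\ldots,D_8\}$, and by Observation~\ref{obs18}(2) we have $\co\{D_{12},\ldots,D_{15}\}=\{D_{12},\ldots,D_{15}\}$, so these two blocks are fixed setwise. The remaining two blocks are interchanged: $\co\{D_9,D_{10},D_{11}\}=\{\co D_9,\co D_{10},\co D_{11}\}$ and $\co\{\co D_9,\co D_{10},\co D_{11}\}=\{D_9,D_{10},D_{11}\}$. Taking the union of the four images reproduces $F$ exactly, whence $\co F=F$.

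There is essentially no analytic difficulty here; the only point requiring care is the bookkeeping of the forbidden set. Specifically one must check that the three ``extra'' forbidden digraphs $\co D_9,\co D_{10},\co D_{11}$ occurring in Theorem~\ref{t-dtp}(\ref{dteq2}) are precisely the complements of the block $D_9,D_{10},D_{11}$ already present in $F$, so that complementation merely permutes $F$ rather than enlarging it. This is exactly what separates $\DT$ from $\DTP$, whose forbidden set $\{D_1,\ldots,D_{15}\}$ fails to be self-complementary and therefore gives $\DTP\neq\co\DTP$. A fully alternative route would be a direct induction on the construction, using $\co(G\oplus\bullet)=\co G\otimes\bullet$, $\co(G\otimes\bullet)=\co G\oplus\bullet$, $\co(G\oslash\bullet)=\bullet\oslash\co G$ and $\co(\bullet\oslash G)=\co G\oslash\bullet$ to observe that the defining operation set of $\DT$ is preserved under complementation; but the forbidden-set argument is shorter given what has already been proved.
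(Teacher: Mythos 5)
Your proof is correct and follows exactly the route the paper intends: the paper merely remarks that Observation~\ref{obs18} together with the forbidden-set characterization $\DT=\free(\{D_1,\dots,D_{15},\co D_9,\co D_{10},\co D_{11}\})$ from Theorem~\ref{t-dtp} yields the result, and your block-by-block verification that this forbidden set is closed under complementation (the two self-complementary blocks fixed, the blocks $\{D_9,D_{10},D_{11}\}$ and $\{\co D_9,\co D_{10},\co D_{11}\}$ interchanged) supplies precisely the bookkeeping the paper leaves implicit.
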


Similar as undirected threshold graphs (cf.~\cite{HSS06}), directed threshold graphs can also
be characterized by the existence of a special sequence.

A {\em directed creation sequence} for $G=(V,E)$ with $V=\{v_1,\ldots,v_n\}$ is
a quaternary string $t=t_1,\ldots,t_n$ of length $n$  such that
there is a bijection $v:\{1,\ldots,n\}\to V$ with
\begin{itemize}
\item
$t_i=3$ if  $v(i)$ is a
bi-dominating vertex for the graph induced by $\{v(1),\ldots,v(i)\}$

$t_i=2$ if  $v(i)$ is a
in-dominating vertex for the graph induced by $\{v(1),\ldots,v(i)\}$
\item
$t_i=1$ if  $v(i)$ is a
out-dominating vertex for the graph induced by $\{v(1),\ldots,v(i)\}$
and
\item
$t_i=0$ if  $v(i)$ is an isolated vertex for the graph induced by $\{v(1),\ldots,v(i)\}$.
\end{itemize}
W.l.o.g. we define a single vertex to be a dominating vertex, i.e.~$t_1=1$.
By adapting the result for undirected threshold graphs from
\cite[Fig.\ 1.4]{MP95a} a directed creation sequence
can be found in time  $\bigo(n+m)$. Furthermore this implies that
directed threshold graphs can be recognized in time $\bigo(n+m)$.

\subsection{Oriented threshold graphs}\label{sec-or-tresh}

The class of oriented threshold graphs has been introduced in \cite{Boe18} as follows.

\begin{definition}[Oriented threshold graphs \cite{Boe18}]
An oriented graph $G = (V,A)$ is an {\em oriented threshold graph}, if there exists
an injective weight function  $w : V \to \IR$ and a threshold value $t \in \IR$
such that $(x,y)\in A$ if and only if $|w(x)| + |w(y)| \geq  t$ and $w(x) > w(y)$.
\end{definition}

\begin{theorem}[\cite{Boe18}]
Let $G$ be an oriented digraph. The following properties are equivalent:
\begin{enumerate}
	\item $G$ is an oriented threshold graph.
	\item $G\in\free(\{D_1, D_5\})$ and $\un(G)\in \free(\{2K_2,C_4,P_4\})$.
\item $G$ is a transitive orientation of a threshold graph.
\item  $G$ can be
constructed from the one vertex empty graph by successively adding an
isolated vertex, an out-dominating vertex or an in-dominated vertex.
\end{enumerate}
\end{theorem}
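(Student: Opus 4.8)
The plan is to show the four conditions equivalent through the cycle $(1)\Rightarrow(3)\Rightarrow(1)$ supplemented by $(1)\Rightarrow(4)\Rightarrow(3)$ and the equivalence $(2)\Leftrightarrow(3)$, so that $(2)$ and $(4)$ are drawn in along the way. The equivalence $(2)\Leftrightarrow(3)$ is almost free from results already available: since $G$ is an oriented digraph it contains no $\overleftrightarrow{K_2}$, so by Lemma~\ref{le-co-t} the condition $G\in\free(\{D_1,D_5\})$ is equivalent to $G$ being transitive (the digraphs $D_1$ and $D_5$ are exactly the two ways a directed two-path $u\to v\to w$ can fail transitivity: either $u,w$ are non-adjacent, giving $\overrightarrow{P_3}=D_1$, or the reverse arc is present, giving $\overrightarrow{C_3}=D_5$); and because $\forb(\T)=\{P_4,C_4,2K_2\}$, the condition $\un(G)\in\free(\{2K_2,C_4,P_4\})$ says precisely that $\un(G)$ is a threshold graph. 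Hence $(2)$ and $(3)$ both assert that $G$ is a transitive orientation of a threshold graph.

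For $(1)\Rightarrow(3)$ I would read off both properties from the weight function $w$ and threshold $t$. Since $w$ is injective, for any pair exactly one of $w(x)>w(y)$, $w(y)>w(x)$ holds, so $\{x,y\}$ is an edge of $\un(G)$ iff $|w(x)|+|w(y)|\ge t$; taking $|w|$ as nonnegative vertex weights, this is a threshold realization, so $\un(G)$ is a threshold graph. For transitivity, suppose $(x,y),(y,z)\in A$; then $w(x)>w(y)>w(z)$, and since $w(y)$ lies between $w(x)$ and $w(z)$ we have $|w(y)|\le\max(|w(x)|,|w(z)|)$. Thus one of the two hypotheses $|w(x)|+|w(y)|\ge t$, $|w(y)|+|w(z)|\ge t$ already forces $|w(x)|+|w(z)|\ge t$, and together with $w(x)>w(z)$ this yields $(x,z)\in A$.

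The implication $(1)\Rightarrow(4)$ I would prove by a peeling induction on $|V|$. Let $m$ be a vertex of maximum $|w|$ and $s$ one of minimum $|w|$. If $|w(m)|+|w(s)|\ge t$, then $m$ is adjacent to every other vertex, and the sign of $w(m)$ makes $m$ the largest (respectively smallest) weight, hence a source (respectively a sink); so $m$ is out-dominating or in-dominated. Otherwise $|w(s)|+|w(u)|<t$ for all $u$, so $s$ is isolated. In either case the removed vertex carries one of the three admissible labels, the restriction of $w$ witnesses $(1)$ on $G$ minus that vertex, and reading the deletions backwards produces the construction required in $(4)$. Conversely $(4)\Rightarrow(3)$ is routine: each permitted operation keeps the underlying graph threshold (adding an isolated or a dominating vertex) and preserves transitivity, since a newly added source or sink creates no forbidden $\overrightarrow{P_3}$ or $\overrightarrow{C_3}$.

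The remaining and genuinely hard direction is $(3)\Rightarrow(1)$, which closes the cycle by manufacturing a weight function out of the purely structural description. My plan is to fix the magnitudes from the vicinal (neighbourhood-inclusion) order of $\un(G)$, choosing $|w(v)|\in(0,t)$ increasing with the neighbourhood of $v$ so that $|w(x)|+|w(y)|\ge t$ reproduces adjacency, and then to fix the signs from a linear extension of the partial order defined by the transitive orientation, so that the signed order of $w$ equals that extension. The crux—and the step I expect to be the main obstacle—is proving that these two demands are compatible: that the transitive orientation necessarily sorts the vertices into a positive ``upper'' block and a negative ``lower'' block whose magnitudes grow towards the two extremes in agreement with the vicinal order. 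I would isolate this as a peeling lemma, namely that every transitive oriented graph whose underlying graph is threshold has a vertex that is isolated, out-dominating, or in-dominated; an induction deleting such a vertex then assigns the weights one at a time. Verifying this lemma requires a short case analysis on the in/out-neighbour split of a maximum-degree (hence dominating) vertex, using transitivity to promote either that vertex or an extreme element of the order to a genuine source or sink—this is the one place where the threshold structure and the orientation must be played against each other, and it is where the real content of the theorem sits.
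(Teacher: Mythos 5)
The paper does not actually prove this theorem: it is imported verbatim from Boeckner \cite{Boe18} and used as a black box. The nearest in-paper relative is Theorem \ref{char-oc}, which takes the recursive definition (essentially your item~(4) rewritten with $\oplus\,\bullet$, $\oslash\,\bullet$, $\bullet\,\oslash$) as the starting point and derives forbidden-subdigraph characterizations by specialising the directed threshold machinery of Theorem \ref{t-dtp}; it never touches the weight function of item~(1). Your route is therefore genuinely different and, importantly, self-contained. The legs you work out in full are correct: $(2)\Leftrightarrow(3)$ is Lemma \ref{le-co-t} plus its easy converse for oriented digraphs (exactly as the paper argues in Theorem \ref{th-ch-oco}) together with $\forb(\T)=\{P_4,C_4,2K_2\}$; the inequality $|w(y)|\le\max(|w(x)|,|w(z)|)$ for $w(x)>w(y)>w(z)$ does give transitivity in $(1)\Rightarrow(3)$; and the peeling in $(1)\Rightarrow(4)$ (the max-$|w|$ vertex is an out-dominating source or an in-dominated sink when it clears the threshold against the min-$|w|$ vertex, otherwise the min-$|w|$ vertex is isolated) is complete, as is $(4)\Rightarrow(3)$.

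Two places remain sketches. The peeling lemma behind $(3)\Rightarrow(1)$ is true and your intended case analysis does close; since it is the crux, write it out. Concretely: if $\un(G)$ has no isolated vertex, the dominating vertices form a transitive tournament with source $m$ and sink $s$. If $m$ is not a source of $G$, then transitivity forces every $a\in N^-(m)$ to send arcs to all of $V\setminus N^-(m)$; were $a$ also adjacent to the rest of $N^-(m)$ it would be dominating, hence an out-neighbour of $m$ --- so $N^-(m)$ contains a non-adjacent pair, and symmetrically so does $N^+(s)$. Since $N^-(m)\cap N^+(s)=\emptyset$ (otherwise transitivity produces a $\overleftrightarrow{K_2}$), these two pairs induce a $C_4$ in $\un(G)$, contradicting thresholdness; hence $m$ is out-dominating or $s$ is in-dominated. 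Second, ``assigning the weights one at a time'' along the resulting creation sequence is not quite as routine as you suggest: the naive greedy choice (huge magnitude for a dominating vertex, tiny magnitude for an isolated one, fixed $t$) breaks as soon as a dominating vertex pushes $\max_v|w(v)|$ past $t$, because a later isolated vertex would need magnitude below $t-\max_v|w(v)|<0$. The standard repair --- take $t=1$ and $|w(v_i)|=\tfrac12+\varepsilon_i$ or $\tfrac12-\varepsilon_i$ according to whether $v_i$ enters as dominating or isolated, with $0<\varepsilon_1<\cdots<\varepsilon_n<\tfrac12$ all distinct, sign positive for out-dominating and negative for in-dominated --- works and should be stated explicitly. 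With those two additions your proof is complete.
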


Using our notations we obtain the following definition:

\begin{definition}[Oriented threshold graphs] \label{othgwe}
The class of {\em oriented threshold graphs} is recursively defined as follows.
\begin{enumerate}[(i)]
\item Every digraph on a single vertex $(\{v\},\emptyset)$,
denoted by $\bullet$, is an oriented threshold graph.
\item If $G$ is an  oriented threshold graph,
then
\begin{inparaenum}[(a)]
\item
$G\oplus \bullet$,
\item
$G\oslash \bullet$, and
\item
$\bullet\oslash G$ are oriented threshold graphs.
\end{inparaenum}
\end{enumerate}
The class of oriented threshold graphs is denoted by $\OT$.
\end{definition}

The recursive definition of  oriented and undirected threshold graphs lead to the following
observation.

\begin{observation}\label{obs-ot1}
For every oriented threshold graph $G$ the
underlying undirected graph $\un(G)$ is a co-graph.
\end{observation}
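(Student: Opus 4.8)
The plan is to derive this as an immediate consequence of a class inclusion together with the already-established Observation \ref{obs-oco1}. First I would observe that every oriented threshold graph is an oriented co-graph, i.e.\ $\OT\subseteq\OC$. Indeed, the three generating operations in Definition \ref{othgwe}, namely $G\oplus\bullet$, $G\oslash\bullet$, and $\bullet\oslash G$, are each special cases of the disjoint union $\oplus$ and the order composition $\oslash$ that define oriented co-graphs in Definition \ref{ocog}, obtained by restricting one operand to the single-vertex digraph $\bullet$. Hence any construction of an oriented threshold graph is also a valid construction of an oriented co-graph.

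Given this containment, I would simply invoke Observation \ref{obs-oco1}, which states that $\un(G)$ is a co-graph for every oriented co-graph $G$. Since $G\in\OT\subseteq\OC$, it follows at once that $\un(G)\in\C$, which is exactly the assertion. This route requires no new computation beyond checking the operand restriction that gives the inclusion.

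Alternatively, one could give a direct structural induction on the recursive definition of $\OT$: the base case is a single vertex, whose underlying graph is trivially a co-graph, and in the inductive step one uses the identities $\un(G\oplus\bullet)=\un(G)\oplus\bullet$ and $\un(G\oslash\bullet)=\un(G)\otimes\bullet=\un(\bullet\oslash G)$, which translate the digraph operations into disjoint union and join of undirected graphs. Since $\C$ is closed under both $\oplus$ and $\otimes$ by definition, the inductive step closes. The only point needing (routine) verification in either approach is this translation of the order composition $\oslash$ into the join $\otimes$ at the level of underlying undirected graphs, which holds because an order composition inserts a directed arc between every pair of vertices taken from the two operands, so forgetting orientations leaves precisely the join edges. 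I expect no genuine obstacle here; the inclusion-based argument is the shorter and cleaner of the two.
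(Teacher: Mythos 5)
Your argument is correct and matches the paper's (implicit) justification: the paper states this as an unproved observation following from the recursive definitions, which is exactly the operation-matching / structural-induction argument you spell out, and your shortcut via $\OT\subseteq\OC$ together with Observation \ref{obs-oco1} is just a packaged form of the same idea. The only cosmetic remark is that the paper's phrasing suggests comparing $\OT$ with undirected \emph{threshold} graphs (so one even gets that $\un(G)$ is a threshold graph, a stronger fact used later in Theorem \ref{char-oc}), but for the stated claim your route is fully sufficient.
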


This class can also be defined by forbidden induced subdigraphs. As
it was possible for oriented co-graphs and oriented trivially perfect
graphs, we can use the fact that oriented threshold graphs are
exactly the directed threshold graphs not containing an induced
$\overleftrightarrow{K_2}$:

\begin{theorem}\label{char-oc}
Let $G$ be a digraph. The following properties are equivalent:
\begin{enumerate}
	\item\label{ch-ot1} $G$ is an oriented threshold graph.
        \item\label{ch-ot1a} $G$ is an oriented co-trivially perfect graph.
	\item\label{ch-ot2} $G\in\free(\{D_1, D_5, D_8, D_{12}, 2 \overrightarrow{P_2}, \overleftrightarrow{K_2}\})$.

        \item\label{ch-ot3} $G\in\free(\{D_1,D_5,\overleftrightarrow{K_2}\})$
              and $\un(G)\in\free(\{2K_2,C_4,P_4\})$.
        \item\label{ch-ot4} $G\in \free(\{D_1,D_5,\overleftrightarrow{K_2}\})$
              and $\un(G)$ is a threshold graph.

        \item\label{ch-ot6} $G\in \free(\{D_1, D_5,  D_{12}, \overleftrightarrow{K_2}\})$ and $\un(G)\in\free(\{P_4,2K_2\})$.
        \item\label{ch-ot7} $G\in \free(\{D_1, D_5,  D_{12}, \overleftrightarrow{K_2}\})$ and $\un(G)$ is a  co-trivially perfect graph.

        \item\label{ch-ot8} $G\in \free(\{D_1, D_5,  D_{12}, 2 \overrightarrow{P_2}, \overleftrightarrow{K_2}\})$ and $\un(G)\in\free(\{P_4\})$.
        \item\label{ch-ot9} $G\in \free(\{D_1, D_5,  D_{12}, 2 \overrightarrow{P_2}, \overleftrightarrow{K_2}\})$ and $\un(G)$ is a  co-graph.

       \item\label{ch-ot5} $G$ is  transitive and $G\in\free(\{D_8, D_{12},2 \overrightarrow{P_2},\overleftrightarrow{K_2}\})$.
\end{enumerate}
\end{theorem}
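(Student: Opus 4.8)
The plan is to make property~(\ref{ch-ot2}) the hub and reduce every other item to it, using Boeckner's theorem \cite{Boe18}, Lemma~\ref{le-co-t}, Observation~\ref{forb-u-d2}, and the fact (used in Theorem~\ref{th-ori-tp}) that $D_8$ and $D_{12}$ are the unique transitive orientations of $P_4$ and $C_4$; I will also use that $2\overrightarrow{P_2}=\co D_{11}$ is, up to isomorphism, the only orientation of $2K_2$ and is trivially transitive. The underlying-graph identities $\forb(\C)=\{P_4\}$, $\forb(\CTP)=\{P_4,2K_2\}$ and $\forb(\T)=\{P_4,C_4,2K_2\}$ from Table~\ref{t-uc} will be invoked repeatedly.

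First I would dispatch the easy identifications. The equivalence $(\ref{ch-ot1})\Leftrightarrow(\ref{ch-ot1a})$ is immediate, since Definitions~\ref{othgwe} and~\ref{octpg} generate their classes from the single vertex by exactly the same three operations $G\oplus\bullet$, $G\oslash\bullet$, $\bullet\oslash G$, so $\OT=\OCTP$. The equivalences $(\ref{ch-ot3})\Leftrightarrow(\ref{ch-ot4})$, $(\ref{ch-ot6})\Leftrightarrow(\ref{ch-ot7})$ and $(\ref{ch-ot8})\Leftrightarrow(\ref{ch-ot9})$ are only the translations $\T=\free(\{P_4,C_4,2K_2\})$, $\CTP=\free(\{P_4,2K_2\})$ and $\C=\free(\{P_4\})$. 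Finally $(\ref{ch-ot1})\Leftrightarrow(\ref{ch-ot4})$ comes from \cite{Boe18}: an oriented threshold graph uses no series composition and is therefore $\overleftrightarrow{K_2}$-free, so Boeckner's characterization gives $G\in\free(\{D_1,D_5\})$ with $\un(G)$ a threshold graph; conversely the $\overleftrightarrow{K_2}$-freeness in~(\ref{ch-ot4}) certifies that $G$ is oriented, so Boeckner's theorem applies verbatim.

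The core step is $(\ref{ch-ot4})\Leftrightarrow(\ref{ch-ot2})$. For $(\ref{ch-ot4})\Rightarrow(\ref{ch-ot2})$, Observation~\ref{forb-u-d2} applied to $\un(G)\in\free(\{P_4,C_4,2K_2\})$ excludes every biorientation of $P_4$, $C_4$ and $2K_2$ from $G$, in particular the specific orientations $D_8$, $D_{12}$ and $2\overrightarrow{P_2}$; together with $\{D_1,D_5,\overleftrightarrow{K_2}\}$ this yields~(\ref{ch-ot2}). For the converse, $G\in\free(\{D_1,D_5,\overleftrightarrow{K_2}\})$ makes $G$ transitive by Lemma~\ref{le-co-t}, so any induced underlying $P_4$, $C_4$ or $2K_2$ would have to carry its unique transitive orientation $D_8$, $D_{12}$ or $2\overrightarrow{P_2}$ --- all forbidden --- whence $\un(G)\in\free(\{P_4,C_4,2K_2\})=\T$. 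This transitivity-plus-uniqueness argument is the main obstacle, because Observation~\ref{forb-u-d1} cannot be applied directly: not all biorientations of $P_4$ and $C_4$ are excluded in~(\ref{ch-ot2}), only the transitive ones, and it is the transitivity of $G$ that rules out every non-transitive orientation.

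The remaining items reduce to the hub by the same device on weaker undirected classes. Property~(\ref{ch-ot5}) is equivalent to~(\ref{ch-ot2}) because, by Lemma~\ref{le-co-t}, ``$G$ transitive and $\overleftrightarrow{K_2}$-free'' is interchangeable with $G\in\free(\{D_1,D_5,\overleftrightarrow{K_2}\})$, the converse using that $D_1$ and $D_5$ are themselves non-transitive, so~(\ref{ch-ot5}) merely repackages the forbidden set of~(\ref{ch-ot2}). For $(\ref{ch-ot6})\Leftrightarrow(\ref{ch-ot4})$ one observes that under transitivity, forbidding $D_{12}$ is exactly what promotes a co-trivially-perfect $\un(G)$ to a threshold graph by killing the only transitive orientation of $C_4$, while the reverse direction recovers $D_{12}$-freeness from $C_4$-freeness via Observation~\ref{forb-u-d2}. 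Symmetrically, $(\ref{ch-ot8})\Leftrightarrow(\ref{ch-ot4})$ holds because forbidding both $D_{12}$ and $2\overrightarrow{P_2}$ promotes a co-graph $\un(G)$ to a threshold graph by additionally excluding induced $C_4$ and $2K_2$. No idea beyond the transitivity-plus-uniqueness principle of the core step is needed.
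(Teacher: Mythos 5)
Your proof is correct, and while it shares the paper's overall architecture (item~(\ref{ch-ot2}) as hub, the three pairs (\ref{ch-ot3})$\Leftrightarrow$(\ref{ch-ot4}), (\ref{ch-ot6})$\Leftrightarrow$(\ref{ch-ot7}), (\ref{ch-ot8})$\Leftrightarrow$(\ref{ch-ot9}) dispatched by the undirected forbidden sets, and transitivity for~(\ref{ch-ot5})), the two load-bearing steps are done genuinely differently. The paper proves (\ref{ch-ot1})$\Leftrightarrow$(\ref{ch-ot2}) by routing through directed threshold graphs: it invokes Theorem~\ref{t-dtp}, asserts that excluding the six digraphs of~(\ref{ch-ot2}) already excludes all eighteen forbidden digraphs of $\DT$ (a Theorem~\ref{th-free-sub} check it does not spell out), and then argues that $\overleftrightarrow{K_2}$-freeness removes the series composition; and it derives (\ref{ch-ot2})$\Rightarrow$(\ref{ch-ot4})/(\ref{ch-ot7})/(\ref{ch-ot9}) by first passing to class membership and citing Observation~\ref{obs-ot1}. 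You instead take Boeckner's theorem as the bridge (\ref{ch-ot1})$\Leftrightarrow$(\ref{ch-ot4}) and make (\ref{ch-ot4})$\Leftrightarrow$(\ref{ch-ot2}) the core, proved by a direct combinatorial argument: Lemma~\ref{le-co-t} forces transitivity, induced subdigraphs of transitive oriented graphs are transitive orientations, and $P_4$, $C_4$, $2K_2$ have unique transitive orientations $D_8$, $D_{12}$, $2\overrightarrow{P_2}$. What your route buys is self-containment at exactly the point where the paper is thinnest: Observation~\ref{obs-ot1} as stated only yields that $\un(G)$ is a co-graph, not a threshold graph, so the paper's citation of it for (\ref{ch-ot2})$\Rightarrow$(\ref{ch-ot4}) is imprecise, whereas your transitive-orientation argument delivers the threshold conclusion directly; you also avoid the unverified containment check against the $\DT$ forbidden list. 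What it costs is dependence on Boeckner's external characterization as a black box, where the paper stays internal to its own Theorem~\ref{t-dtp}. Your explicit remark that Observation~\ref{forb-u-d1} is not applicable because only the transitive orientations of $P_4$ and $C_4$ are forbidden is exactly the right caution and is the reason the transitivity lemma is indispensable in your version.
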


\begin{proof}
$(\ref{ch-ot1})\Leftrightarrow(\ref{ch-ot1a})$ By the recursive definition of the classes, which arises
from the restriction of the directed classes to oriented graphs.

$(\ref{ch-ot1})\Rightarrow(\ref{ch-ot2})$ If $G$ is an oriented threshold  graph, then
$G$ is a directed threshold graph and by Theorem \ref{t-dtp} it holds that $G\in \free(\{D_1, \dots, D_{15}, \co D_{11},
		  \co D_{10},
		  \co D_9\})$.
Further $G\in \free(\{\overleftrightarrow{K_2}\})$
because of the missing series composition. This leads to  $G\in\free(\{D_1, D_5, D_8, D_{12}, \overleftrightarrow{K_2}, 2 \overrightarrow{P_2}\})$.

$(\ref{ch-ot2})\Rightarrow(\ref{ch-ot1})$
If $G\in\free(\{D_1, D_5, D_8, D_{12}, \overleftrightarrow{K_2}, 2 \overrightarrow{P_2}\})$, then $G\in \free(\{D_1, \dots, D_{15}, \co D_{11},
		  \co D_{10},
		  \co D_9\})$ and
is a directed threshold graph.
Since $G\in \free(\{\overleftrightarrow{K_2}\})$
there is no series operation in any
construction of $G$ which implies that $G$ is an oriented threshold graph.

(\ref{ch-ot3}) $\Leftrightarrow$ (\ref{ch-ot4}) Since $\forb(\T)=\{C_4,P_4,2K_2\}$.

(\ref{ch-ot6}) $\Leftrightarrow$ (\ref{ch-ot7}) Since $\forb(\CTP)=\{P_4,2K_2\}$.

(\ref{ch-ot8}) $\Leftrightarrow$ (\ref{ch-ot9}) Since $\forb(\C)=\{P_4\}$.

(\ref{ch-ot2}) $\Rightarrow$ (\ref{ch-ot4}), (\ref{ch-ot2}) $\Rightarrow$ (\ref{ch-ot7}), (\ref{ch-ot2}) $\Rightarrow$ (\ref{ch-ot9}) By Observation \ref{obs-ot1}.

(\ref{ch-ot3}) $\Rightarrow$ (\ref{ch-ot2}), (\ref{ch-ot6}) $\Rightarrow$ (\ref{ch-ot2}), (\ref{ch-ot8}) $\Rightarrow$ (\ref{ch-ot2}) By Observation \ref{forb-u-d2}

$(\ref{ch-ot2})\Rightarrow(\ref{ch-ot5})$ By Lemma \ref{le-co-t}
we know that $G$ is transitive.

$(\ref{ch-ot5})\Rightarrow(\ref{ch-ot2})$
 If $G$ is transitive, then  $G\in\free(\{D_1, D_5\})$.
\end{proof}

The proof for the next Theorem can be done very similar
to the proof of Theorem \ref{th-ori-tp}.

\begin{theorem}
A graph $G$ is a threshold graph if and only if there exists an orientation $G'$ of  $G$
such that $G'$ is an oriented threshold graph.
\end{theorem}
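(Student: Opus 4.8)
The plan is to mirror the proof of Theorem \ref{th-ori-tp}, replacing the pair $\{C_4,P_4\}$ of forbidden subgraphs of trivially perfect graphs by the triple $\forb(\T)=\{P_4,C_4,2K_2\}$ and enlarging the associated set of transitively oriented forbidden subdigraphs accordingly. The bridge in both directions is the forbidden-subdigraph characterization \ref{ch-ot2} of Theorem \ref{char-oc}, which states that $G'$ is an oriented threshold graph precisely when $G'\in\free(\{D_1,D_5,D_8,D_{12},2\overrightarrow{P_2},\overleftrightarrow{K_2}\})$, combined with the key observation that $D_8$, $D_{12}$ and $2\overrightarrow{P_2}$ are the unique transitive orientations of $P_4$, $C_4$ and $2K_2$, respectively.

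For the forward direction I would assume that $G$ is a threshold graph. Since threshold graphs are trivially perfect and hence comparability graphs, $G$ admits a transitive orientation $G'$. Being transitive, $G'$ has no induced $D_1$ or $D_5$ (by the argument of Lemma \ref{le-co-t}, i.e.\ the step $(\ref{ch-ot5})\Rightarrow(\ref{ch-ot2})$ of Theorem \ref{char-oc}), and being an orientation it has no induced $\overleftrightarrow{K_2}$. Because $\un(G')=G\in\free(\{P_4,C_4,2K_2\})$ and the underlying undirected graphs of $D_8$, $D_{12}$ and $2\overrightarrow{P_2}$ are exactly $P_4$, $C_4$ and $2K_2$, no induced copy of these three digraphs can occur in $G'$. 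Hence $G'\in\free(\{D_1,D_5,D_8,D_{12},2\overrightarrow{P_2},\overleftrightarrow{K_2}\})$, so Theorem \ref{char-oc} yields that $G'$ is an oriented threshold graph.

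For the reverse direction, let $G'$ be an oriented threshold graph. By Theorem \ref{char-oc} it is transitive and avoids $D_8$, $D_{12}$ and $2\overrightarrow{P_2}$. I would then show $\un(G')\in\free(\{P_4,C_4,2K_2\})$: any induced $P_4$, $C_4$ or $2K_2$ on a vertex set $S$ lifts to the induced subdigraph of $G'$ on $S$, which inherits transitivity and is therefore a transitive orientation of the forbidden graph; by uniqueness this forced copy must be $D_8$, $D_{12}$ or $2\overrightarrow{P_2}$, contradicting the exclusions. Thus $\un(G')\in\free(\forb(\T))$ is a threshold graph and $G'$ is an orientation of it. The only step requiring care is the verification that $P_4$, $C_4$ and $2K_2$ each have a unique transitive orientation up to isomorphism---for $2K_2$ this is immediate, since every orientation of two disjoint edges is transitive and isomorphic to $2\overrightarrow{P_2}$---which is the analogue of the corresponding sentence in the proof of Theorem \ref{th-ori-tp} and is the main (though minor) obstacle.
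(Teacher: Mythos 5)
Your proposal is correct and is exactly the argument the paper intends: the paper gives no explicit proof, stating only that it ``can be done very similar to the proof of Theorem \ref{th-ori-tp}'', and your adaptation---replacing $\{C_4,P_4\}$ by $\forb(\T)=\{P_4,C_4,2K_2\}$, using the comparability of threshold graphs for the forward direction, and the uniqueness of the transitive orientations $D_8$, $D_{12}$, $2\overrightarrow{P_2}$ together with characterization (\ref{ch-ot2}) of Theorem \ref{char-oc} for the reverse---is precisely that adaptation. The extra care you take with $2K_2$ (every orientation of two disjoint edges is vacuously transitive and isomorphic to $2\overrightarrow{P_2}$) is the only genuinely new detail, and you handle it correctly.
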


\begin{observation}
  If $G\in DT$ then the underlying undirected graph of the symmetric part of $G$ is a threshold graph and the asymmetric part of $G$ is an oriented threshold graph.
\end{observation}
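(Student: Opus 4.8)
The plan is to argue from the recursive construction of~$G$, in complete parallel to the analogous observations already stated for $\DTP$, $\DWQT$, and $\DSC$. First I would record how the four operations of Definition~\ref{dthg} distribute arcs between the symmetric and the asymmetric part. The series composition $G\otimes\bullet$, which attaches a bi-dominating vertex, is the only operation producing bidirectional (hence symmetric) arcs; the two order compositions $G\oslash\bullet$ and $\bullet\oslash G$, which attach an in-dominated resp.\ an out-dominating vertex, produce only single-direction (hence asymmetric) arcs; and the disjoint union $G\oplus\bullet$ produces no arcs. Because each operation attaches a single new vertex and fixes uniformly the type of all arcs between that vertex and the previously built digraph, the splitting of the arc set into its symmetric and asymmetric parts respects this recursive decomposition.

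For the symmetric part, I would take a construction of $G$ according to Definition~\ref{dthg} and replace every order composition by the operation $\oplus\bullet$ that adds an isolated vertex, since an order composition contributes only asymmetric arcs, which disappear in $sym(G)$. The surviving construction uses only $\oplus\bullet$ and $\otimes\bullet$; in the underlying undirected graph the latter adds a dominating vertex, i.e.\ a join with a single vertex. By Table~\ref{t-uc} the operations $\bullet$, $G_1\oplus\bullet$, $G_1\otimes\bullet$ are exactly those defining undirected threshold graphs, so $\un(sym(G))$ is a threshold graph.

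For the asymmetric part, I would argue symmetrically, replacing every series composition by $\oplus\bullet$, since its bidirectional arcs disappear in $asym(G)$. The surviving construction uses only $\oplus\bullet$, $G\oslash\bullet$ and $\bullet\oslash G$, which are precisely the operations of Definition~\ref{othgwe} defining oriented threshold graphs; hence $asym(G)$ is an oriented threshold graph.

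The only delicate point is verifying that these local substitutions in the construction really produce $sym(G)$ and $asym(G)$, and not some other spanning subdigraph. I would settle this by induction on the number of operations: since every operation determines the arc type between the new vertex and all earlier vertices uniformly, deleting all asymmetric (resp.\ all symmetric) arcs of $G$ coincides exactly with performing the modified construction. As the whole argument mirrors the already discussed cases of $\DTP$, $\DWQT$ and $\DSC$, I do not expect any genuine obstacle beyond this bookkeeping.
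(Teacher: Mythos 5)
Your proposal is correct and follows essentially the same route as the paper: the paper justifies this observation with a one-line remark that the symmetric part is built by the rules of (undirected) threshold graphs and the asymmetric part by the rules of $\OT$, which is precisely the substitution argument (replace order compositions by disjoint unions for $sym(G)$, series compositions by disjoint unions for $asym(G)$) that the paper spells out earlier for directed co-graphs and that you reproduce here. Your additional inductive bookkeeping that the local substitutions really yield $sym(G)$ and $asym(G)$ is a welcome elaboration of what the paper leaves implicit.
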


This holds since the asymmetric part is exactly build with the same rules like threshold graphs and the asymmetric part with the rules of \OT. 

Similar to directed threshold graphs every oriented co-graph can be defined by a sequence with only three operations, which can be used to give a linear time recognition algorithm.

\subsection{Threshold Digraphs and Ferres Digraphs}\label{sec-dir-tresha}

Threshold digraphs are a different way to define a directed version
of threshold graphs. This idea comes from \cite{CDMS14} and uses a
definition of forbidden subdigraphs.
A {\em $2$-switch} is is a vertex set $\{w,x,y,z\}$ such that there
are edges $(w,x)$ and $(y,z)$ but there are no edges $(w,z)$ and
$(y,x)$, see Figure \ref{F-ex-d}. Examples for a 2-switch are
$\co D_{10},\co D_9,
\co D_{11}$ and $\overrightarrow{P_4}$.
On the other hand, there are no 2-switches within directed threshold graphs.

\begin{observation}\label{obs-dt-2-s}
Let $G$ be a directed threshold graph, the $G$ does not contain
a $2$-switch.
\end{observation}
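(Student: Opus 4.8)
The plan is to exploit the one-vertex-at-a-time construction of directed threshold graphs guaranteed by Definition~\ref{dthg} (equivalently, their directed creation sequence). First I would record the governing structural fact: if $v_1,\ldots,v_n$ is a construction order of $G\in\DT$, then each $v_i$ is attached to the previously built graph on $\{v_1,\ldots,v_{i-1}\}$ in a \emph{uniform} way, namely as an isolated vertex (no arcs), an out-dominating vertex (an arc $(v_i,v_j)$ to every $j<i$ and none back), an in-dominated vertex (an arc $(v_j,v_i)$ from every $j<i$ and none out), or a bi-dominating vertex (both arcs to every $j<i$). This is exactly the content of the four operations $G\oplus\bullet$, $\bullet\oslash G$, $G\oslash\bullet$, and $G\otimes\bullet$; since none of them alters previously created arcs, the adjacency of $v_i$ to any earlier vertex is fixed once and for all by the type of $v_i$.

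Next I would argue by contradiction, assuming $G$ contains a $2$-switch $\{w,x,y,z\}$, so that $(w,x),(y,z)\in E$ while $(w,z),(y,x)\notin E$. A short preliminary step checks that the four vertices are pairwise distinct: for example $w=y$ would force $(w,z)=(y,z)$ to be simultaneously present and absent, and $x=z$ would do the same for $(w,x)$. The crucial observation is that each of the four vertices occurs once in an arc and once in a non-arc of the $2$-switch \emph{with the same directional role}: $w$ and $y$ each occur as a tail ($(w,x)\in E$, $(w,z)\notin E$ and $(y,z)\in E$, $(y,x)\notin E$), while $x$ and $z$ each occur as a head ($(w,x)\in E$, $(y,x)\notin E$ and $(y,z)\in E$, $(w,z)\notin E$).

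Then I would take $v$ to be the vertex of $\{w,x,y,z\}$ appearing last in the construction order and examine the four possible types of $v$. The uniformity fact says $v$ treats the other three (already present) vertices identically, whereas the pattern above gives $v$ a present arc to (or from) one of them and an absent arc to (or from) another in the same role, and these cannot both hold for any vertex type. For instance, if $v=w$ then $(w,x)\in E$ rules out $w$ being isolated or in-dominated, while $(w,z)\notin E$ rules out $w$ being out-dominating or bi-dominating, exhausting all four types; the choices $v=x,y,z$ are handled by the same bookkeeping, using that $x,z$ are heads and $y$ is a tail. In every case we reach a contradiction, so $G$ has no $2$-switch.

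The argument is essentially a $4\times 4$ table (which of $w,x,y,z$ is last, against which of the four vertex types it realizes), so no single step is deep; the only genuine care needed is to keep the arc \emph{directions} straight across the operations and to confirm distinctness of the four vertices, which is what makes ``last in the construction order'' well defined and lets the uniformity fact apply to the remaining three vertices.
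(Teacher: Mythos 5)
Your argument is correct. The paper itself offers no proof of this observation: it is stated bare, preceded only by the remark that $\co D_9$, $\co D_{10}$, $\co D_{11}$ and $\overrightarrow{P_4}$ are examples of $2$-switches, which hints that the authors had the forbidden-subdigraph characterization of Theorem~\ref{t-dtp} in mind. That hint alone is not a proof, since a $2$-switch prescribes only four of the twelve possible arcs among $w,x,y,z$ and is therefore a partial pattern rather than an induced subdigraph; completing that route would require checking that every one of the $2^8$ completions contains a forbidden induced subdigraph. Your route via the creation sequence avoids this entirely: the definition of $\DT$ does build $G$ one vertex at a time, each new vertex attached uniformly (isolated, out-dominating, in-dominated, or bi-dominating) to all earlier vertices, and no operation alters arcs among previously added vertices, so picking the last-added vertex of $\{w,x,y,z\}$ and observing that it is simultaneously the tail (resp.\ head) of a present arc and of an absent arc toward two distinct earlier vertices kills all four vertex types. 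The distinctness check is essentially supplied by the definition (the figure caption stipulates that the four vertices are distinct), but your verification does no harm. In short, you have supplied a clean, self-contained proof of a claim the paper leaves unjustified.
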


\begin{definition}[Threshold digraphs  \cite{CDMS14}]
A digraph $G$ is a {\em threshold digraph} if it does not contain
a $2$-switch  nor a $D_5$ as induced subdigraph.

The class of threshold digraphs is denoted by $\TD$.
\end{definition}

This class is not very useful for our  directed co-graph hierarchy, as they
are incomparable to most of the graph classes in there, though it is
a superclass of directed threshold graphs (see Section \ref{overview}).

\begin{figure}[tp]
\centering
\parbox[b]{.44\textwidth}{
\caption{A 2-switch. All vertices are distinct. Solid
arcs must appear in the digraph and dashed arcs must not appear in the
digraph. If an arc is not given, then it may or may not be present.
}
\medskip
\centerline{\includegraphics[width=0.15\textwidth]{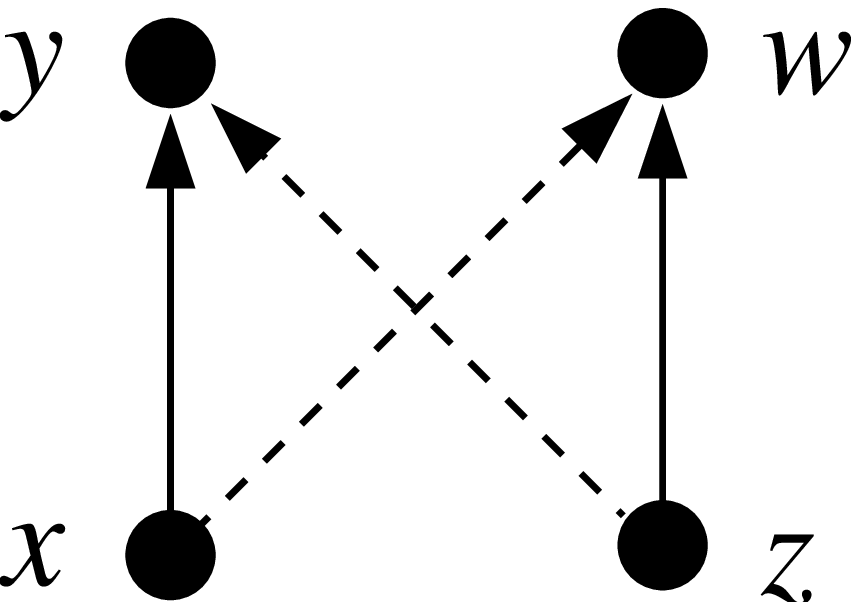}}

\label{F-ex-d}
}
\hfill
\parbox[b]{.44\textwidth}{
\caption{An alternating 4-anticircuit.
The vertices are not necessarily distinct but $x\neq z$ and
$y\neq w$. The solid arrows represent the presence of an arc
and a dashed arrow its absence.
}
\medskip
\centerline{\includegraphics[width=0.15\textwidth]{ferr.eps}}

\label{ferr}
}
\end{figure}

A well studied class of digraphs are Ferres digraphs, see  \cite[Chapter 2]{MP95a} for a
survey.
Ferres digraphs are introduced by Riguet in \cite{Rig51}.
In their first definition, Ferres digraphs where defined
on directed graphs including loops. As our subclasses of
directed co-graphs do not use loops, only Ferres digraphs without loops
will be used here.
An {\em alternating 4-anticircuit} consists of vertices $x,y,z,w$, not necessarily distinct but $x\neq z$ and
$y\neq w$, satisfying
$(x,y),(z,w)\in A$ and $(x,w),(z,y)\not\in A$ (cf. Figure \ref{ferr}).

\begin{definition}[Ferres digraphs]
A digraph is a {\em Ferres digraph} if it does not contain
an alternating 4-anticircuit.

The class of Ferres digraphs is denoted by $\FD$.
\end{definition}

By considering all possible equalities of vertices in an alternating 4-anticircuit
an equivalent characterization is obtained by  
$G\in \free(\{D_1,\overleftrightarrow{K_2}\})$ and $G$ does not contain a $2$-switch,
see  \cite[Figure 2.2]{MP95a} and our restriction to digraphs without loops.
This class is comparable to oriented threshold graphs, but not to any other graph class in
the directed co-graph hierarchy, as we will see in  Section \ref{rel}.

\subsection{Overview} \label{overview}

In Table \ref{t-dc} we summarize directed co-graphs and their subclasses.

Since directed co-graphs and all defined subclasses are
hereditary, by Theorem \ref{th-free-forb} there exist sets of minimal forbidden induced subdigraphs.
The given theorems even show this finite sets of minimal forbidden induced subdigraphs for the different classes.
These characterizations lead polynomial time recognition algorithms for
the corresponding graph classes.

\begin{table}[h!]
\caption{Overview on subclasses of directed co-graphs. By $G_1$ and $G_2$ we denote graphs
of the class $X$, by $I$ we denote an edgeless graph, by $K$ we denote a  bidirectional complete digraph, and
by $T$ we denote a transitive tournament.
Classes and complement classes are listed
between two horizontal lines. That is, only directed co-graphs and directed threshold graphs  are
closed under edge complementation.}
\label{t-dc}
{\footnotesize
\begin{center}
\begin{tabular}{|l|c|ccccc|l|}
\hline
class $X$ & notation &\multicolumn{5}{c|}{operations}   & $\forb(X)$  \\
\hline
\hline
directed co-graphs              & $\DC$       & $\bullet$   & $G_1\oplus G_2$&   $G_1\oslash G_2$  &&$G_1\otimes G_2$ &  $D_1, \dots, D_8$    \\
\hline
oriented co-graphs              & $\OC$        &  $\bullet$   &$G_1\oplus G_2$ &   $G_1\oslash G_2$ & & &  $D_1, D_5, D_8, \overleftrightarrow{K_2}$    \\
\hline
directed trivially perfect      & $\DTP$        & $\bullet$ & $G_1\oplus G_2$ & $G_1 \oslash \bullet$ & $\bullet \oslash G_2$ & $G_1 \otimes \bullet$ & $D_1, \dots ,D_{15}$ \\
\hline
oriented trivially perfect      & $\OTP$       & $\bullet$ & $G_1\oplus G_2$ & $G_1 \oslash \bullet$ & $\bullet \oslash G_2$ & & $D_1,D_5,D_8,\overleftrightarrow{K_2},D_{12}$ \\
\hline
directed co-trivially perfect   &    $\DCTP$     &$\bullet$ &  $G_1 \oplus \bullet$ & $G_1 \oslash \bullet$ & $\bullet \oslash G_2$ & $G_1\otimes G_2$ & $D_1, \dots ,D_{8}$,  $D_{12}, \dots ,D_{15}$\\
                                &                      &           &                   &                    &&  &             $\co D_{11},\co D_{10},\co D_9$\\
\hline
oriented co-trivially   &   $\OCTP$     &$\bullet$   & $G_1 \oplus \bullet$ & $G_1 \oslash \bullet$ & $\bullet \oslash G_2$ &   & $D_1, D_5, D_8, D_{12},$\\
perfect $*$ &&&&&&& $\co D_{11}, \overleftrightarrow{K_2}$ \\
\hline
directed weakly & $\DWQT$      &$I$   & $G_1\oplus G_2$ & $G_1 \oslash I$ & $I \oslash G_2$ & $G_1 \otimes I$& $D_1,\ldots,D_8, Q_1,\ldots,Q_7$ \\
quasi threshold &&&&&&& \\
\hline
oriented weakly & $\OWQT$      &$I$   &   $G_1\oplus G_2$ & $G_1 \oslash I$ & $I \oslash G_2$ & & $D_1,D_5,D_8, \overleftrightarrow{K_2},Q_7$\\
quasi threshold &&&&&&& \\

\hline
directed co-weakly & $\DCWQT$      &$K$   & $G_1 \oplus K$&  $G_1 \oslash K$ & $K \oslash G_2$ &  $G_1\otimes G_2$ & $D_1,\ldots,D_8,\co Q_1,\ldots,\co Q_7$\\
quasi threshold &&&&&&& \\
\hline
oriented co-weakly & $\OCWQT$      &$T$   &  $G_1 \oplus T$ & $G_1 \oslash T$ & $T \oslash G_2$ & & $D_1,D_5,D_8, \overleftrightarrow{K_2},$\\
quasi threshold &&&&&&& $D_{12}, D_{21},D_{22},D_{23} $ \\
\hline

directed simple co-graphs       & $\DSC$      &$\bullet$   & $G_1\oplus I$ & $G_1 \oslash I$ & $I \oslash G_2$ & $G_1 \otimes I$& $D_1,\ldots,D_8,Q_1,\ldots,Q_7,$\\
&&&&&&& $ \co D_9, \co D_{10}, \co D_{11}$ \\
\hline
oriented simple co-graphs       & $\OSC$       &$\bullet$   &  $G_1\oplus I$ & $G_1 \oslash I$ & $I \oslash G_2$ & & $D_1,D_5,D_8, \overleftrightarrow{K_2},Q_7, \co D_{11}$ \\
\hline

directed co-simple co-graphs       &  $\DCSC$       &$\bullet$   & $G_1\oplus K$ & $G_1 \oslash K$ & $K \oslash G_2$ & $G_1 \otimes K$& $D_1,\ldots,D_8,Q_1,$\\
&&&&&&& $\co Q_1,\ldots,\co Q_7, D_{9}, D_{10}$\\
\hline
oriented co-simple co-graphs       & $\OCSC$      &$\bullet$   &  $G_1\oplus T$ & $G_1 \oslash T$ & $T \oslash G_2$ & & $D_1,D_5,D_8, \overleftrightarrow{K_2},$\\
&&&&&&& $D_{12}, D_{21}, D_{22}, D_{23}$ \\
\hline

directed threshold graphs       & $\DT$        & $\bullet$ & $G_1 \oplus \bullet$ & $G_1 \oslash \bullet$ & $\bullet \oslash G_2$ & $G_1 \otimes \bullet$ & $D_1, \dots ,D_{15}$ \\
                                & & & & & & & $\co D_{11},\co D_{10},\co D_9$ \\
\hline
oriented threshold graphs  $*$     & $\OT$   & $\bullet$ & $G_1 \oplus \bullet$ & $G_1 \oslash \bullet$ & $\bullet \oslash G_2$ & & $D_1,D_5,D_8,\overleftrightarrow{K_2},D_{12},\co D_{11}$ \\
\hline
threshold digraphs              & $\TD$ & &&&&&$D_5$, 2-switch\\
\hline
Ferres digraphs                 & $\FD$ & &&&&& $D_1,\overleftrightarrow{K_2}$,  2-switch\\
\hline
\hline
edgeless digraphs                 &               & $\bullet$     &  $G_1\oplus \bullet$  &&&& $\overrightarrow{P_2}$, $\overleftrightarrow{K_2}$  \\
\hline
bidirectional      &               & $\bullet$     &   &&& $G_1\otimes \bullet$ & $\overrightarrow{P_2}$, $\co \overleftrightarrow{K_2}$  \\
complete digraphs  &&&&&&& \\
\hline
\hline
transitive tournaments        & $\TT$ &$\bullet$  &  &   $G_1 \oslash \bullet$& && $2\overleftrightarrow{K_1},\overleftrightarrow{K_2},D_5$\\
\hline
\hline
bidirectional complete     &  &  &  & &  &$I\otimes I$& co-$\overleftrightarrow{P_3},\overrightarrow{P_2},\overleftrightarrow{K_3}$\\
bipartite digraphs   &&&&&&& \\
\hline
disjoint union of  &   &  & $K\oplus K$ & &  &&  $\overleftrightarrow{P_3},\overrightarrow{P_2},\overleftrightarrow{I_3}$\\
2 bidirectional complete  &&&&&&& \\
\hline
\hline
series composition of &  & $I$ &  & &  &$G_1\otimes I$&  co-$\overleftrightarrow{P_3},\overrightarrow{P_2}$\\
stable sets  &&&&&&& \\
\hline
disjoint union of    &   & $K$ & $G_1\oplus K$ & &  &&  $\overleftrightarrow{P_3},\overrightarrow{P_2}$\\
bidirectional complete  &&&&&&& \\
\hline
\end{tabular}
\end{center}
}
\end{table}

\subsection{Relations} \label{rel}

Our forbidden subdigraphs in combination with Theorem \ref{th-free-sub}
allows us to compare the above graph classes to each other and show
the hierarchy of the subclasses of directed co-graphs, see Figure \ref{fig:overview}.

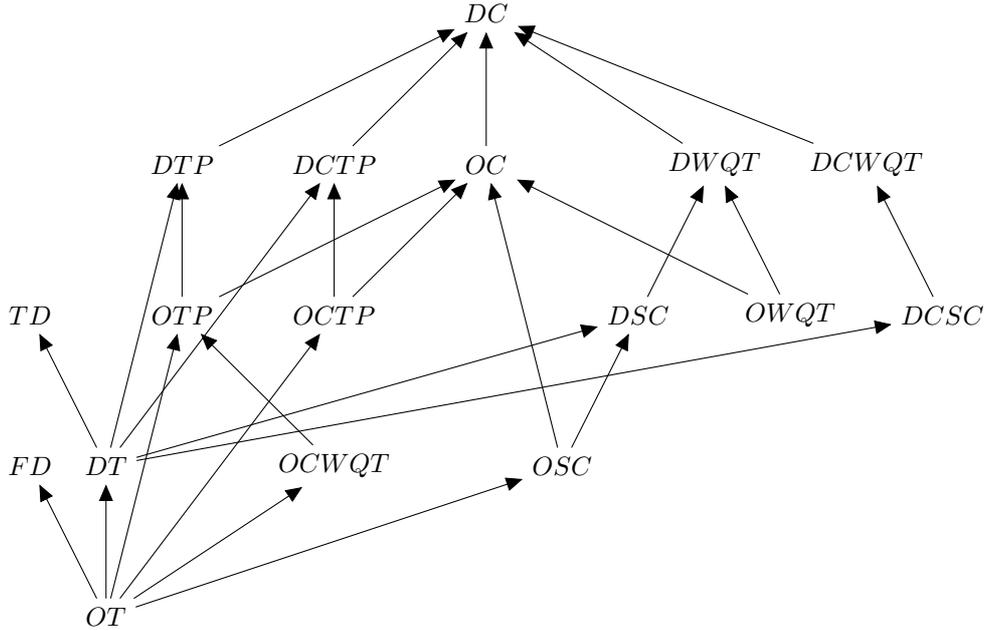
\begin{figure}[h]
	\begin{center}
	\caption{Relations between the subclasses of directed co-graphs. 
	If there is a path from $A$ to $B$, then it holds that $A\subset B$. 
	The classes, that are not connected by a directed path are incomparable.}
	 \medskip
		\begin{tikzpicture}

  \draw (1,0) node (OT){$OT$};
  \draw (4,2) node (OCWQT){$OCWQT$};
  \draw (7,2) node (OSC){$OSC$};
  \draw (1,2) node (DT){$DT$};
  \draw (0,2) node (FD){$FD$};

  \draw (0,4) node (TD){$TD$};
  \draw (2,4) node (OTP){$OTP$};
  \draw (4,4) node (OCTP){$OCTP$};
  \draw (8,4) node (DSC){$DSC$};
  \draw (10,4) node (OWQT){$OWQT$};
  \draw (12,4) node (DCSC){$DCSC$};

  \draw (2,6) node (DTP){$DTP$};
  \draw (4,6) node (DCTP){$DCTP$};
  \draw (6,6) node (OC){$OC$};
  \draw (9,6) node (DWQT){$DWQT$};
  \draw (11,6) node (DCWQT){$DCWQT$};

  \draw (6,8) node (DC){$DC$};

  \draw [-\ahead] (OT) edge (DT);
  \draw [-\ahead] (OCWQT) edge (OTP);
  \draw [-\ahead] (OT) edge (OTP);
  \draw [-\ahead] (OSC) edge (DSC);
  \draw [-\ahead] (OSC) edge (OC);
  \draw [-\ahead] (DT) edge (DTP);
  \draw [-\ahead] (OTP) edge (DTP);
  \draw [-\ahead] (OCTP) edge (DCTP);
  \draw [-\ahead] (OTP) edge (OC);
  \draw [-\ahead] (OCTP) edge (OC);
  \draw [-\ahead] (DSC) edge (DWQT);
  \draw [-\ahead] (OWQT) edge (DWQT);
  \draw [-\ahead] (OWQT) edge (OC);
  \draw [-\ahead] (DCSC) edge (DCWQT);
  \draw [-\ahead] (DTP) edge (DC);
  \draw [-\ahead] (DCTP) edge (DC);
  \draw [-\ahead] (OC) edge (DC);
  \draw [-\ahead] (DWQT) edge (DC);
  \draw [-\ahead] (DCWQT) edge (DC);

  \draw [-\ahead] (DT) edge (DSC);
  \draw [-\ahead] (DT) edge (DCSC);
  \draw [-\ahead] (OT) edge (OSC);
  \draw [-\ahead] (OT) edge (OCWQT);

  \draw [-\ahead] (DT) edge (DCTP);
  \draw [-\ahead] (OT) edge (OCTP);

  \draw [-\ahead] (OT) edge (FD);
  \draw [-\ahead] (DT) edge (TD);

\end{tikzpicture}
		
		\label{fig:overview}
	\end{center}
\end{figure}

\section{Conclusions and Outlook}\label{sec-con}

We introduced several new digraph classes. All these classes
are subsets of directed co-graphs which have been defined by  Bechet et al.\ in \cite{BGR97}
and supersets of oriented threshold graphs defined by Boeckner in \cite{Boe18}.
Further, we consider the ideas of Cloteaux
et al.\ \cite{CDMS14} and Ferres digraphs \cite{Rig51}.

The given characterizations by forbidden induced subdigraphs
lead to polynomial time  recognition algorithms for
the corresponding graph classes. In Section \ref{sec-dtg} and Section  \ref{sec-or-tresh}
we suggested linear time methods for the recognition
of directed and oriented threshold graphs.
For the other classes it remains to find more efficient algorithms
for this purpose.

For directed co-graphs we have shown in \cite{GR18c}
that the directed path-width equals to the directed tree-width and how
to compute this value in linear time. Moreover,
the digraph width parameters directed feedback
vertex set number, cycle rank, DAG-depth, and DAG-width
can be computed in linear time for  directed co-graphs \cite{GKR19f}.
It remains to verify the relation of these parameters
restricted to threshold digraphs and Ferres digraphs.

In \cite{JRST01} the directed union was introduced as a generalization
of the disjoint union and the order composition. In \cite{GR18b} we
considered digraphs which can be defined by
disjoint union, order composition, directed union, and series composition
of two directed graphs. The set of all these digraphs is denoted
by {\em extended directed co-graphs}, since it generalizes
directed co-graphs. In \cite{GR18b} we showed that the
result of \cite{GR18c}  even holds for extended directed co-graphs.
For the class of extended directed co-graphs it remains to show how to compute an
ex-di-co-tree and to find forbidden subdigraph characterizations. Furthermore,
it also has a number of interesting subclasses beside those given in this
paper.

\section{Acknowledgements} \label{sec-a}
This work was funded in part by the Deutsche Forschungsgemeinschaft
(DFG, German Research Foundation) -- 388221852



\newcommand{\etalchar}[1]{$^{#1}$}

\end{document}